\newtheorem{thm}{Theorem}[section]
\newtheorem{cor}[thm]{Corollary}
\newtheorem{lem}[thm]{Lemma}
\newtheorem{prop}[thm]{Proposition}
\theoremstyle{definition}
\newtheorem{defn}[thm]{Definition}
\newtheorem{ass}{Assumption}
\theoremstyle{remark}
\newtheorem{rem}[thm]{Remark}
\newtheorem{exa}[thm]{Example}
\numberwithin{equation}{section}
\numberwithin{figure}{section}
\newcommand{\set}[1]{\left\{#1\right\}}
\newcommand{\Real}{\mathbb R}
\newcommand{\Complex}{\mathbb C}
\newcommand{\F}{\mathcal{F}}
\newcommand{\prob}{\mathbb{P}}
\newcommand{\V}{\mathcal{V}}
\newcommand{\D}{\mathcal{D}}
\newcommand{\C}{\mathcal{C}}
\newcommand{\expec}{\mathbb{E}}
\newcommand{\I}{\mathcal{I}}
\newcommand{\fE}{\mathfrak{E}}
\newcommand{\fD}{\mathfrak{D}}
\newcommand{\basis}{(\Omega,  \, (\F_t)_{t \in \Real_+}, \, \prob)}
\newcommand{\indic}{\mathbb{I}}
\newcommand{\pare}[1]{\left(#1\right)}
\newcommand{\bra}[1]{\left[#1\right]}
\newcommand{\such}{\, | \, }
\title[On exponential moments of affine diffusions]{Long-Term and Blow-Up Behaviors of exponential moments in multi-dimensional Affine Diffusions}
\author[]{Rudra P. Jena}
\address[Rudra P. Jena]{Centre de Math\'ematiques Appliqu\'ees, Ecole Polytechnique,
91128 Palaiseau Cedex France}
\email{jena@cmap.polytechnique.fr}
\author[]{Kyoung-Kuk Kim}
\address[Kyoung-Kuk Kim]{Department of Industrial and Systems Engineering, Korea Advanced Institute of Science and Technology, Daejeon 305-701, South Korea}
\email{catenoid@kaist.ac.kr}
\author[]{Hao Xing}
\address[Hao Xing]{Department of Statistics, London School of Economics and Political Science, London WC2A 2AE, UK}
\email{h.xing@lse.ac.uk}
\thanks{The research of R. Jena was supported by the Chair Financial Risks of the Risk Foundation sponsored by Soci\'{e}t\'{e} G\'{e}n\'{e}rale, the Chair Derivatives of the Future sponsored by the F\'{e}d\'{e}ration Bancaire Francaise, and the Chair Finance and Sustainable Development sponsored by EDF and Calyon. The research of K. Kim was supported by Basic Science Research Program through the National Research Foundation of Korea (NRF) funded by the Ministry of Education, Science and Technology (2012-0003203). The research of H. Xing was supported by STICERD at London School of Economics.
}
\thanks{This work was initiated when three authors were visiting the Fields Institute for the thematic program on quantitative finance in 2010. The authors are grateful for the hospitality and support from the institute. We thank the two anonymous referees and the Associate Editor for their valuable comments, which helped us improve this paper.}
\date{May 15, 2012}
\begin{document}

\maketitle

\begin{abstract}
This paper considers multi-dimensional affine processes with continuous sample paths. By analyzing the Riccati system, which is associated with affine processes via the transform formula, we fully characterize the regions of exponents in which exponential moments of a given process do not explode at any time or explode at a given time. In these two cases, we also compute the long-term growth rate and the explosion rate for  exponential moments. These results provide a handle to study implied volatility asymptotics in models where log-returns of stock prices are described by affine processes whose exponential moments do not have an explicit formula.
\end{abstract}

\baselineskip18pt

\section{Introduction}

Since the introduction of the Black-Scholes model, many models have been developed to capture empirical features of financial asset prices. Among them, models proposed by \cite{CIR}, \cite{Heston}, \cite{Vasicek},  and many others have been widely used by market participants because of their analytical tractability in derivative pricing in addition to their ability to reflect observed market phenomena. Later, common features of these models were unified to introduce the notion of affine processes with the so called canonical state space. The general treatment of affine processes with this state space was conducted by \cite{DPS} and later extended by \cite{DFS}. Quite recently, studies on affine processes have been extended to more general state spaces; see, e.g., \cite{CuchieroF} and references therein.

A defining feature of affine processes is the logarithm of their Fourier transform is a linear function of the state. The regularity of affine processes, proved in \cite{Keller-Resselb} for affine processes on the canonical state space, links the aforementioned linear function to solutions to a system of (generalized) Riccati differential equations. This connection contributes to the analytical tractability of affine processes and enables us to express the values of derivative contracts, whose underlying is modeled by affine processes, via the Fourier inversion formula (\cite{LeeRW} and references therein). Moreover, this connection bridges affine processes and the theory of dynamical systems. Distributional properties of affine processes can be characterized by dynamical behaviors of solutions to the associated Riccati system. We refer the reader to \cite{Filipovic-Mayerhofer}, \cite{GK10}, \cite{Keller-Ressel}, and \cite{CuchieroF} for recent developments in this direction.

In this paper, we investigate long-term and blow-up behaviors of exponential moments of affine processes with the canonical state space $\Real_+^m \times \Real^n$. We treat general multivariate affine processes with continuous sample paths, so called \emph{affine diffusions}. This restriction of  affine processes to diffusions is imposed because its transform formula has been well understood in \cite{Filipovic-Mayerhofer}.
Currently the transform formula for affine processes with jumps is being studied; see \cite{SpreijV}. The generalization of our results to affine processes with jumps is left as future studies.
By focusing on affine diffusions, we are able to find sharp answers to the following two questions:
\begin{enumerate}
\item[\textit{Q1:}] Given an affine diffusion $X$, what are all possible vectors $u$ such that $\expec \exp(u^\top X_T)<\infty$ for any $T\geq 0$? For such a vector $u$, does the long-term growth rate $\lim_{T\rightarrow \infty}T^{-1} \log\{\expec \exp(u^\top X_T)\}$ exist?

\vspace{0.2cm}

\item[\textit{Q2:}] For a fixed $T>0$, what are all possible vectors $u$ such that $\expec \exp(u^\top X_S)<\infty$ for any $S<T$? For a vector $u$ such that $\expec \exp(u^\top X_S)$ is finite for all $S<T$ but infinity for $S=T$, does the blow-up rate $\lim_{S\uparrow T}(T-S) \log \{\expec \exp(u^\top X_S)\}$ exist?
\end{enumerate}
These questions are motivated by practical applications explained in the next paragraph, but they are also mathematically interesting. 
By focusing on a class of affine diffusions with some hierarchical structure between components (see Assumption \ref{ass: B^v upper-tri}),
we provide complete answers to \textit{Q1} and \textit{Q2}. (It should be noted that this class contains virtually all affine diffusions with the canonical state space in financial modeling.) The set of $u$ such that $\expec \exp(u^\top X_T)<\infty$ for any $T>0$ is characterized via the disjoint union of stable sets for equilibrium points of the Riccati system in Theorem \ref{thm: S-inf}. Moreover, the growth rate of exponential moments is identified in Corollary \ref{cor: exp moment X}. Working with a transformed Riccati system, similar answers to $\textit{Q2}$ are provided in Theorem \ref{thm:St-bdry} and Corollary \ref{cor2: exp moment X}. These results are extensions of \cite{GK10} and \cite{Keller-Ressel} to affine diffusions with arbitrary dimension. These findings not only help numerically identify sets of vectors in \textit{Q1} and \textit{Q2}, they also characterize large-time asymptotics and explosion phenomenon of exponential moments of multi-dimensional affine diffusions.

For the past several years, large-time asymptotics and explosion phenomena of stock price moments have attracted considerable attention because of their close connection to implied volatility asymptotics. By approximating long-term stock price moments, \cite{Lewis} derived an asymptotic formula for the implied volatility at large maturities in the fixed-strike regime under the Heston model. Recently, \cite{FordeJ2} obtained similar implied volatility asymptotics for the Heston model in a regime where the log-moneyness is proportional to the maturity. The first step in their analysis is to study the long-term behaviors of stock price moments (see Theorem 2.1 in \cite{FordeJ2}). On the other hand, it is well known that the explosion of certain moments of stock prices at fixed time $T$ is related to the implied volatilities at extreme strikes with option maturity $T$; see \cite{RLee} and \cite{BenaimF08} for extensions. For example, an upper bound on the asymptotic slope of implied volatilities of deep-out-of-money options is found to be a function of the critical exponent $p^*=\sup\{p \such \expec S^{p+1}_T <\infty\}$.
Such asymptotic values of implied volatilities are informational in extrapolating smile curves and in calibrating underlying models to market prices. More details about this practical usage can be found in, e.g., \cite{BenaimF08} and \cite{FordeJ2}. When the stock price log-return is modeled by an affine diffusion, results in this paper help to identify implied volatility asymptotics for large-time-to-maturity, deep-out-of-money or deep-in-the-money options; see Section \ref{subsec: applications} and three examples in Section \ref{sec:examples}.

The paper is structured as follows. In Section~\ref{sec: affine}, we review basic concepts of affine diffusions and their canonical representations. We present our main results in Section~\ref{sec:main}. Then, three multi-dimensional examples are presented to illustrate our findings in Section~\ref{sec:examples}. Analysis on the Riccati system and proofs of main results are developed in Sections~\ref{sec:moment} and \ref{sec:impvol}. Finally, Section~\ref{sec:conclusion} concludes.

Before we move on, let us introduce some notational conventions which will be used throughout the paper.
\begin{itemize}
\item For a vector $x$ in a Euclidean space, $|x|$ means its Euclidean norm regardless of dimension.
\item If $x, y$ are of the same dimension then $x \leq y$ if and only if $x_i \leq y_i$ for each component. And $x \cdot y$ represents the Euclidean inner product between $x$ and $y$.
\item For a vector in $\Real^{m+n}$ or a matrix in $\Real^{(m+n)\times (m+n)}$, we denote the first $m$ entries of the vector or $m\times m$ entries of the matrix by the superscript $\V$, and the last $n$ entries of the vector or $n\times n$ entries of the matrix by the superscript $\D$.
\item By $x^{(2)}_{\I}$, where $x \in \mathbb{R}^m$ and $\I \subset \{1, \ldots, m\}$, we mean a vector of which $i$-th entry is $x_i^2\indic_{i \in \I}$.
\item For matrices, $diag(x)$ for $x \in \mathbb{R}^m$ is the $m\times m$ diagonal matrix with $(x_1, \ldots, x_m)$ as its diagonal entries, and $diag_\I(x)$ with $\I \subseteq \set{1, \ldots, m}$ is the $m\times m$ diagonal matrix such that its $i$-th diagonal entry is $x_i \indic_{i \in \I}$. $I_k$ is the $k\times k$ identity matrix.
\item For a set $A$ in Euclidean space, $A^\circ$ is its interior and $A^c$ is its complement.
\end{itemize}

\section{Affine Diffusions on Canonical State Space}\label{sec: affine}
Let us recall affine diffusions and their canonical representation in this section. Given $b: \Real_+^m \times \Real^n \to \Real^{d}$ and $\sigma: \Real_+^m \times \Real^n \to \Real^{d\times d}$ for some nonnegative integers $m, n$ and $d=m+n$, we consider the following stochastic differential equation (SDE) on a probability space $\basis$:
$$
 dY_t = b(Y_t) \, dt+ \sigma(Y_t) \, dW_t, \quad Y_0=y,
$$
where $W$ is a $d$-dimensional standard Brownian motion and $y\in \Real_+^m \times \Real^n$. The above SDE admits a unique solution when $b$ and $\sigma$ are of affine type and satisfy \emph{admissible constraints} introduced below (see Theorem 8.1 in \cite{Filipovic-Mayerhofer}). The state space $\Real_+^m \times \Real^n$ is called the \emph{canonical state space}. In financial applications, the first $m$ components of $Y$, which usually model volatility processes, are called \emph{volatility state variables}; while the other $n$ components of $Y$ are called \emph{dependent state variables}. In this case, $Y^\V$ models the volatility variables and $Y^\D$ describes the dependent variables.

We say that $Y$ is an \emph{affine process} if there exist $\Complex$- and $\Complex^d$-valued functions $\phi$ and $\psi$ such that
\begin{equation}\label{eq: def affine}
 \mathbb{E}\left[ \exp\left(u^{\top} Y_T\right) | \mathcal{F}_t\right] = \exp\left( \phi(T-t, u) + \psi(T-t, u)^{\top} Y_t\right),
\end{equation}
for all $u\in i \Real^d$, $t\leq T$, and $y\in \Real_+^m \times \Real^n$. This specification implies that the diffusion matrix $a(y):= \sigma(y) \sigma(y)^\top$ and the drift $b(y)$ are both affine functions (see Theorem~2.2 in \cite{Filipovic-Mayerhofer}), i.e.,
$$
 a(y) = a + \sum_{i=1}^{d} y_i \alpha_i, \quad b(y) = b + \sum_{i=1}^{d} y_i \beta_i =: b + B y,
$$
for some $a, \alpha_i \in \Real^{d\times d}$ and column vectors $b, \beta_i\in \Real^d$ with $B:= (\beta_1 \cdots \beta_{d})\in \mathbb{R}^{d\times d}$. Moreover, regularity of affine processes proved by \cite{Keller-Resselb} ensures that $\phi$ and $\psi = (\psi_1, \cdots, \psi_d)$ satisfy the following system of Riccati differential equations:
\begin{eqnarray*}
\partial_t \phi(t, u) &=& \frac{1}{2} \psi(t, u)^\top a \,\psi(t,u) + b^\top \psi(t, u), \quad \phi(0, u) = 0, \\
\partial_t \psi_i(t, u) &=& \frac{1}{2} \psi(t, u)^\top \alpha_i\, \psi(t,u) + \beta_i^\top \psi(t, u),  \quad \psi(0, u) = u, \quad \text{ for } 1\leq i\leq d.
\end{eqnarray*}

To ensure that $Y$ is an affine process on the state space $\Real_+^m \times \Real^n$, we impose the following \emph{admissible constraints} on parameters $a, \alpha_i, b$, and $\beta_i$  (see Theorem~3.2 in \cite{Filipovic-Mayerhofer}):
\begin{enumerate}
\item[(i)] $a$, $\alpha_i$ are symmetric positive semi-definite, and $\alpha_{m+1} = \cdots = \alpha_{m+n} = 0$,
\item[(ii)] $a = \left( \begin{array}{cc}
                0 & 0 \\
                0 & a^\D \end{array}\right)$, $\alpha_i = \left( \begin{array}{cc}
                                                                    c_i\delta_{ii} & w_i \\
                                                                    w_i^\top & \alpha_i^\D \end{array}\right)$,
                where $c_i\in \Real$, $\delta_{ii}\in \Real^{m\times m}$ is the zero matrix except 1 for the $(i,i)$-th entry, and $w_i \in \mathbb{R}^{m\times n}$ has zero entries except the $i$-th row,
\item[(iii)] $b \in \Real_+^m \times \Real^n$, $B = \left( \begin{array}{cc}
                        B^\V & 0 \\
                        * & B^\D \end{array}\right)$, and $B^\V$ has nonnegative off-diagonal elements.
\end{enumerate}
Under these constraints, the transformation formula \eqref{eq: def affine} is extendable to real dimensions.
\begin{thm}[\cite{Filipovic-Mayerhofer}]\label{thm:FM} Suppose that $Y$ is an affine process with admissible parameters. Then, the transform formula \eqref{eq: def affine} holds true for $u \in \mathbb{R}^d$ as long as either side of the formula is finite. 
\end{thm}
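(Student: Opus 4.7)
The plan is to extend the transform formula from purely imaginary $u$ to real $u$ via a martingale/local-martingale argument based on the Riccati system, together with a contradiction argument using the maximal existence time of the Riccati flow. Concretely, for $u\in\Real^d$ let $(\phi(\cdot,u),\psi(\cdot,u))$ denote the maximal solution of the Riccati ODEs stated in Section~\ref{sec: affine} with initial data $\phi(0,u)=0$, $\psi(0,u)=u$, and let $T^\ast(u)\in(0,\infty]$ be its explosion time. Define
\[
 M_t := \exp\bigl(\phi(T-t,u)+\psi(T-t,u)^{\top}Y_t\bigr),\qquad t\in[0,T\wedge T^\ast(u)).
\]
My first step is to apply It\^o's formula to $M$ using the affine form of $b(y)$ and $a(y)=\sigma(y)\sigma(y)^{\top}$. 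The finite-variation part of $dM_t/M_t$ equals
\[
 -\partial_t\phi-\partial_t\psi^{\top}Y_t+\psi^{\top}b(Y_t)+\tfrac12\psi^{\top}a(Y_t)\psi,
\]
which vanishes identically by the Riccati system. Hence $M$ is a nonnegative local martingale on $[0,T\wedge T^\ast(u))$, and Fatou yields the supermartingale inequality $\expec[M_{s}\mid \F_t]\le M_t$ for $t\le s$ inside the interval of existence.

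For the first implication (RHS finite implies equality), suppose $T^\ast(u)>T$, so that $M$ is a bona fide nonnegative supermartingale on $[0,T]$. This already gives $\expec[\exp(u^{\top}Y_T)\mid\F_t]\le\exp(\phi(T-t,u)+\psi(T-t,u)^{\top}Y_t)$. To obtain the reverse inequality I would use an analytic-continuation/approximation argument: embed $u$ into the open complex strip
$\mathcal{U}:=\{v\in\Complex^d:(\phi(\cdot,v),\psi(\cdot,v))\text{ exists on }[0,T]\}$, which is open and contains $i\Real^d$. Along the continuous path $v=is\to u$ with $s\in\Real^d$, the left-hand side $\expec[\exp(v^{\top}Y_T)\mid\F_t]$ is analytic in $v$ on a neighborhood of the path (by dominated convergence using $|\exp(v^{\top}Y_T)|\le\exp(\text{Re}(v)^{\top}Y_T)$), while the right-hand side is analytic in $v\in\mathcal{U}$ by continuous dependence of Riccati flows on initial data. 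Since both sides agree on $i\Real^d$ and are analytic along the connecting path in $\mathcal{U}$, they coincide at $v=u$. A cleaner alternative, avoiding analyticity bookkeeping, is a localization argument: stop at $\tau_n=\inf\{t:|Y_t|>n\}$, note that $M^{\tau_n}$ is a bounded martingale, take $n\to\infty$ and use that $M_T$ is dominated by an integrable random variable derived from shifting $u$ slightly inside $\mathcal{U}$ and applying the already-established supermartingale bound to the shifted exponent.

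For the converse (LHS finite implies equality), assume $\expec[\exp(u^{\top}Y_T)]<\infty$. I would argue by contradiction, supposing $T^\ast(u)\le T$. By classical ODE theory $|\psi(t,u)|\to\infty$ as $t\uparrow T^\ast(u)$; combined with admissibility (positivity of $\alpha_i$ on the volatility block and the upper-triangular structure of $B$), one shows that $\phi(t,u)+\psi(t,u)^{\top}y\to+\infty$ along some sequence $t_k\uparrow T^\ast(u)$ for suitable $y$ in the state space. Applying the already-proved first direction on each interval $[0,t_k]$ to the starting condition $Y_{T-t_k}$ (after a time shift), one gets that $\expec[\exp(u^{\top}Y_T)\mid\F_{T-t_k}]$ blows up with positive probability, contradicting the integrability assumption. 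Hence $T^\ast(u)>T$, and we are back in the first case.

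The main obstacle is the second direction: turning blow-up of the Riccati solution into blow-up of the conditional expectation in a way that survives the conditioning on $\F_{T-t_k}$. This requires using the support properties of $Y$ (the affine diffusion can reach a set on which $\psi(t_k,u)^{\top}y$ is large), together with a uniform lower bound coming from the supermartingale property applied on short sub-intervals, and it is here that the admissibility constraints (i)--(iii) and the regularity results of \cite{Keller-Resselb} enter decisively.
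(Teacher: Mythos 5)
This statement is attributed to \cite{Filipovic-Mayerhofer}; the paper itself gives no proof, only the citation, so there is no internal argument to compare you against. Taking your proposal on its own merits: the scaffolding you set up (Itô on $M_t=\exp(\phi(T-t,u)+\psi(T-t,u)^\top Y_t)$, cancellation of the drift via the Riccati system, Fatou to get the supermartingale inequality) is the correct and standard first step, and you are honest that the converse direction ``LHS finite $\Rightarrow$ no Riccati blow-up before $T$'' is the hard part. But both directions, as written, have real gaps beyond routine bookkeeping.

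In the forward direction, neither of your two routes closes cleanly. The analytic-continuation route asserts analyticity of $v\mapsto\expec[\exp(v^\top Y_T)\mid\F_t]$ ``on a neighborhood of the path'' via dominated convergence, but domination requires $\expec[\exp(\operatorname{Re}(v)^\top Y_T)]<\infty$ for the real parts along the path, which is exactly the kind of statement you are trying to establish; you would need to first prove that $\{u\in\Real^d:T^\ast(u)>T\}$ is a convex (star-shaped about $0$) open set and then argue along the segment $[0,1]\ni s\mapsto su$, feeding each step's supermartingale bound into the next. The localization alternative has the same issue in a different guise: the ``integrable dominating random variable derived from shifting $u$ slightly'' does not automatically dominate $M_{\tau_n}$, because $\psi(\cdot,u)$ and $\psi(\cdot,u+\epsilon e)$ are solutions of a nonlinear Riccati flow and do not scale, so $M_t$ need not be pointwise bounded by a constant times $\exp((u+\epsilon e)^\top Y_t)$. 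You need some uniform-integrability criterion that is actually compatible with the Riccati structure, e.g.\ de la Vall\'ee Poussin with a test function chosen from a slightly enlarged exponent for which the supermartingale bound is already in force, and this requires a genuine comparison lemma between the two Riccati flows.

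In the converse direction, the claim that ``$\expec[\exp(u^\top Y_T)\mid\F_{T-t_k}]$ blows up with positive probability'' does not follow merely from $|\psi(t,u)|\to\infty$. You must show (i) that the components of $\psi^\V$ that explode do so to $+\infty$ (the admissibility constraints give a positive quadratic coefficient, so this is true, but it has to be argued, since $|\psi|\to\infty$ alone allows sign oscillation); (ii) that $\phi(t,u)$ does not diverge to $-\infty$ fast enough to cancel $\psi(t,u)^\top y$; and (iii) that the law of $Y^\V_{T-T^\ast(u)}$ charges the region where the exploding linear functional is bounded away from zero, i.e.\ $Y^\V_{T-T^\ast(u)}>0$ with positive probability. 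Point (iii) is delicate for degenerate initial data ($Y^\V_0=0$ on the boundary with zero drift) and should be excluded or handled separately. Without these three items the contradiction does not go through, and they are precisely where the admissibility structure and the canonical state-space geometry must be used. As written, your sketch names the right ingredients but does not yet assemble a proof.
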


In this paper, we focus on the following class of affine diffusions:
\begin{ass}\label{ass: B^v upper-tri}
 $B^{\V}$ is triangular (say, upper triangular) with strictly negative eigenvalues.
\end{ass}
The upper triangular shape of $B^{\V}$ imposes a hierarchical dependence structure between all volatility state variables. This hierarchical structure is commonly assumed in many financial models (see Section \ref{sec:examples} for several examples). In these models, different volatility state variables are usually used to model volatility processes on different time scales. Strictly negative eigenvalues imply that $Y^{\V}$ is mean-reverting, which is a natural property of volatility processes.

To facilitate our analysis on this class of affine diffusions, we consider their canonical representations (see Section 7 in \cite{Filipovic-Mayerhofer}). Given a linear transform $\Lambda: \Real_+^m \times \Real^n \rightarrow \Real_+^m \times \Real^n$, the process $X := \Lambda Y$ has the following dynamics:
$$
dX_t = (\hat b + \hat B X_t)\, dt + \hat \sigma(X_t)\, dW_t, \quad X_0 = \Lambda Y_0,
$$
where $\hat b = \Lambda b$, $\hat B = \Lambda B \Lambda^{-1}$, and $\hat \sigma(x) = \Lambda \sigma(\Lambda^{-1}x)$.
The transformed diffusion matrix is
$$
\hat a(x) = \hat\sigma(x)\hat\sigma(x)^\top =  \Lambda a\Lambda^\top + \sum_{i=1}^{d} (\Lambda^{-1}x)_i \Lambda\alpha_i\Lambda^\top =: \hat a + \sum_1^{m+n} x_i \hat \alpha_i.
$$
Actually, one can find a special $\Lambda\in \Real^{d\times d}$ 
 with diagonal $\Lambda^\V$, such that the diffusion matrix of $X$ has the following canonical form
 $$
 \hat a(x) = \left( \begin{array}{cc}
                    {diag}_{\mathcal{I}}(x) & 0 \\
                    0 & \pi_0 + \sum_1^m x_i \pi_i \end{array} \right),
$$
where $\I$ is a subset of $\set{1, \cdots, m}$,  and $\pi_i$, $0 \leq i \leq m$, are some symmetric positive semi-definite matrices in $\mathbb{R}^{n\times n}$. Moreover, the parameters of $X$ are admissible. (See Lemma~7.1 in \cite{Filipovic-Mayerhofer}.) Note that $\hat{B}^{\V}$ is still upper triangular since $\Lambda^\V$ is diagonal. Moreover, to exclude trivial cases where $Y^\V$ have deterministic dynamics, we assume that
$\I$ is non-empty.

In the canonical version, the Riccati system reads
\begin{eqnarray*}
\partial_t \phi(t, u) &=& \frac{1}{2} {\psi^{\D}(t, u)}^\top \pi_0 \,\psi^{\D}(t,u) + \hat b^\top \psi(t, u), \\
\partial_t \psi_i(t, u) &=& \frac{1}{2} \psi_i(t, u)^2 \,\indic_{i \in \mathcal{I}} + \frac{1}{2}\psi^{\D}(t, u)^\top \pi_i \, \psi^{\D}(t, u) + \hat \beta_i^\top \psi(t, u), \quad 1 \leq i \leq m,\\
\partial_t \psi_i(t, u) &=& \hat \beta_i^\top \psi(t, u), \quad m+1\leq i \leq m+n, 
\end{eqnarray*}
with initial conditions $\phi(0, u)=0$ and $\psi(0,u) = u$. Note that the first equation is easy to solve once we know $\psi$, hence we focus on equations for $\psi$ and write them succinctly as follows:
\begin{equation}\label{eq:riccati}
\begin{split}
 & \dot y = f(y,z), \hspace{2cm} y(0) = v,\\
 & \dot z = A^\D  z, \hspace{2.3cm} z(0) = w.
\end{split}
\end{equation}
Here $u=(v,w)$ with $v \in \Real^m$ and $w \in \Real^n$, $f=(f_1, \cdots, f_m)^\top$ with
\[f_i(y, z):= \frac 12 y_i^2 \, \indic_{i\in \I} + \sum_{k=1}^i A_{ik} y_k + g_i(z), \quad \text{ in which}\]
\[
  g(z) = \pare{\begin{array}{c} g_1(z) \\ \vdots \\ g_m(z) \end{array}} := \frac{1}{2} \left( \begin{array}{c}
                                                        {z}^\top \pi_1 \,z \\
                                                        \vdots \\
                                                        {z}^\top \pi_m \, z \end{array}\right) + A^{\C}\, z
  \quad \text{ and } \quad A = \left( \begin{array}{cc}
            A^{\V} & A^{\C} \\
            0 & A^{\D} \end{array} \right) := \left( \begin{array}{cc}
                                                            \hat B^{\V} & 0 \\
                                                            * & \hat B^{\D} \end{array}\right)^\top.
 \]

Assumption \ref{ass: B^v upper-tri} implies that $A^{\V}$ is a lower triangular matrix. Hence $(y_1, \cdots, y_i)$ in \eqref{eq:riccati} is an autonomous system for each $i\in \set{1, \cdots, m}$ when $w \in \operatorname{Ker} A^\D$. Moreover, $A^{\V}$ has strictly negative eigenvalues with nonnegative off-diagonal elements, whence $- A^{\V}$ is a nonsingular M-matrix (see Definition~\ref{def: M-matrix}). Now the transform formula reads
\begin{equation}\label{eq: tran formula}
 \expec\bra{\exp(u^\top X_T)| \F_t} = \exp\pare{I(T-t) + y(T-t)\cdot X^\V_t + z(T-t) \cdot X^\D_t},
\end{equation}
where $I(\cdot):= (1/2) \int_0^\cdot z(s)^\top \pi_0 z(s) \, ds + \int_0^\cdot \hat{b}^\V \cdot y(s)\, ds+ \int_0^\cdot \hat{b}^\D \cdot z(s)\, ds$ and $(y, z)$ solves \eqref{eq:riccati}.

In financial applications, the discounted stock price, say $S$, is usually modeled by an affine process $X$ via $S_\cdot = \exp(\theta^\top X_\cdot)$ for some $\theta \in \Real^d$. Then, $S$ being a martingale under (a risk neutral measure) $\prob$ implies that $\theta^\D \in \operatorname{Ker} A^\D$ (see \eqref{eq: mart}). Therefore, in this paper, we always choose the initial condition for the second equation in \eqref{eq:riccati} to be $z(0) = w \in \operatorname{Ker} A^{\D}$. Hence, $z(t)= w$ for any $t\geq 0$, and the first equation in \eqref{eq:riccati} reads
\begin{equation}\label{eq:riccati-V}
 \dot y = f(y, w), \quad y(0)= v. \tag{Ric-V}
\end{equation}
We call $v\in \Real^m$ an \emph{equilibrium point} of \eqref{eq:riccati-V} if $f(v, w)=0$.

\begin{rem}
  It is of potential mathematical interest to consider models without Assumption \ref{ass: B^v upper-tri}. However, in such cases, even identifying all equilibrium points of \eqref{eq:riccati-V} becomes a nontrivial task as we need to solve a system of
  coupled algebraic equations.
  Still, there is one case where some of the results in this paper can be obtained to some extent, and this is when $A^\D$ is invertible. We refer the reader to \cite{Kim} for details.
\end{rem}

\section{Main Results}\label{sec:main}
In this section, we present our main results whose proofs are deferred to Sections~\ref{sec:moment} and \ref{sec:impvol}. In Section \ref{subsec:longterm}, we look for all $u\in \Real^m \times \operatorname{Ker} A^\D$ such that $\expec[\exp(u^\top X_T)]$ is finite for all $T\geq 0$. In Section \ref{subsec:blow-up}, we characterize all $u\in \Real^m \times \operatorname{Ker} A^\D$ such that $\expec[\exp(u^\top X_S)]$ is finite for all $S$ before a given $T$. Applications of these characterizations to financial modelings are given in Section \ref{subsec: applications}.

\subsection{Long-term behaviors}\label{subsec:longterm}
Our first result identifies $u\in \Real^m \times \operatorname{Ker} A^\D$ such that $\expec[\exp(u^\top X_T)]$ is finite for all $T\geq 0$. Thanks to Theorem \ref{thm:FM}, the problem is equivalent to finding every initial condition $v \in \Real^m$ such that the solution $y$ to \eqref{eq:riccati-V} does not blow up in finite time. To this end, let us first classify equilibrium points of \eqref{eq:riccati-V} into several different types, each of which tells us about qualitative behaviors of solutions in a neighborhood of an equilibrium point. See \cite{Chiang88} or \cite{Perko} for more backgrounds.

\begin{defn}\label{def: stable eq point}
 An equilibrium point $\nu\in \Real^m$ of \eqref{eq:riccati-V} is \emph{stable} if for each $\epsilon > 0$ there exists $\delta>0$ such that $\|y(t)-\nu\| < \epsilon$ for all $t >0$ whenever $\|y(0)-\nu\|< \delta$. It is  \emph{asymptotically stable} if it is stable and $\lim_{t\uparrow \infty} y(t) =\nu$. Otherwise, $\nu$ is \emph{unstable}. Also, if all eigenvalues of the Jacobian $Df(\nu) = A^\V + diag_\I(\nu)$ of $f$ at $\nu$ have nonzero real parts, then $\nu$ is \emph{hyperbolic}.
\end{defn}

The following result identifies all asymptotically stable equilibrium points for \eqref{eq:riccati-V}.
\begin{lem}\label{lem: eq point summary}
 There exists a nonempty closed convex set $\fD \subset \operatorname{Ker} A^\D$ with the following properties. First of all, for each $w\in \fD$, there are at most $2^{|\I|}$ equilibrium points for \eqref{eq:riccati-V}. Second, for each $w\in \fD^\circ$, $\eta(w) = (\eta_1(w), \cdots, \eta_m(w))$, where
 \begin{equation}\label{eq: def eta}
  \eta_i(w) := \left\{\begin{array}{cc}
    -A_{ii} - \sqrt{A_{ii}^2 - 2\left( \sum_{k=1}^{i-1}A_{ik}\eta_k(w) + g_i(w)\right)} & {\rm if}\; i \in \mathcal{I}\\
    -A_{ii}^{-1}\left( \sum_{k=1}^{i-1}A_{ik}\eta_k(w) + g_i(w)\right) & {\rm if}\;  i \in \set{1, \cdots, m}\setminus \I  \end{array}\right.,
 \end{equation}
 is hyperbolic and it is the unique asymptotically stable equilibrium point. All other equilibrium points are unstable, while at least one of them is hyperbolic. Lastly, there is no equilibrium point when $w\in \operatorname{Ker} A^\D \cap \fD^c$.
\end{lem}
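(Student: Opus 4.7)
The plan is to exploit the triangular structure of $A^{\V}$ guaranteed by Assumption~\ref{ass: B^v upper-tri}, which decouples the equilibrium condition $f(v,w)=0$ into a cascade of scalar equations in $v_1,\ldots,v_m$ to be solved in order. For $i\notin\I$ the $i$-th equation is linear with coefficient $A_{ii}<0$, hence $v_i$ is uniquely determined by $v_1,\ldots,v_{i-1}$ and $w$; for $i\in\I$ it is a scalar quadratic whose real roots are $-A_{ii}\pm\sqrt{A_{ii}^{2}-2(\sum_{k<i}A_{ik}v_k+g_i(w))}$. Choosing the minus branch at every $i\in\I$ produces the candidate $\eta(w)$ of~\eqref{eq: def eta}, while any other sign pattern produces a possibly distinct alternative, giving the bound of at most $2^{|\I|}$ equilibria.

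Next I would define $\fD\subseteq\operatorname{Ker} A^{\D}$ to be the set of $w$ for which every discriminant along the minus-branch recursion is nonnegative, i.e., the natural domain of $\eta$. Nonemptiness follows from $w=0$, since $g_i(0)=0$ makes $\eta_i(0)=0$ and every discriminant equals $A_{ii}^{2}>0$; closedness from continuity of $\eta_i$ up to the zero-discriminant boundary. Convexity requires an induction on $i$: using that each $g_i$ is convex (the $\pi_i$ are positive semi-definite) and that $A_{ik}\geq 0$ for $k<i$ by admissibility~(iii), one shows every $\eta_i$ is convex on its domain, because $\sqrt{h}$ is concave whenever $h$ is concave and nonnegative; hence each discriminant condition is a convex sublevel set. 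For $w\in\operatorname{Ker} A^{\D}\cap\fD^{c}$, the same inductive comparison $\nu_k\geq\eta_k(w)$ (again using $A_{ik}\geq 0$) forces the $\nu$-discriminant at the first failure index $i_{0}\in\I$ to be at least as negative as the already-negative $\eta$-discriminant, ruling out any equilibrium.

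For the stability classification I would compute $Df(v)=A^{\V}+diag_{\I}(v)$, which is lower triangular so its eigenvalues are the diagonal entries $A_{ii}+v_i\indic_{i\in\I}$. At $v=\eta(w)$ with $w\in\fD^{\circ}$ every such entry is strictly negative --- either $A_{ii}<0$ for $i\notin\I$, or $-\sqrt{\cdot}<0$ for $i\in\I$ by strict positivity of the interior discriminant --- so $\eta(w)$ is hyperbolic and asymptotically stable by the standard linearization theorem. Any other equilibrium $\tilde\nu$ must switch to the plus branch at some $i\in\I$, at which point the corresponding eigenvalue is $+\sqrt{\cdot}>0$, so $\tilde\nu$ is unstable and $\eta$ is the unique asymptotically stable equilibrium. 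To exhibit one hyperbolic unstable equilibrium I would take the one that agrees with $\eta$ except at $i_{\ast}:=\max\I$: below $i_{\ast}$ it inherits $\eta$'s strictly negative diagonal entries on $\fD^{\circ}$, at $i_{\ast}$ the diagonal is $+\sqrt{\cdot}\neq 0$, and above $i_{\ast}$ all indices are outside $\I$ with diagonal $A_{ii}<0$.

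The main obstacle I anticipate is the convexity of $\fD$, which requires an induction that propagates convexity of $\eta_i$ on its domain while tracking the interplay between convexity of $g_i$, nonnegativity of $A_{ik}$, and the concavity-preserving behaviour of $\sqrt{\cdot}$; continuity up to the zero-discriminant boundary must be handled in tandem. Once this structural information is in place, both the classification of equilibria by branch pattern and the spectral analysis of the triangular Jacobian follow routinely.
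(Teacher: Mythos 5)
Your proposal follows the same overall strategy the paper uses behind the scenes: the lower-triangular structure of $A^{\V}$ decouples the equilibrium equations $f_i(v,w)=0$ into a cascade, giving at most $2^{|\I|}$ sign patterns; the lower-triangular Jacobian $A^{\V}+diag_\I(v)$ yields stability from its diagonal; uniqueness of the stable equilibrium follows by looking at the first index where another equilibrium departs from $\eta(w)$; and the hyperbolic unstable equilibrium is exhibited by flipping the branch at $\max\I$. That part is essentially the paper's route. Where you diverge is the convexity of $\fD$. The paper proves it in one line by working with the convexity of $f$ jointly in $u=(v,w)$: from $f(\eta(w),w)=0=f(\eta(\tilde w),\tilde w)$ it follows that $f(\lambda\eta(w)+(1-\lambda)\eta(\tilde w),\hat w)\le 0$, so $\fE(\hat w)\ne\emptyset$ and $\hat w\in\fD$. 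Your approach instead inducts on $i$ to establish that each $\eta_i$ is convex and each discriminant $D_i$ concave, then intersects the resulting superlevel sets $\{D_i\ge 0\}$ (you write ``sublevel'' --- a slip; these are superlevel sets of concave, equivalently sublevel sets of convex, functions). This works and is more explicit, but it requires carefully propagating both convexity and domain-of-definition through the square roots, exactly the obstacle you flag. Similarly your nonemptiness check $\eta(0)=0$, $D_i(0)=A_{ii}^2>0$ is more direct than the paper's M-matrix argument, and your contradiction for $w\in\operatorname{Ker} A^\D\cap\fD^c$ via $\nu_k\ge\eta_k(w)$ is slightly more work than the paper's one-liner ($\nu$ would lie in $\fE(w)$, which is empty), but both are correct.

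There is one genuine gap. Throughout the stability analysis you invoke ``strict positivity of the interior discriminant'', i.e., you assume that for $w$ in the topological interior of $\fD$ every $D_i(w)>0$, which is what makes $\eta(w)$ hyperbolic (diagonal $-\sqrt{D_i(w)}<0$) and makes the plus-branch equilibria strictly unstable ($+\sqrt{D_i(w)}>0$). This identification of the interior of $\fD$ with the strict-discriminant set is not automatic and the paper proves it separately (Lemma~\ref{lem: dense in fD}) via a scaling argument: $\fE(w)\neq\emptyset$ implies $\fE^\circ(\lambda w)\neq\emptyset$ for all $\lambda\in[0,1)$, so any interior point of $\fD$ has $\fE^\circ(w)\neq\emptyset$, equivalently all diagonal Jacobian entries of $\eta(w)$ strictly negative. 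Without this step your claims about hyperbolicity and about instability of the other equilibria are unjustified. The gap is fixable within your framework --- e.g.\ if $D_{i_0}(w_0)=0$ at an interior $w_0\neq 0$, concavity of $D_{i_0}$ together with $D_{i_0}(0)>0$ forces $D_{i_0}((1+t)w_0)<0$ for small $t>0$, contradicting $w_0$ interior --- but it must be stated, not assumed.
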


The construction of $\fD$ is explicit (see \eqref{def: D} below). Moreover, $\eta(w)$ can be determined sequentially from $i=1$ to $i=m$ since $A^{\V}$ is lower triangular with strictly negative diagonal entries. Now in order to connect the long-term behavior of solution trajectories to equilibrium points, we introduce  the following notion.
\begin{defn}\label{def: stable set}
 Given an equilibrium point $\nu$ of \eqref{eq:riccati-V}, its \emph{stable set} is
\[
  W^s_{\nu}(w) := \set{v\in \Real^m \such \lim_{t\uparrow \infty} y(t) = \nu  \text{ where } y(t) \text{ solves }\eqref{eq:riccati-V}}.
 \]
 When $\nu = \eta(w)$, we write $W^s_{\nu}(w)$ as $\mathcal{S}(w)$ and call it the \emph{stable region} of \eqref{eq:riccati-V}.
\end{defn}
Another related object is the set of initial conditions for \eqref{eq:riccati} such that its solution trajectory does not explode in finite time:
\[
 \mathcal{S}_\infty := \set{u = (v,w) \in \Real^m\times \operatorname{Ker} A^\D \such |y(t)|<\infty \text{ for all } t\in\Real_+, \text{ where } y \text{ solves } \eqref{eq:riccati-V}}.
\]
For each $w$, $\mathcal{S}_\infty(w)$ is the section of $\mathcal{S}_\infty$, i.e. $\mathcal{S}_\infty(w):= \set{v\in \Real^m \such (v,w)\in \mathcal{S}_\infty}$.
We are now ready to state our first main result, which provides a decomposition of $\mathcal{S}_\infty$.
The interior of $\mathcal{S}_\infty$ is the disjoint union of stable regions $\mathcal{S}(w)$ for all $w\in \fD^\circ$, the boundary of $\mathcal{S}_\infty$ consists of two components: 1. disjoint union of all stable sets of nonstable equilibria $\nu$ for each $w\in \fD^\circ$, 2. disjoint union of $\mathcal{S}_\infty(w)$ for each $w\in \partial\fD$. In all of our statements, the topology is the relative Euclidean topology of $\Real^m \times \operatorname{Ker} A^D$.

\begin{thm}\label{thm: S-inf}
The interior and the boundary of $\mathcal{S}_\infty$ have the following decompositions in $\mathbb{R}^m\times \operatorname{Ker} A^\D$:
\begin{enumerate}
\item[i)]
$
\mathcal{S}^\circ_\infty = \bigcup_{w\in \fD^\circ} \mathcal{S}(w) \times \set{w}.
$
\item[ii)]
$
\partial \mathcal{S}_\infty = \bigg( \bigcup_{w \in \fD^\circ} \bigcup_{\nu\neq\eta(w)}W^s_\nu(w) \times \set{w} \bigg) \bigcup \bigg( \bigcup_{w \in \partial \fD} \mathcal{S}_\infty(w)\times \set{w}\bigg),
$
where $\nu$ is chosen from equilibrium points of \eqref{eq:riccati-V}. Moreover, for each $w\in \partial \fD$, there exists a nonempty set $\mathcal{M} \subset \{1, \cdots, m\}$ such that the set $\{v_{\mathcal{M}} \such v \in \mathcal{S}_\infty(w)\}$ is the stable set of the following system
\[
 \dot y_i = \frac12 y_i^2 \indic_{i\in \I} + \sum_{k\in \mathcal{M}} A_{ik} y_k + g_i(w), \quad \text{ for } i\in \mathcal{M},
\]
which admits a unique equilibrium point. Here $v_{\mathcal{M}} := (v_{i_1}, \ldots, v_{i_k})$ if
$\mathcal{M} = \set{i_1, \ldots, i_k} \subset \set{1, \ldots, m}$.
\end{enumerate}
\end{thm}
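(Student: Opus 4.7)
The plan is to exploit the lower-triangular structure of \eqref{eq:riccati-V} to reduce the analysis to scalar Riccati equations, coupled with Lemma \ref{lem: eq point summary} on the equilibrium structure. The key scalar facts are: $\dot y = \frac{1}{2}y^2 + by + c$ has two real roots with an open basin for the lower root when $b^2 > 2c$, no real root and finite-time blow-up of every solution when $b^2 < 2c$, and a single semi-stable double root $-b$ when $b^2 = 2c$. For components $i \notin \I$, the equation for $y_i$ with lower components frozen is linear with strictly negative leading coefficient $A_{ii}$ and is therefore always stable. Propagating these scalar facts inductively along $i = 1, \ldots, m$ transports scalar behavior to the full $m$-dimensional system.

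For part~(i), I would show both inclusions. $(\supseteq)$: Let $(v_0, w_0)$ satisfy $w_0 \in \fD^\circ$ and $v_0 \in \mathcal{S}(w_0)$, so $y(t; v_0, w_0) \to \eta(w_0)$. By Lemma \ref{lem: eq point summary}, $\eta(w_0)$ is a hyperbolic sink, and formula \eqref{eq: def eta} makes $\eta$ continuous on $\fD^\circ$. Choose $T>0$ with $y(T; v_0, w_0)$ inside a uniform neighborhood of $\eta(w_0)$ that remains in the basin of $\eta(w')$ for all $w'$ near $w_0$ (possible by linearization around a hyperbolic sink and continuity of $\eta$). Continuous dependence on initial data then gives an open neighborhood of $(v_0, w_0)$ entirely inside $\mathcal{S}_\infty$. $(\subseteq)$: Suppose $(v_0, w_0) \in \mathcal{S}_\infty^\circ$. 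I would exclude three alternatives. If $w_0 \in \operatorname{Ker} A^\D \setminus \fD$, let $i^*$ be the smallest index at which the scalar Riccati discriminant is negative; inductively the lower components $y_k$ for $k<i^*$ either blow up or converge to $\eta_k(w_0)$, so in either case $y_{i^*}$ satisfies an asymptotically autonomous scalar Riccati with strictly positive lower bound on $\dot y_{i^*}$ at infinity, producing finite-time blow-up and contradicting $\mathcal{S}_\infty$. If $w_0 \in \partial \fD$, every neighborhood contains $w' \notin \fD$, for which all trajectories blow up by the previous step, contradicting openness. Hence $w_0 \in \fD^\circ$. If moreover $v_0 \notin \mathcal{S}(w_0)$, then $y(t; v_0, w_0)$ either blows up (contradiction) or converges to some unstable $\nu \neq \eta(w_0)$; in the latter case, the triangular Riccati structure forces $v_{0,i}$ to equal the \emph{upper} root of the scalar equation at each level where $\nu_i$ is the unstable root, so $v_0$ sits on a set of codimension $\geq 1$, contradicting openness.

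For part~(ii), I use $\partial \mathcal{S}_\infty = \overline{\mathcal{S}_\infty} \setminus \mathcal{S}_\infty^\circ$ combined with (i), and analyze $(v_0, w_0) \in \partial \mathcal{S}_\infty$ via the same trichotomy on $w_0$. The case $w_0 \notin \fD$ is empty, since no bounded trajectories exist. For $w_0 \in \fD^\circ$, a bounded trajectory missing $\eta(w_0)$ must, by Lemma \ref{lem: eq point summary} and the finiteness of equilibria (at most $2^{|\I|}$), converge to some other equilibrium $\nu$, placing $v_0 \in W^s_\nu(w_0)$; conversely such points belong to $\partial \mathcal{S}_\infty$ by the codimension argument from part~(i). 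For $w_0 \in \partial \fD$ and $v_0 \in \mathcal{S}_\infty(w_0)$, membership in $\overline{\mathcal{S}_\infty}$ is inherited from the bounded trajectory, while exclusion from $\mathcal{S}_\infty^\circ$ follows by perturbing $w_0$ out of $\fD$ and invoking the blow-up established in part~(i).

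The reduced-system assertion for $w \in \partial \fD$ is where the main technical obstacle lies. The set $\partial \fD$ is precisely where one or more of the discriminants in \eqref{eq: def eta} vanish. Let $\mathcal{N}$ collect the indices $i \in \I$ at which this occurs together with indices downstream affected through coupling, and set $\mathcal{M} := \{1,\ldots,m\} \setminus \mathcal{N}$. At a vanishing-discriminant level $i \in \mathcal{N}$, the scalar Riccati reduces to $\dot y_i = \frac{1}{2}(y_i + A_{ii})^2 + (\text{lower-order coupling})$, with semi-stable double root $-A_{ii}$; boundedness forces $y_i(t) \to -A_{ii}$ and constrains the corresponding entry of $v$. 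Substituting these asymptotic values into the higher-index equations and invoking the triangular structure decouples the remaining dynamics on the $\mathcal{M}$-indexed coordinates into an autonomous reduced system of the stated form, whose unique equilibrium is the nondegenerate $\mathcal{M}$-restriction of $\eta$. The delicate part is the rigorous treatment of the semi-stable roots in this asymptotically autonomous setting, because hyperbolic stability arguments fail and one must verify that coupling from the $\mathcal{M}$-coordinates does not destabilize the parabolic-fixed-point behavior at the $\mathcal{N}$-levels; I expect this is where the bulk of the technical care is required.
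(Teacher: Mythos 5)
Your proposal correctly identifies the inductive/triangular strategy and gets the $(\supseteq)$ direction of part (i) right (it essentially reproduces the paper's argument via continuous dependence near the hyperbolic sink). However, there are two substantive problems.

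\textbf{The codimension argument in part (i), $(\subseteq)$, is a gap, not a proof.} You claim that if $y(t;v_0,w_0)$ converges to an unstable equilibrium $\nu\neq\eta(w_0)$ then ``the triangular Riccati structure forces $v_{0,i}$ to equal the upper root at each level,'' so $v_0$ lies on a codimension-$\geq 1$ set. This is only transparent at level $i=1$, where the scalar Riccati is autonomous. For $i>1$, the level-$i$ equation is a non-autonomous Riccati driven by the lower coordinates, and convergence to the upper root does \emph{not} force $v_{0,i}=\nu_i$; at best it forces a codimension-one constraint depending on $(v_{0,1},\ldots,v_{0,i-1})$, which needs an actual argument about asymptotically autonomous Riccati ODEs. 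Moreover, Lemma~\ref{lem: eq point summary} explicitly allows non-hyperbolic unstable equilibria for $w\in\fD^\circ$, so one cannot invoke the stable manifold theorem to claim the stable set is a lower-dimensional manifold. The paper sidesteps all of this with the M-matrix comparison in the proof of Lemma~\ref{lem: S-inf int}: it shows directly that if $r(0)\in\mathcal{S}_\infty'\setminus\overline{\mathcal{S}'}$ then scaling by $\delta\in(0,1)$ and iterating the comparison $\tilde\nu\leq\delta^{2k}\nu$ forces a contradiction. That establishes $\mathcal{S}_\infty(w)=\overline{\mathcal{S}(w)}$ for $w\in\fD^\circ$, which is the load-bearing fact, and it handles hyperbolic and non-hyperbolic unstable equilibria uniformly. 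Your part~(ii) argument for $w\in\fD^\circ$ inherits this gap, since it also leans on the same codimension claim.

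\textbf{The index set $\mathcal{M}$ in the $w\in\partial\fD$ case is reversed.} You set $\mathcal{N}$ to be the degenerate (vanishing-discriminant) indices plus their downstream descendants, and $\mathcal{M}=\{1,\ldots,m\}\setminus\mathcal{N}$. The theorem's $\mathcal{M}$ is precisely the set you call $\mathcal{N}$: in the paper's Lemma~\ref{lem: S-inf bou}, $\mathcal{M}=\{i\,:\,\eta_i(w)=v_i\text{ for all }v\in\fE(w)\}$, i.e.\ the degenerate indices. The key structural step there is to prove $A_{ij}=0$ for $i\in\mathcal{M}$, $j\in\mathcal{M}^c$, $i>j$, which makes the $\mathcal{M}$-subsystem autonomous with a \emph{unique} equilibrium (each degenerate scalar level has a double root), exactly as the theorem asserts. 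Your reversed $\mathcal{M}$ (the nondegenerate indices) cannot be right: the corresponding reduced system would have multiple equilibria (two roots at each nondegenerate level in $\I$), contradicting the ``admits a unique equilibrium point'' clause, and the ``substitute asymptotic values and decouple'' step does not produce a genuinely autonomous system because the substitution is only valid in the $t\to\infty$ limit. You should reverse $\mathcal{M}$ and replace the semi-stable-root analysis by the $A_{ij}=0$ decoupling argument.
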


In some special cases, the description of $\partial \mathcal{S}_\infty$ becomes succinct. A hyperbolic equilibrium point $\nu$ of \eqref{eq:riccati-V} is of \emph{type} $k$ if it admits $k$ eigenvalues with positive real parts in its Jacobian matrix. It is a standard result in dynamical systems theory that the stable set of an equilibrium point $\nu \in \Real^m$ of type $k$ is a smooth manifold of dimension $m-k$. If the system of interest has hyperbolic equilibrium points only, then the description of the $(m-1)$-dimensional object $\partial \mathcal{S}_\infty(w)$ for $w \in \fD^\circ$ does not need the stable sets for equilibrium points of type $k > 1$, because these stable sets have $(m-1)$-dimensional Lesbegue measure zero.

\begin{cor}\label{cor: good case} Suppose that $A^\D$ is invertible and every equilibrium point of \eqref{eq:riccati-V} is hyperbolic. Then, $\partial \mathcal{S}_\infty$ is given by $\bigcup_\nu W^s_\nu(0) \times \set{0}$ except a set of $(m-1)$-dimensional Lesbegue measure zero. Here, $\nu$ is chosen from hyperbolic equilibria of type 1 and $W^s_\nu(0)$ is a smooth manifold of dimension $m-1$.
\end{cor}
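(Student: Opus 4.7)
The plan is to invoke Theorem~\ref{thm: S-inf}(ii) and then prune the decomposition using the hyperbolicity hypothesis together with the stable manifold theorem.

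First, since $A^\D$ is invertible, $\operatorname{Ker} A^\D = \set{0}$, so the whole analysis takes place in $\Real^m \times \set{0}$. By Lemma~\ref{lem: eq point summary}, the set $\fD$ is a nonempty closed convex subset of $\operatorname{Ker} A^\D$, and therefore $\fD = \set{0}$. Taking the relative interior inside the ambient space $\Real^m \times \operatorname{Ker} A^\D = \Real^m \times \set{0}$, we get $\fD^\circ = \set{0}$ and $\partial \fD = \emptyset$. Consequently, the second union in Theorem~\ref{thm: S-inf}(ii) is empty, and
\[
  \partial \mathcal{S}_\infty \;=\; \bigcup_{\nu \neq \eta(0)} W^s_\nu(0) \times \set{0},
\]
where the union is taken over the equilibrium points of \eqref{eq:riccati-V} at $w=0$, of which there are at most $2^{|\I|}$ by Lemma~\ref{lem: eq point summary}.

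Next, classify these equilibria by their type, i.e.\ by the number $k$ of eigenvalues of the Jacobian $Df(\nu) = A^\V + diag_\I(\nu)$ with strictly positive real part. By Lemma~\ref{lem: eq point summary}, $\eta(0)$ is the unique asymptotically stable equilibrium, so every $\nu \neq \eta(0)$ appearing above is unstable. Since by hypothesis every equilibrium is hyperbolic, each such $\nu$ has type $k \geq 1$. The (global) stable manifold theorem (see, e.g., Perko) then says that $W^s_\nu(0)$ is an injectively immersed smooth submanifold of $\Real^m$ of dimension $m-k$: for $k=1$ it is a smooth $(m-1)$-dimensional manifold, and for $k \geq 2$ its dimension is at most $m-2$.

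The only remaining point is that each $W^s_\nu(0)$ with $k \geq 2$ has $(m-1)$-dimensional Lebesgue measure zero; this is standard (an injectively immersed manifold of dimension $\leq m-2$, realized as a countable union of embedded coordinate charts, meets every hyperplane of dimension $m-1$ in a set of measure zero). Since there are only finitely many equilibria, the union over $k \geq 2$ contributes a set of $(m-1)$-dimensional Lebesgue measure zero, and one obtains
\[
  \partial \mathcal{S}_\infty \;=\; \bigcup_{\nu \text{ of type } 1} W^s_\nu(0) \times \set{0}
\]
up to an $(m-1)$-dimensional Lebesgue null set, as claimed. The only genuinely nontrivial ingredient here is the global stable manifold theorem; everything else is bookkeeping on top of Theorem~\ref{thm: S-inf} and Lemma~\ref{lem: eq point summary}.
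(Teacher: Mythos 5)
Your proof is correct and mirrors the paper's own (implicit) argument: the paper does not prove Corollary~\ref{cor: good case} separately but states its ingredients in the preceding paragraph — namely, that $\operatorname{Ker} A^\D = \{0\}$ forces the whole decomposition of Theorem~\ref{thm: S-inf} to sit over $w=0$, and that the stable manifolds of hyperbolic equilibria of type $k>1$ have dimension $\leq m-2$, hence contribute a set of $(m-1)$-dimensional Lebesgue measure zero. One small stylistic point: the intermediate claim that an $(m-2)$-dimensional manifold ``meets every hyperplane of dimension $m-1$ in a set of measure zero'' is weaker than (and not what is used for) the conclusion; the relevant fact is simply that a smooth manifold of dimension $\leq m-2$ has zero $(m-1)$-dimensional Hausdorff measure in $\Real^m$, which your countable-union-of-charts argument correctly establishes.
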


Going back to the  affine diffusion $X$, the characterization of $\mathcal{S}_\infty$, together with Theorem \ref{thm:FM}, helps to identify the long run behavior of its exponential moments.
\begin{cor}\label{cor: exp moment X}
The following statements are equivalent:
\begin{enumerate}
\item[i)] $\expec\left[\exp\left(u^\top X_T\right)\right]$ is finite for all $T\geq 0$ and $X_0\in \Real_+^m \times \Real^n$.
\item[ii)] $u\in \mathcal{S}_\infty$.
\end{enumerate}
Moreover, when either of these statements holds true,
\begin{equation}\label{eq: exp moment growth rate}
  \lim_{T\to \infty} \frac{1}{T} \log \expec\bra{\exp(u^\top X_T)} = \frac12 (u^\D)^\top \pi_0 u^\D + \hat{b}^\D \cdot u^\D + \left\{\begin{array}{ll}  \hat{b}^\V \cdot \eta(u^\D), & \text{ if } u \in \mathcal{S}_\infty^\circ;\\
  \hat{b}^\V \cdot \nu, & \text{ if } u \in  \partial \mathcal{S}_\infty, \end{array}\right.
\end{equation}
where $\nu$ is some unstable equilibrium point of \eqref{eq:riccati-V}.
\end{cor}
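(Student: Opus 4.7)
The plan is to combine the transform formula \eqref{eq: tran formula} with Theorem~\ref{thm:FM} for the equivalence of (i) and (ii), and with Theorem~\ref{thm: S-inf} plus a Cesaro average argument for the growth rate. Setting $t=0$ and $X_0 = x$ in \eqref{eq: tran formula} gives
$$ \expec\bra{\exp(u^\top X_T)} = \exp\bigl( I(T) + y(T)\cdot x^\V + w\cdot x^\D \bigr), $$
as long as either side is finite, by Theorem~\ref{thm:FM}. Since $z(s)\equiv w$ is constant, $I(T)$ is finite whenever $y$ stays finite on $[0,T]$, so requiring finiteness of the left-hand side for every $T\geq 0$ and every $x$ is equivalent to $y(T)$ being finite for all $T$, i.e., to $u\in \mathcal{S}_\infty$.

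For the growth rate, taking logarithms and dividing by $T$ yields
$$ \frac{1}{T}\log \expec\bra{\exp(u^\top X_T)} = \frac{I(T)}{T} + \frac{y(T)\cdot x^\V}{T} + \frac{w\cdot x^\D}{T}. $$
The last two terms vanish as $T\to\infty$ once $y$ is shown to converge to a finite limit. Expanding $I(T)/T$ produces the constant $\tfrac12 w^\top \pi_0 w + \hat{b}^\D \cdot w$ together with the Cesaro average $T^{-1}\int_0^T \hat{b}^\V \cdot y(s)\,ds$; if $y(t)\to y_\infty$, this average converges to $\hat{b}^\V \cdot y_\infty$ by continuity, so the whole problem reduces to identifying $y_\infty$ in each case. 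I would then invoke Theorem~\ref{thm: S-inf}: for $u\in \mathcal{S}_\infty^\circ$, part~(i) gives $w\in\fD^\circ$ and $v\in \mathcal{S}(w)=W^s_{\eta(w)}(w)$, hence $y(t)\to \eta(w)$; for $u\in \partial \mathcal{S}_\infty$ with $w\in \fD^\circ$, part~(ii) places $v$ in the stable set of some unstable equilibrium $\nu$ of \eqref{eq:riccati-V}, yielding $y(t)\to \nu$; for $u\in \partial \mathcal{S}_\infty$ with $w\in \partial\fD$, part~(ii) gives convergence of $y_{\mathcal{M}}(t)$ to the unique equilibrium of a reduced system.

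The main obstacle I expect is the last sub-case. There I must promote convergence of $y_{\mathcal{M}}$ to convergence of the full vector $y$, and verify that its limit is an equilibrium of the full system \eqref{eq:riccati-V}. My plan is to exploit the lower-triangular structure of $A^\V$ from Assumption~\ref{ass: B^v upper-tri}: each coordinate $y_i$ with $i\notin\mathcal{M}$ satisfies a scalar Riccati or linear ODE whose forcing is determined by already-analyzed coordinates whose limits exist, so I would analyze them sequentially via an integrating-factor/comparison argument, leveraging the strict negativity of the diagonal entries of $A^\V$. The non-explosion constraint $u\in \mathcal{S}_\infty$ should force each such $y_i$ onto the stable branch of its scalar equation, ensuring convergence of the full $y$ to some unstable equilibrium $\nu$ of \eqref{eq:riccati-V}, which then plays the role advertised in the statement.
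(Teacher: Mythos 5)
Your equivalence argument and the Ces\`aro-average reduction are exactly the paper's: rewrite $T^{-1}\log\expec[\exp(u^\top X_T)]$ via the transform formula, observe the constant part $\tfrac12 (u^\D)^\top\pi_0 u^\D + \hat b^\D\cdot u^\D$, and note the remaining terms converge once $y(T)$ has a finite limit. Where you and the paper part ways is in \emph{how} the convergence of $y(T)$ to an equilibrium is established. The paper invokes Lemma~\ref{lem:longtermproperty} directly: since $u\in\mathcal{S}_\infty$, the trajectory does not explode, so by the contrapositive of that lemma's second clause the trajectory is bounded, and its first clause then says a bounded trajectory converges to an equilibrium point. That one citation handles all of $\mathcal{S}_\infty^\circ$, $\partial\mathcal{S}_\infty$ with $w\in\fD^\circ$, and $\partial\mathcal{S}_\infty$ with $w\in\partial\fD$ uniformly, and the identification $\eta(u^\D)$ vs.\ ``unstable $\nu$'' is then read off from Theorem~\ref{thm: S-inf}.

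You instead route through Theorem~\ref{thm: S-inf} case by case and correctly notice the snag: in the sub-case $w\in\partial\fD$, part~(ii) of that theorem only delivers convergence of $y_{\mathcal M}$, not of the full vector. Your fix --- peel off the coordinates outside $\mathcal{M}$ sequentially using the lower-triangular structure of $A^\V$ and scalar comparison for the Riccati/linear equations with convergent forcing --- is sound, but it is essentially a re-derivation of the content of Lemma~\ref{lem:longtermproperty} (which does exactly that induction, coordinate by coordinate, using Lemma~\ref{lem: bounded traj} for the quadratic components). So there is no gap: your proposal reaches the same conclusion. The trade-off is that the paper's route is shorter precisely because Lemma~\ref{lem:longtermproperty} packages the ``bounded implies convergent to an equilibrium'' fact once and for all, obviating the need to know $\mathcal{M}$ or which equilibrium is the target before concluding convergence; invoking it first, and only then using Theorem~\ref{thm: S-inf} to name the limit, avoids the extra bookkeeping your case split incurs.
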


These findings connect to existing results in two ways. First, it generalizes characterizations in Proposition 5.2 of \cite{GK10} and Theorem 3.4 in \cite{Keller-Ressel} to multi-dimensional affine diffusions. In these two papers, similar characterizations on exponential moments are obtained in the canonical affine term structure model of \cite{DaiS} and 2-dimensional affine stochastic volatility models, respectively. Second, following the same arguments in Theorem 3.4 of \cite{Keller-Ressel}, Corollary \ref{cor: exp moment X} shows a certain similarity between large time moment generating functions of $X$ and a L\'{e}vy process whose characteristic exponent is given by the right hand side of \eqref{eq: exp moment growth rate}.

\subsection{Blow-up behaviors}\label{subsec:blow-up}
Given $T> 0$, our second result identifies $u\in \Real^m \times \operatorname{Ker} A^\D$ such that $\expec\left[\exp\left(u^\top X_S\right)\right]$ is finite for any $S < T$. To this end, let us first define the \emph{blow-up time} for solutions to \eqref{eq:riccati-V}.

\begin{defn}\label{def: blow-up time}
For the initial condition $u \in \Real^m \times \operatorname{Ker} A^\D$, the blow-up time $T^*(u)$ of a solution $y$ to \eqref{eq:riccati-V} is the first time $t^*$ such that $\lim_{t \rightarrow t^*}|y(t)| = \infty$.
\end{defn}

The following result, whose proof is deferred to Section~\ref{sec:impvol}, ensures the continuity of $u\mapsto T^*(u)$.
\begin{lem}\label{lem:cont blowup time}
 The blow-up time $T^*(\cdot)$ is continuous on the set $\mathcal{P} := \set{u\such T^*(u) < \infty}$.
\end{lem}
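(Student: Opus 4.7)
My plan is to split the claim into lower and upper semi-continuity of $T^*$ at an arbitrary $u\in\mathcal{P}$. For lower semi-continuity, I would fix any $T<T^*(u)$; then $y(\cdot,u)$ is bounded on $[0,T]$, and since $f$ is $C^1$ in its first argument, the standard continuous-dependence theorem for ODEs gives that, for $u_n\to u$, the trajectory $y(\cdot,u_n)$ exists on $[0,T]$ and converges to $y(\cdot,u)$ uniformly on $[0,T]$. Thus $T^*(u_n)>T$ eventually, and sending $T\uparrow T^*(u)$ yields $\liminf_n T^*(u_n)\geq T^*(u)$.

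The main obstacle is upper semi-continuity, and my plan is to exploit both the lower triangular structure of $A^{\V}$ coming from Assumption \ref{ass: B^v upper-tri} and the scalar Riccati nature of the diagonal equations indexed by $\I$. I would let $i^*$ denote the smallest index such that $\limsup_{t\uparrow T^*(u)}|y_{i^*}(t,u)|=\infty$. Components $y_i$ with $i\notin\I$ satisfy $\dot y_i=A_{ii}y_i+\sum_{k<i}A_{ik}y_k+g_i(w)$, which is linear in $y_i$ and driven only by lower-index components; so minimality of $i^*$ forces $i^*\in\I$. The scalar equation for $y_{i^*}$ then reads $\dot z=\tfrac12 z^2+\alpha(t)z+\beta(t)$, with $\alpha,\beta$ determined by the lower components, and since $\tfrac12 z^2\geq 0$ the blow-up must occur through $z\to+\infty$.

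The key structural observation is that triangularity makes $(y_1,\dots,y_{i^*-1})$ a closed autonomous sub-system. By minimality of $i^*$ it does not blow up at $T^*(u)$, so its flow from $(v_1,\dots,v_{i^*-1})$ extends past $T^*(u)$ to some interval $[0,T^*(u)+\delta_0]$ with a uniform bound $M_0$. I would then apply continuous dependence to this sub-system to obtain uniform bounds on $y_1(\cdot,u_n),\dots,y_{i^*-1}(\cdot,u_n)$ on the same interval for all large $n$, whence the coefficients $\alpha_n(t),\beta_n(t)$ of the scalar Riccati driving $y_{i^*}(\cdot,u_n)$ on $[0,T^*(u)+\delta_0]$ are uniformly bounded by some constant $K$ independent of $n$.

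The final quantitative ingredient is the elementary scalar-Riccati comparison: once $z(t_1)\geq N_0=N_0(K)$, one has $\dot z\geq\tfrac14 z^2$ as long as $z\geq N_0$, so $z$ blows up by time $t_1+4/z(t_1)$. Given $\epsilon>0$, I would choose $t_1<T^*(u)$ with $T^*(u)-t_1<\epsilon/2$ and $y_{i^*}(t_1,u)$ so large that $8/y_{i^*}(t_1,u)<\epsilon/2$; continuous dependence on the compact interval $[0,t_1]$ then makes $y_{i^*}(t_1,u_n)\geq\tfrac12 y_{i^*}(t_1,u)$ for large $n$, and the comparison forces $T^*(u_n)\leq T^*(u)+\epsilon$. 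Sending $\epsilon\downarrow 0$ completes the proof. The delicate step is precisely this last one: one must simultaneously control the scalar-Riccati coefficients via the closed sub-system argument and convert a large value of $y_{i^*}$ slightly before $T^*(u)$ into a blow-up-time estimate that is uniform over nearby initial data.
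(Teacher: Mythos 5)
Your proof is correct, and it takes a genuinely different, more elementary route than the paper's. The paper's argument runs through the compactification machinery of Section~\ref{subsec: blow up time}: via \eqref{eq:transf} and Proposition~\ref{prop:blowuptime} it expresses $T^*(u)$ as $\int_0^\infty(1-R(s)^4)\,ds$, uses the blow-up-rate Lemma~\ref{lem:firstblowup} together with Lemma~\ref{lem:quadODE-blowup} to identify the limit of the compactified trajectory as a nonzero boundary point, establishes a uniform positive lower bound on $z(s;u')\cdot\tilde f(z(s;u'),w')$ for $u'$ near $u$ and $s$ large, deduces exponential decay of $1-R(s;u')^2$, and invokes dominated convergence in the integral formula. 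Your argument bypasses all of this: lower semi-continuity is the standard continuous-dependence fact, and upper semi-continuity is obtained by exploiting the triangular structure of $A^\V$ from Assumption~\ref{ass: B^v upper-tri} directly, isolating the first exploding coordinate $i^*\in\I$, using the closed non-explosive sub-system $(y_1,\ldots,y_{i^*-1})$ to uniformly control the coefficients of the scalar Riccati for $y_{i^*}$ on an interval slightly past $T^*(u)$, and converting a large value of $y_{i^*}(t_1,u_n)$ into the quantitative bound $T^*(u_n)\leq t_1+4/y_{i^*}(t_1,u_n)$. The paper's proof has the advantage of reusing the compactification apparatus already needed for the formula for $T^*$ and the blow-up-rate asymptotics, and it does not need to decompose into semi-continuities; yours is self-contained, avoids the reparametrization entirely, and makes the role of the triangular hypothesis explicit.
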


Similar to $\mathcal{S}_\infty$ in the last subsection, we define the set of initial conditions such that solutions to \eqref{eq:riccati} do not blow up before $T$:
\[
 \mathcal{S}_T := \set{u=(v,w) \in \Real^m\times \operatorname{Ker} A^\D \such |y(s)|<\infty, \forall s< T, \text{ where } y \text{ solves } \eqref{eq:riccati-V}}.
\]
We also define $\mathcal{S}_T(w)$ as a section of $\mathcal{S}_T$ for fixed $w \in \operatorname{Ker} A^\D$. It is apparent that $\mathcal{S}_T(w) = \set{v \such T^*(u) \geq T \text{ where } u=(v,w)}$ and that $\set{\mathcal{S}_T(w)}_{T\geq 0}$ is a decreasing sequence of sets, yielding $\bigcap_{T> 0} \mathcal{S}_T (w) = \bigcap_{T > 0} \set{v \such T^*(u) \geq T} = \mathcal{S}_{\infty}(w)$. This observation and Lemma \ref{lem:cont blowup time} combined indicates that both $\mathcal{S}_\infty(w)$ and $\mathcal{S}_T(w)$ are  closed sets in $\Real^m$. Moreover, \cite{Filipovic-Mayerhofer} showed that  $\mathcal{S}_T(w)$ is a convex neighborhood of the origin in $\mathbb{R}^m$. Similar conclusions hold for $\mathcal{S}_T$ and $\mathcal{S}_\infty$ in $\Real^m\times \operatorname{Ker} A^{\D}$ as well. On the other hand, it is not difficult to see from the definition of $\mathcal{S}_T$ that $\mathcal{S}_T^\circ=\set{u \such T^*(u) > T}$, hence $\partial \mathcal{S}_T = \set{u \such T^*(u) = T}$.

In what follows, we will characterize $\mathcal{S}_T$ and its boundary via the stability analysis of a transformed version of \eqref{eq:riccati-V}. Before we proceed, observe that $\mathcal{S}_T = \left\{u \such \expec\left[\exp\left(u^\top X_S\right)\right] <\infty,  \forall S< T\right\}$ from Theorem \ref{thm:FM}.  Hence, the study of $\mathcal{S}_T$ and its boundary is equivalent to investigating the blow-up behaviors of exponential moments of $X$.

Let us consider the following change of variables, inspired by \cite{Goriely}:
$$
x_i\left(s\right) := e^{-s}y_i\left(T(1 - e^{-s})\right), \quad i=1, \ldots, m, \quad x_{m+1}(s) := e^{-s}.
$$
Observe that if $x_i$ blows up at some $s^*>0$, then $y_i$ blows up at $T(1-e^{-s^*})<T$. Therefore if $y_i$ explodes at $T$, then $x_i$ does not explode in finite time. In addition, if $y_i$ explodes after $T$, then $\lim_{s\uparrow \infty} x(s) =0$. Given $y(0)= v \in \mathbb{R}^m$, one checks that $x$ satisfies the system of ODEs:
\begin{equation}\label{eq:quadODE}
\dot x_i = \frac{T}{2}x_i^2\indic_{i \in \mathcal{I}} - x_i + T\sum_{k=1}^i A_{ik}x_kx_{m+1} +T x_{m+1}^2g(w), \quad i = 1, \ldots, m,
\end{equation}
with $\dot x_{m+1} = - x_{m+1}$ and the initial condition $x(0) = (v, 1)$.
We introduced the auxiliary component $x_{m+1}$ to ensure the system \eqref{eq:quadODE} is autonomous.
Then, we observe that the equilibrium points of (\ref{eq:quadODE}) are given by $\nu'$ with $\nu'_i = 0$ or $2/T$, for $i \in \mathcal{I}$, and zero for all other indices. Also, every equilibrium point is hyperbolic, since the Jacobian at each equilibrium point has eigenvalues 1 or $-1$. Furthermore, the origin is the unique asymptotically stable equilibrium point of the system. For each equilibrium point $\nu'$ for \eqref{eq:quadODE}, let us denote its first $m$ components by $\nu$ and define the following stable set for $\nu$:
\[
 W^s_\nu(w,T) := \left\{v\in \Real^m \such \lim_{t\uparrow \infty} x(t) = \nu' \text{ where } x(t) \text{ solves } \eqref{eq:quadODE}\right\}.
\]
We are now ready to state our second main result, which characterizes the interior and the boundary of $\mathcal{S}_T$ as the disjoint unions of stable sets of equilibrium points for \eqref{eq:quadODE}.

\begin{thm}\label{thm:St-bdry} For each $T > 0$, the interior and the boundary of $\mathcal{S}_T$ have the following decompositions in $\Real^m\times \operatorname{Ker} A^{\D}$:
\[
\mathcal{S}_T^\circ = \bigcup_{w\in \operatorname{Ker} A^\D} W^s_0(w,T)\times\set{w} \quad {\text and}\quad \partial \mathcal{S}_T =  \bigcup_{w \in \operatorname{Ker} A^\D} \pare{\bigcup_{\nu \neq 0} W^s_{\nu}(w, T) \times \set{w}},
\]
where $\nu$ is chosen from the first $m$ components of equilibrium points of \eqref{eq:quadODE}.
\end{thm}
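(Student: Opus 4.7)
The plan is to transfer the blow-up question for \eqref{eq:riccati-V} into an asymptotic question for the autonomous system \eqref{eq:quadODE} via the change of variables $x_i(s)=e^{-s}y_i(T(1-e^{-s}))$, $x_{m+1}(s)=e^{-s}$. This provides a bijection between solutions of \eqref{eq:riccati-V} with initial datum $v$ and solutions of \eqref{eq:quadODE} with initial datum $(v,1)$: a finite blow-up of $y$ at $t^*\in(0,T)$ corresponds to a finite blow-up of $x$ at $s^*=-\log(1-t^*/T)$, while $y$ existing on all of $[0,T)$ corresponds to $x$ existing on all of $[0,\infty)$. As already observed after \eqref{eq:quadODE}, every equilibrium of the transformed system is hyperbolic, and a direct computation shows the origin is the unique asymptotically stable one, its Jacobian being $-I_{m+1}$.

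For the interior identity I would show $u=(v,w)\in\mathcal{S}_T^\circ$ is equivalent to $v\in W^s_0(w,T)$. If $T^*(u)>T$, then $y$ is bounded on $[0,T]$ and so $|x(s)|\leq e^{-s}\sup_{t\in[0,T]}|y(t)|\to 0$, giving $v\in W^s_0(w,T)$. Conversely, if $x(s)\to 0$, the hyperbolic stability of the origin with all Jacobian eigenvalues equal to $-1$ forces $|x(s)|=O(e^{-s})$ once the orbit enters a sufficiently small neighbourhood of $0$; hence $|y(T(1-e^{-s}))|=e^s|x(s)|$ stays bounded, $y$ admits a continuation past $T$, and $T^*(u)>T$.

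The easy direction of the boundary identity follows similarly: if $v\in W^s_\nu(w,T)$ with $\nu\neq 0$, then $x$ exists on $[0,\infty)$ so $T^*(u)\geq T$, and since $\nu_i=2/T$ for some $i\in\mathcal{I}$ one has $y_i(T(1-e^{-s}))=e^s x_i(s)\to\infty$, which forces $T^*(u)\leq T$; combining, $u\in\partial\mathcal{S}_T$. The converse, namely that the orbit $x(s)$ starting from an arbitrary boundary point $u$ converges to some nonzero equilibrium of \eqref{eq:quadODE}, is where I expect the main technical effort to lie.

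I would overcome this obstacle by exploiting the lower-triangular structure of $A^{\V}$ from Assumption \ref{ass: B^v upper-tri} and inducting on $i=1,\ldots,m$. Given that $x_k(s)\to\alpha_k\in\{0,2/T\}$ for every $k<i$, the equation for $x_i$ is either a scalar linear ODE with exponentially decaying forcing (when $i\notin\mathcal{I}$), whose solutions necessarily tend to $0$, or, when $i\in\mathcal{I}$, the scalar non-autonomous Riccati equation
\[
 \dot x_i=\tfrac{T}{2}x_i^2+\bigl(TA_{ii}e^{-s}-1\bigr)x_i+h_i(s),\qquad h_i(s)\to 0\text{ exponentially},
\]
whose autonomous limit $\dot x_i=(T/2)x_i^2-x_i$ admits only the two hyperbolic equilibria $0$ and $2/T$. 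Standard results for asymptotically autonomous scalar ODEs then ensure that any globally bounded trajectory of the non-autonomous equation converges to one of these two limits; the scalar structure together with exponential decay of the perturbation rules out oscillatory or otherwise non-trivial $\omega$-limit sets. Iterating over $i$ produces a limit $\nu'=\lim_{s\to\infty}x(s)$ that is an equilibrium of \eqref{eq:quadODE}, and $\nu'$ must be nonzero because otherwise the interior characterization would yield $T^*(u)>T$, contradicting $u\in\partial\mathcal{S}_T$. Disjointness of the unions on the right-hand sides of both identities is immediate from the uniqueness of the limit of a convergent orbit.
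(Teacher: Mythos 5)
Your overall strategy matches the paper's: you pass to the autonomous system \eqref{eq:quadODE} via $x_i(s)=e^{-s}y_i(T(1-e^{-s}))$, identify the dichotomy $T^*(u)>T \Leftrightarrow x(s)\to 0$ and $T^*(u)=T \Leftrightarrow x(s)\to\nu'\neq 0$, and use the lower-triangular structure to run an induction establishing convergence of bounded $x$-orbits to equilibria (this is the content of the paper's Lemma~\ref{lem:quadODE-blowup}, whose scalar workhorses are Lemmas~\ref{lem: bounded traj} and~\ref{lem:bdd-traj2} --- your ``standard results for asymptotically autonomous scalar ODEs'' are playing exactly this role). The genuinely different ingredient is how you show that a boundary orbit cannot converge to the origin. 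The paper invokes Lemma~\ref{lem:firstblowup}, which establishes via the separate \cite{EliasG} compactification \eqref{eq:transf} that the first exploding component $y_l$ satisfies $(T-t)y_l(t)\to c>0$, whence $x_l(s)\to c/T\neq 0$; you instead prove the contrapositive using the hyperbolic structure at $0$: since $Dh(0)=-I_{m+1}$, convergence $x(s)\to 0$ forces $|x(s)|=O(e^{-s})$, so $e^s|x(s)|$ is bounded, hence $y$ is bounded on $[0,T)$, extends past $T$, and $T^*(u)>T$. This is a valid alternative and, if carried out, is self-contained within the \eqref{eq:quadODE} framework and avoids the \cite{EliasG} machinery entirely; the trade-off is that the paper's route reuses Lemma~\ref{lem:firstblowup}, which is needed anyway for Lemma~\ref{lem:cont blowup time} and Corollary~\ref{cor2: exp moment X}.

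Two steps in your sketch deserve to be fleshed out. First, the estimate $|x(s)|=O(e^{-s})$ near a sink with Jacobian $-I_{m+1}$ is not merely the eigenvalue bound: a crude Gronwall-type estimate only gives $O(e^{-\mu s})$ for $\mu<1$, and one needs a Duhamel bootstrap (feed the $O(e^{-\mu s})$ bound with $\mu>1/2$ back into the variation-of-constants formula to upgrade it to $O(e^{-s})$). Second, in the inductive step for $i\in\I$ you say ``any globally bounded trajectory'' converges, but you do not argue why the trajectory is bounded. One needs the dichotomy: either $x_i$ is bounded, or it blows up in finite time (which is ruled out because it would force $y$ to explode strictly before $T$). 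For a scalar Riccati equation whose leading coefficient is asymptotically $T/2>0$ this dichotomy does hold, but it is a substantive point --- it is precisely the second assertion of Lemma~\ref{lem:quadODE-blowup} --- and should not be left implicit.
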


In the same spirit of Corollary~\ref{cor: exp moment X}, the blow-up behavior of exponential moments is identified as follows.
\begin{cor}\label{cor2: exp moment X}
For each $T > 0$, the following statements are equivalent:
\begin{enumerate}
\item[i)] $\expec\left[\exp\left(u^\top X_S\right)\right]$ is finite for all $S< T$ and $X_0\in \Real_+^m \times \Real^n$.
\item[ii)] $u\in \mathcal{S}_T$.
\end{enumerate}
If $u \in \partial \mathcal{S}_T$, then
\begin{equation}\label{eq: blow-up rate}
  \lim_{S \uparrow T} (T-S) \log \expec\bra{\exp(u^\top X_S)} = T \nu \cdot X^\V_0,
\end{equation}
where $\nu$ is the first $m$ components of some unstable equilibrium point of \eqref{eq:quadODE}.
\end{cor}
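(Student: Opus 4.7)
The equivalence (i)$\Leftrightarrow$(ii) follows directly from Theorem~\ref{thm:FM} and the definition of $\mathcal{S}_T$ via the blow-up time $T^*(u)$. If $u\in \mathcal{S}_T$, then $y(\cdot)$ is finite on $[0,T)$, so the transform formula \eqref{eq: tran formula} yields finiteness of $\expec[\exp(u^\top X_S)]$ for every $S<T$ and every initial condition. Conversely, if $u\notin \mathcal{S}_T$ then $T^*(u)<T$, and for any $S\in (T^*(u),T)$ the right-hand side of \eqref{eq: tran formula} is not finite, so Theorem~\ref{thm:FM} forces $\expec[\exp(u^\top X_S)]=\infty$.

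For the blow-up rate, I would start from \eqref{eq: tran formula}, which after taking logarithms gives
\[
\log \expec[\exp(u^\top X_S)] = I(S) + y(S)\cdot X_0^\V + z(S)\cdot X_0^\D.
\]
Since $w\in \operatorname{Ker} A^\D$, $z(S)=w$ for all $S$, so $z(S)\cdot X_0^\D$ is constant and the two integrals inside $I(S)$ that depend only on $z$ are linear in $S$, hence bounded on $[0,T]$. All of these pieces vanish once multiplied by $(T-S)$. The dominant contribution therefore comes from $y(S)\cdot X_0^\V$, with one residual piece $\int_0^S \hat{b}^\V\cdot y(r)\,dr$ in $I(S)$ still to control.

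To pin down the asymptotic of $y(S)$, I would invoke the change of variables $x_i(s)=e^{-s}y_i(T(1-e^{-s}))$ from Section~\ref{sec:main}, so that $s=\log(T/(T-S))$ and $y_i(S)=\frac{T}{T-S}x_i(s)$. Since $u\in \partial \mathcal{S}_T$, Theorem~\ref{thm:St-bdry} places $x(\cdot)$ in the stable set of some unstable hyperbolic equilibrium of \eqref{eq:quadODE}, whose first $m$ components are $\nu$. Hence $x(s)\to \nu$ as $s\to\infty$, equivalently as $S\uparrow T$, and therefore
\[
(T-S)\,y(S)\cdot X_0^\V = T\,x(s)\cdot X_0^\V \longrightarrow T\nu\cdot X_0^\V.
\]

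The main obstacle is showing $(T-S)I(S)\to 0$, which reduces to $(T-S)\int_0^S \hat{b}^\V \cdot y(r)\,dr\to 0$. For this I need a quantitative blow-up rate: the boundedness of the trajectory $x(\cdot)$ on $[0,\infty)$ (a consequence of Theorem~\ref{thm:St-bdry}) combined with $y_i(r)=\frac{T}{T-r}x_i(s(r))$ gives $|y_i(r)|=O\pare{(T-r)^{-1}}$. Consequently $\int_0^S \hat{b}^\V\cdot y(r)\,dr = O\pare{\log(T-S)}$, and since $(T-S)\log(T-S)\to 0$, the residual piece is negligible. Combining the three estimates yields the stated limit \eqref{eq: blow-up rate}.
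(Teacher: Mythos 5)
Your proposal is correct and follows essentially the same route as the paper: reduce to the transform formula, identify the dominant term $(T-S)\,y(S)\cdot X_0^\V$ via the relation $y_i(S)=\frac{T}{T-S}x_i(s(S))$ (the paper's equation \eqref{eq: conv x_l}) together with Theorem~\ref{thm:St-bdry}, and show the remaining pieces are negligible. You have in fact filled in the one step the paper dismisses as a ``simple exercise'' — that $(T-S)\int_0^S \hat b^\V\cdot y(r)\,dr\to 0$ — by observing that boundedness of the transformed trajectory $x(\cdot)$ yields $|y_i(r)|=O\pare{(T-r)^{-1}}$, hence a logarithmic bound on the integral that is killed by the $(T-S)$ factor; this is the natural way to make that step precise.
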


\subsection{Financial applications}\label{subsec: applications}
Affine processes have been widely used to model the stock price dynamics because of their analytical tractability in derivative pricing. In many models, the {\it discounted} stock price is represented by $S_\cdot = \exp(\theta^\top X_\cdot)$ for some $\theta \in \Real^d$. Let us assume that $S$ is a martingale under a risk neutral measure $\prob$. Then, this assumption is equivalent to the following conditions:
\begin{equation}\label{eq: mart}
 \theta \text{ is an equilibrium point of } \eqref{eq:riccati} \quad \text{ and } \quad 1/2 \,(\theta^\D)^\top \pi_0 \theta^\D + \hat{b}^\top \theta =0.
\end{equation}
In particular, $\theta^\D \in \operatorname{Ker} A^\D$. To prove \eqref{eq: mart}, we have from \eqref{eq: tran formula} that
$$
 1=\frac{\expec[S_T \such \F_t]}{S_t} = \exp\Big[I(T-t) + (y(T-t) - \theta^{\V}) \cdot X^{\V}_t + (z(T-t) - \theta^{\D}) \cdot X_t^{\D}\Big], \quad \forall \; t \leq T \text{ and } X_t,
$$
if and only if $I(s)=0$ and $(y(s), z(s))=(\theta^\V, \theta^\D)$ for any $s\in \Real_+$. Hence \eqref{eq: mart} is confirmed.

Now, the long run behavior of stock prices in this model follows from Corollary \ref{cor: exp moment X} directly.

\begin{prop}\label{thm: long term}
 For $\lambda \in \Real$, the following statements are equivalent:
 \begin{enumerate}
  \item[i)] $\mathbb{E}[S_T^\lambda]$ is finite for any $T\geq 0$ and $X_0 \in \Real_+^m\times \Real^n$.
  \item[ii)] $\lambda\theta \in \mathcal{S}_\infty$.
 \end{enumerate}
 When either of the above statements holds true, the asymptotical growth rate of the stock price moment is given by
 \begin{equation}\label{eq: moment growth rate}
  \lim_{T\to \infty} \frac{1}{T} \log \expec[S_T^\lambda] = \frac12 \lambda^2 (\theta^\D)^\top \pi_0 \theta^\D + \lambda \hat{b}^\D \cdot \theta^\D + \left\{\begin{array}{ll}  \hat{b}^\V \cdot \eta(\lambda \theta^\D), & \text{ if } \lambda \theta \in \mathcal{S}_\infty^\circ\\
  \hat{b}^\V \cdot \nu, & \text{ if } \lambda \theta\in  \partial \mathcal{S}_\infty \end{array}\right.,
 \end{equation}
 where $\nu$ is some unstable equilibrium point of \eqref{eq:riccati-V}.
\end{prop}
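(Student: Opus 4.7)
The plan is to reduce the proposition to a direct application of Corollary \ref{cor: exp moment X} with the exponent vector $u := \lambda \theta$. Since $S_T^\lambda = \exp(\lambda \theta^\top X_T) = \exp(u^\top X_T)$, the finiteness of $\mathbb{E}[S_T^\lambda]$ is literally the finiteness of the exponential moment of $X_T$ at $u = \lambda\theta$, and the large-time growth rate in \eqref{eq: moment growth rate} is obtained by substituting $u^\D = \lambda \theta^\D$ in \eqref{eq: exp moment growth rate}.

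To make this reduction legitimate, I first need to verify the admissibility condition for Corollary \ref{cor: exp moment X}, namely that $\lambda\theta \in \Real^m \times \operatorname{Ker} A^\D$. This follows from the martingale condition: the display in \eqref{eq: mart} (established via \eqref{eq: tran formula} in the paragraph preceding the proposition) forces $\theta^\D \in \operatorname{Ker} A^\D$, and linearity of $A^\D$ then gives $\lambda \theta^\D \in \operatorname{Ker} A^\D$ for every $\lambda \in \Real$. With this verified, Corollary \ref{cor: exp moment X}(i)--(ii) applied to $u = \lambda\theta$ delivers the equivalence of (i) and (ii) in the proposition.

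For the asymptotic formula, I would directly substitute $u = \lambda\theta$ into the right-hand side of \eqref{eq: exp moment growth rate}. The quadratic term $\tfrac12 (u^\D)^\top \pi_0 u^\D$ becomes $\tfrac12 \lambda^2 (\theta^\D)^\top \pi_0 \theta^\D$, the linear term becomes $\lambda \hat b^\D \cdot \theta^\D$, and the dichotomy between $u \in \mathcal{S}_\infty^\circ$ and $u \in \partial \mathcal{S}_\infty$ carries over verbatim to $\lambda\theta \in \mathcal{S}_\infty^\circ$ vs.\ $\lambda\theta \in \partial \mathcal{S}_\infty$, with $\eta(u^\D) = \eta(\lambda \theta^\D)$ and the unstable equilibrium $\nu$ in the boundary case inherited directly from the corollary.

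In short, I do not anticipate any real obstacle: the entire argument is a one-line substitution once the membership $\lambda \theta^\D \in \operatorname{Ker} A^\D$ is extracted from the martingale condition. The only place that requires even a moment's care is noting that the martingale property of $S$ is exactly what pins $\theta^\D$ to $\operatorname{Ker} A^\D$ (so that the hypotheses of Corollary \ref{cor: exp moment X} apply for \emph{every} $\lambda \in \Real$, not merely for $\lambda$ satisfying some additional constraint); this is already argued in the excerpt just before the proposition, so no new work is needed.
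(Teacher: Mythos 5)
Your proposal is correct and coincides with the paper's own treatment: the paper states that Proposition~\ref{thm: long term} ``follows from Corollary~\ref{cor: exp moment X} directly,'' i.e.\ by taking $u=\lambda\theta$ in Corollary~\ref{cor: exp moment X} and substituting $u^\D=\lambda\theta^\D$ into \eqref{eq: exp moment growth rate}, exactly as you do. Your extra care in checking $\lambda\theta^\D\in\operatorname{Ker} A^\D$ via \eqref{eq: mart} is the only non-trivial ingredient, and it is the one the paper also relies on (implicitly, via the discussion preceding the proposition).
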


The characterization above has implications on prices of securities with super-linear payoffs. \cite{AndersenP} discuss possible unbounded prices of securities under two-factor affine or non-affine stochastic volatility models. Moreover, \eqref{eq: moment growth rate} can be used for the large-time-to-maturity implied volatilities for European options in multi-dimensional affine models. These asymptotic formulae facilitate calibrating models to implied volatility surfaces in practice and have been obtained in \cite{Lewis} and \cite{FordeJ2} for one volatility factor models. We provide several examples of multi-dimensional volatility factor models in Section \ref{sec:examples}.

\begin{rem}
  In \cite{FordeJ2} and \cite{Keller-Ressel}, a parametric constraint was imposed when the long-term growth rate was calculated. As we will see in Section \ref{sec:examples}, this parameter constraint is equivalent to $\theta$ being a stable equilibrium point. But, \eqref{eq: moment growth rate} still holds even when $\theta$ is unstable.
\end{rem}

The characterization of blow-up regions in Theorem \ref{thm:St-bdry} ties closely to the implied volatility asymptotics at extreme strikes for European options with fixed maturities. Let us denote $\sigma^2(x, T)$ the implied volatility for a European option with strike $K$, maturity $T$ and the log-moneyness $x=\log(K/S_0)$. \cite{RLee} proved that
\begin{equation}\label{eq: moment formula}
\limsup_{x \rightarrow \infty} \frac{\sigma^2(x, T)}{|x|/T} = \varsigma(p^*), \quad \limsup_{x \rightarrow -\infty} \frac{\sigma^2(x, T)}{|x|/T} = \varsigma(q^*),
\end{equation}
where $p^* = \sup\{p\geq 0 \such \mathbb{E}[S_T^{p+1}] < \infty\}$, $q^* = \sup\{q\geq 0 \such \mathbb{E}[S_T^{-q}] < \infty\}$, and $\varsigma(x) = 2 - 4(\sqrt{x^2+x}-x)$. Here $p^*$ and $q^*$ are called \emph{critical exponents}. This result was extended later by \cite{BenaimF08}, where the limit superiors in \eqref{eq: moment formula} are replaced by limits.
These asymptotic values of implied volatilities at extreme strikes have been found to be useful for extrapolation of smile curves (see  \cite{BenaimF08}). It is then vital to calculate critical exponents of underlying models in order to apply aforementioned connections to implied volatility asymptotics.

In the model where the logarithm of the discounted stock price is $\theta^\top X$, critical exponents can be identified by looking at $\partial \mathcal{S}_T$.
Before we proceed, let us first re-define critical exponents because they depend on the initial condition $X_0$ in our multi-dimensional setting. For example, consider a case where each component of $X_0$ is zero as long as the corresponding component of $y$ blows up at $T$. Then, the blow-up time of exponential moments would not be equal to that of $y$. Thus, we set $p^*$ as follows:
\begin{eqnarray*}
p^* &:=& \sup\set{ p \such \mathbb{E}[S_T^{p+1}] < \infty, \forall\, X_0 }= \sup\set{ p \such \mathbb{E}\bra{\exp((p+1)\theta^\top X_T)} < \infty, \forall\, X_0}\\
&=& \sup\set{ p \such |y(T)| < \infty \textrm{ with } (y(0), z(0)) = (p+1)\theta}= \sup\set{ p \such T^*((p+1)\theta) > T},
\end{eqnarray*}
where the third equality follows from \eqref{eq: tran formula} and Theorem~\ref{thm:FM}, the fourth equality holds since $T^*(\cdot \theta)$ is nonincreasing (see Lemma~\ref{l:derivT} below). We also note that $p^* \geq 0$  because the martingale property of $S$ implies that $\mathbb{E}[\exp(\theta^\top X_T)] = \exp(\theta^\top X_0) < \infty$ for any $T$. Similarly, we re-define $q^*: = \sup\set{ q \such T^*(-q\theta) > T}$. Now it follows from the continuity of $T^*$ (see Lemma~\ref{lem:cont blowup time}) that $T^*((p^*+1)\theta) = T^*(-q^*\theta) = T$. Hence, the intersections of a line passing through the origin and $\theta$ with $\partial \mathcal{S}_T$ yield critical exponents.

\section{Examples}\label{sec:examples}

\subsection{Heston model}
Let us start with the Heston model which is a prominent example of two-dimensional affine stochastic volatility model. The dynamics is determined by the SDE:
\[
dY_t =\left( \left( \begin{array}{c}
            \kappa\varphi \\
            0 \end{array}\right) -
          \left(  \begin{array}{cc}
            \kappa & 0 \\
            1/2 & 0 \end{array} \right) Y_t  \right) dt +
            \sqrt{Y^1_t}\left( \begin{array}{cc}
                    \sigma & 0 \\
                    \rho & \sqrt{1 - \rho^2} \end{array}\right) dW_t,
\]
where $W_t$ is a standard two-dimensional Brownian motion. The discounted stock price is modeled by $S_t = \exp(Y^\D_t)$, the variance process is described by $Y^\V$, and the diffusion matrix is
$
a(y) = y_1\left( \begin{array}{cc}
                                        \sigma^2 & \sigma\rho \\
                                        \sigma\rho & 1\end{array}\right)
$.
Choose $\Lambda = \left( \begin{array}{cc}
                            1/\sigma^2 & 0 \\
                            -\rho/\sigma & 1 \end{array}\right)$.
The canonical version $X=\Lambda Y$ has dynamics
$dX_t = (\hat b +\hat B X_t)dt + \hat\sigma(X_t)dW_t$ with
\[
 \hat b = \left( \begin{array}{c}
            \kappa\varphi/\sigma^2\\
            - \kappa\varphi\rho/\sigma \end{array}\right), \quad
 \hat B = \left( \begin{array}{cc}
                -\kappa & 0 \\
                \kappa\rho\sigma - \sigma^2/2 & 0 \end{array}\right), \quad
 \hat a(y) = \left( \begin{array}{cc}
            1 & 0 \\
            0 & \sigma^2(1-\rho^2) \end{array}
            \right) y_1,
 \]
 and the initial condition is given by $X_0=(
            V_0/\sigma^2 ,
            -\rho V_0/\sigma + \log S_0)$ where $S_0, V_0$ are the initial stock and variance levels.
 Moreover, $\log (S_T) = (\rho\sigma, 1)\cdot X_T$, thus $\theta = (\rho\sigma, 1)^\top$.

For any initial condition $(y(0), z(0)) = (v, w)$ of \eqref{eq:riccati}, $z(t) = w$ and $y(t)$ solves $\dot y = (1/2)y^2 - \kappa y + g(w)$ where $g(w) = \sigma^2 (1-\rho^2)w^2/2 + (\kappa \rho \sigma -\sigma^2/2) w$. It is clear that $\operatorname{Ker} A^\D = \Real$.
In addition, $\fD = \set{w \such 2g(w)\leq \kappa^2}$ and $\fD^\circ = \set{w \such 2g(w) < \kappa^2}$, following the definitions in Section~\ref{subsec: stable eq}. For each $w \in \fD^\circ$, the equation for $y$ admits two equilibrium points:  $L(w) = \kappa - \sqrt{\kappa^2-2g(w)}$ and  $U(w)=\kappa + \sqrt{\kappa^2-2g(w)}$. The former is asymptotically stable and hyperbolic, and the latter is unstable and hyperbolic as the Jacobian $Df(w) = - \sqrt{\kappa^2-2g(w)}<0$ ($\sqrt{\kappa^2-2g(w)}>0$) at $L(w)$ ($U(w)$ resp.). Therefore, the stable region is $\mathcal{S}(w) = (-\infty, U(w))$ hence $\mathcal{S}_\infty^\circ = \set{(v,w) \such w \in \fD^\circ, \; v < U(w)}$. The boundary $\partial \mathcal{S}_\infty$ is readily obtained as well. These sets are illustrated in the left panel in Figure~\ref{fig:Heston Sinf} for one set of parameters.

\begin{figure}\centering
\includegraphics[width=8cm]{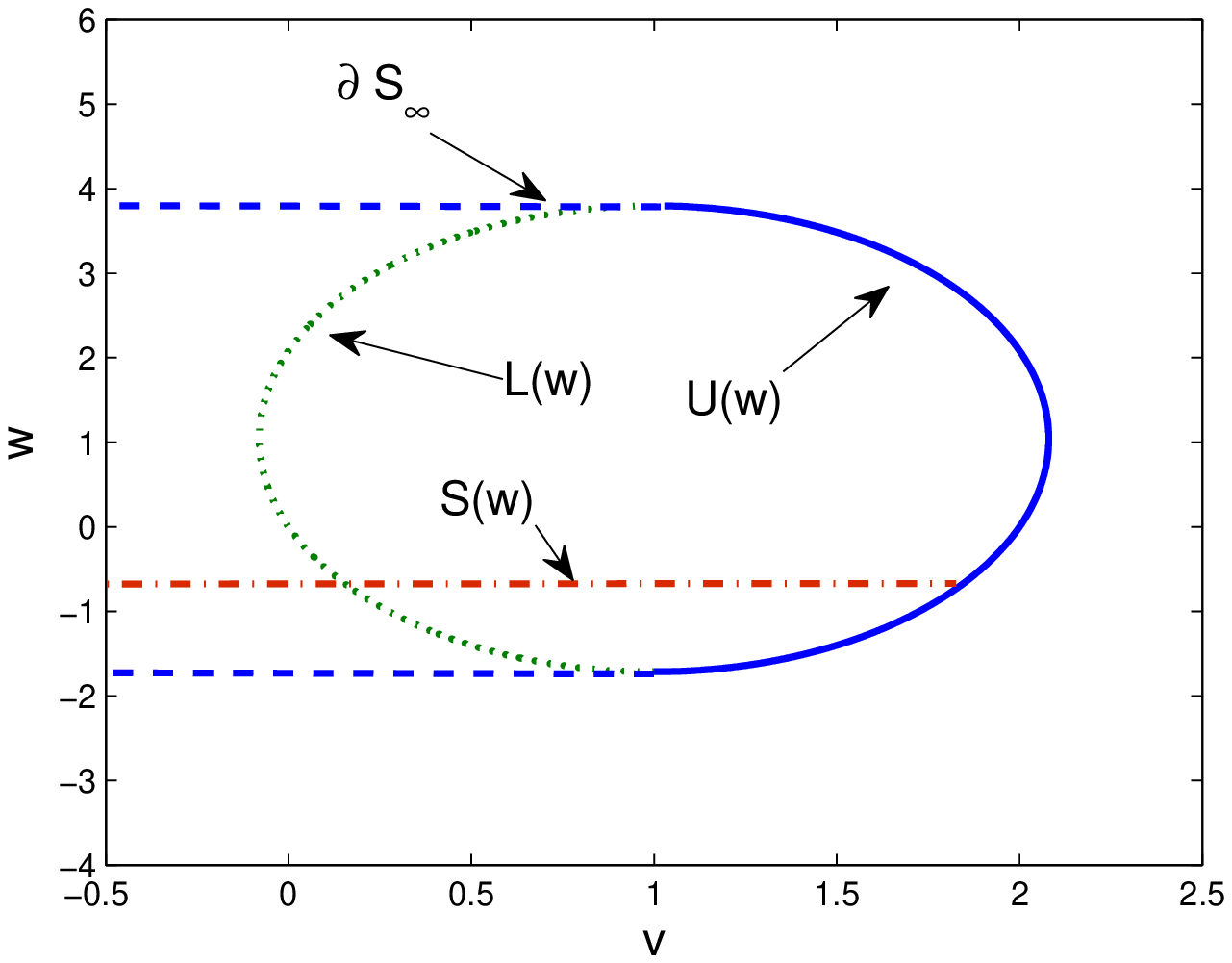}
\includegraphics[width=8cm]{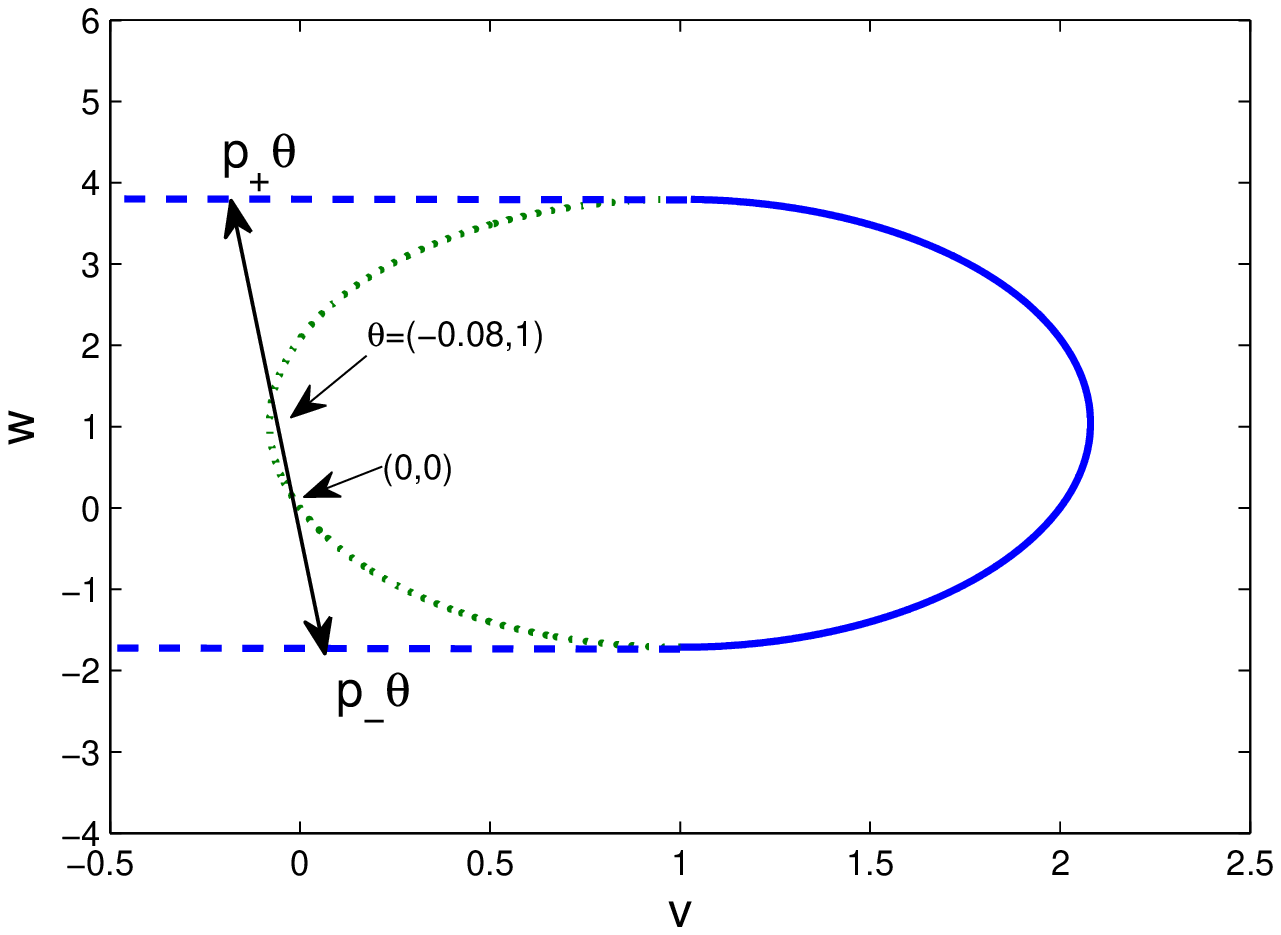}
\caption{Decomposition of $\mathcal{S}_\infty$ for the Heston model with $\kappa = 1$, $\sigma = 0.4$, and $\rho = -0.2$.}\label{fig:Heston Sinf}
\end{figure}

\begin{figure}\centering
\includegraphics[width=8cm]{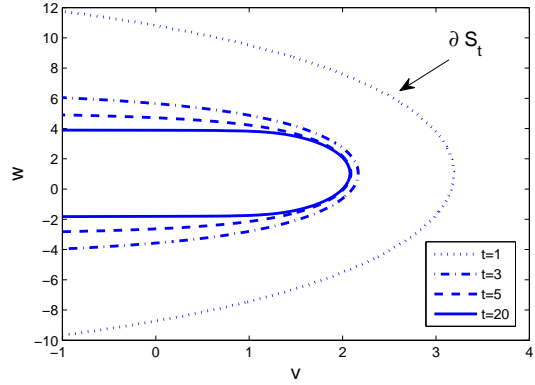}
\caption{$\partial \mathcal{S}_T$ for the Heston model with $\kappa = 1$, $\sigma = 0.4$, and $\rho = -0.2$.}\label{fig:HestonSt}
\end{figure}

Let us now comment on the parametric constraint $\kappa >\sigma \rho$ in \cite{FordeJ2}. This constraint is actually the necessary and sufficient condition for $\theta$ being a stable equilibrium point, in particular, $1\in \fD^\circ$. Indeed, recall that $\theta = (\sigma \rho, 1)^\top$ is an equilibrium point, hence $\theta \in \mathcal{S}_\infty$ and in particular $1\in \fD$. When $\kappa= \sigma \rho$, $\kappa^2-2g(1)=0$, then $1\in \partial\fD$. When $\kappa<\sigma\rho$, $U(1)= \kappa + |\kappa-\sigma \rho| = \sigma \rho$ hence $\theta\in \partial\mathcal{S}_\infty$ and unstable.

Under the parametric constraint in the last paragraph, extending $\theta$ in both directions until it reaches $\partial \mathcal{S}_\infty$, we obtain $p_{\pm} \theta$ with $p_+>1$ and $p_-<0$. This is illustrated in the right panel of Figure~\ref{fig:Heston Sinf}. Now it follows from Theorem \ref{thm: S-inf} that $\expec[S_T^p]$ is finite for any $T \geq 0$ and $p \in [p_-, p_+]$. Actually, Proposition \ref{thm: long term} implies that
\begin{equation*}
\Lambda(p):= \lim_{T\rightarrow \infty}\frac{1}{T}\log\mathbb{E}\exp\left(p\theta^\top X_T\right) = -\frac{\kappa\varphi\rho}{\sigma}p + \frac{\kappa\varphi}{\sigma^2}L(p), \quad p\in (p_-, p_+).
\end{equation*}
This result coincides with  Theorem 2.1 in \cite{FordeJ2} where the authors continue to prove the essential smoothness of $\Lambda(\cdot)$ and derive formulae for the large-time-to-maturity implied volatilities. On the other hand, in \cite{Keller-Ressel}, the author provides the same formula under the same constraint and argues that the price process gets close to a NIG L\'evy model, in terms of marginal distributions.

Figure~\ref{fig:HestonSt} shows $\partial \mathcal{S}_T$ for several different $T$ values. From this figure, we can identify critical exponents $p^*$ and $q^*$ for fixed $T$ as the first positive numbers $p$ and $q$ such that $(p+1)\theta$ and $-q\theta$ belong to $\partial \mathcal{S}_T$. We can also clearly see the convergence of $\mathcal{S}_T$ to $\mathcal{S}_\infty$ as $T \to \infty$. From the viewpoint of Theorem~\ref{thm:St-bdry}, $u \in \partial\mathcal{S}_T$ translates into the condition that $(v, 1)$ is the initial condition $x(0)$ such that $\lim_{s\uparrow \infty} x(s) = (2/T,0)$ where $x(\cdot)$ solves \eqref{eq:quadODE}. Therefore, as in Figure~\ref{fig:St-ex}, one finds $v$ such that $(v,1)$ is on the boundary of the stable set of \eqref{eq:quadODE}, yielding $(v,1) \in \partial\mathcal{S}_T$. However, we also note that the Heston model admits a closed form formula for $\partial\mathcal{S}_T$ by which implied volatilities at extreme strikes can be calculated. This theorem, nevertheless, provides one method of accomplishing the same task even when such a closed form formula is not available.

\begin{figure}\centering
  \includegraphics[width=8cm]{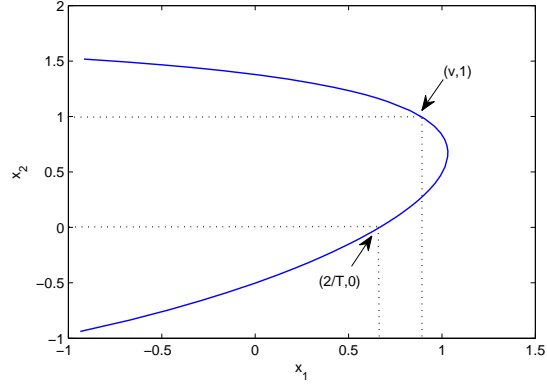}
  \caption{The stable boundary of \eqref{eq:quadODE} with $w=5$ and $T=3$.}\label{fig:St-ex}
\end{figure}

\subsection{Double stochastic volatility model}
The following 3-dimensional stochastic volatility model was proposed in \cite{Gatheral}:
\begin{eqnarray*}
dV_{t} &=& \kappa_1(V_{t}^{'}-V_{t})dt + V_{t}^{\alpha}dZ^{1}_t, \\
dV_{t}^{'} &=& \kappa_2(\varphi -V_{t}^{'})dt + V_{t}^{'\beta}dZ^{2}_t, \\
dS_t &=& S_t\sqrt{V_{t}}dZ^3_{t},
\end{eqnarray*}
where $Z_t$ is a correlated 3-dimensional Brownian motion and $\kappa_1, \kappa_2, \varphi$ are strictly positive constants. In this model, $V$ models the high-frequency  variance and $V^{'}$ represents the low-frequency variance.
For this model to be an affine diffusion, it is necessarily that $\alpha = \beta = 1/2$ and $Z^2$ is independent of $Z^1$ and $Z^3$. Denote the correlation between $Z^1$ and $Z^3$ as $\rho \in (-1,1)$. We assume that $\kappa_1 > \kappa_2$ so that the mean reverting speed of high-frequency variance is larger than its low-frequency analogue.

It is an easy matter to check that $Y_t = (V_t, V_t', \log S_t)$ satisfies
\[
dY_t = \left(\left( \begin{array}{c} 0 \\ \kappa_2\varphi \\ 0 \end{array}\right) +
      \left(  \begin{array}{ccc}
            -\kappa_1 & \kappa_1 & 0 \\
            0 & -\kappa_2 & 0 \\
            -1/2 & 0 & 0 \end{array}\right) Y_t \right)dt +
                \left( \begin{array}{ccc}
                        \sqrt{Y^1_t} & 0 & 0 \\
                        0 & \sqrt{Y^2_t} & 0 \\
                        \rho \sqrt{Y^1_t} & 0 & \sqrt{(1 - \rho^2)Y^1_t} \end{array}\right) dW_t,
\]
 where $W$ is a 3-dimensional standard Brownian motion. Now, applying an appropriate linear transform $X = \Lambda Y$, we obtain from straightforward calculations that $X$ satisfies $dX_t = (\hat{b}+\hat{B}X_{t})dt+\hat{\sigma}(X_{t})dW_{t}$ with
\[
 \hat b = \left( \begin{array}{c}
	0 \\
            \kappa_2\varphi\\
            0 \end{array}\right), \;\;
 \hat B = \left( \begin{array}{ccc}
	 -\kappa_1  & \kappa_1 & 0 \\
  	 0 & -\kappa_2 & 0 \\
	\rho\kappa_1 - 1/2 & -\rho\kappa_1 & 0 \end{array}\right), \;\;
 \hat a(x) = \left(
\begin{array}{ccc}
 x_1 & 0 & 0 \\
 0 & x_2 & 0 \\
 0 & 0 & (1 - \rho^2)x_1
\end{array}
\right),
 \]
and the initial condition is given by $X_0=(V_0, V_0', -\rho V_0 + \log S_0)$. In addition, $\log S_T = Y_T^3 = (0\; 0\; 1)\Lambda^{-1}X_T$, which implies that $\theta = (\rho, 0, 1)$.

The associated Riccati system \eqref{eq:riccati} is
\begin{equation} \label{eq:riccatimulti}
\begin{split}
&\dot y_1 = \frac12 y_1^2 - \kappa_1y_1 + g_1(z), \\
&\dot y_2 = \frac12 y_2^2 -\kappa_2 y_2 + \kappa_1 y_1 + g_2(z), 
\end{split}
\end{equation}
and $\dot z = 0$ with the initial condition $(y_1(0), y_2(0), z(0)) = (u_1, u_2, w)$. Here, $g_1(z) =  0.5(1-\rho^2) z^2 + (\rho\kappa_1 - 0.5)z$ and $g_2(z) = -\rho \kappa_1 z$. Clearly, $\operatorname{Ker} A^\D = \Real$. Consider the following sets defined in Section \ref{subsec: stable eq}:
\[
\begin{split}
&\fE(w) = \left\{(u_1, u_2) \such (u_1 - \kappa_1)^2 \leq \kappa_1^2 -2 g_1(w), \; (u_2- \kappa_2)^2 \leq \kappa_2^2 - 2 \kappa_1 u_1 - 2 g_2(w) \right\},\\
&\fD = \set{w \in \operatorname{Ker} A^\D \such \fE(w) \neq \emptyset} = \set{w\in \operatorname{Ker} A^\D \such \kappa_1^2 - 2g_1(w) \geq 0, \kappa_2^2 - 2 g_2(w) \geq 2 \kappa_1 \eta_1(w)}
\end{split}
\]
with $\eta_1(w) := \kappa_1 - \sqrt{\kappa_1^2 - 2g_1(w)}$. Then,  Proposition~\ref{thm: long term} gives us
\[
 \Lambda(p):=\lim_{T\rightarrow \infty} \frac{1}{T} \log\expec[S_T^p] = \kappa_2 \varphi \eta_2(p) \quad \text{ for } p\in \fD^\circ, \quad \eta_2(p) := \kappa_2 - \sqrt{\kappa_2^2 - 2 g_2(p) - 2\kappa_1 \eta_1(p)}.
\]

Figure~4.3 shows the sets $\partial \mathcal{S}_\infty$  and  $\partial\mathcal{S}_T$ for some specific set of parameters.
Two graphs are obtained by numerically solving \eqref{eq:riccatimulti} and \eqref{eq:quadODE}. Theorems \ref{thm: S-inf} and \ref{thm:St-bdry} help to identify $\partial \mathcal{S}_\infty$  and  $\partial\mathcal{S}_T$ numerically, because it suffices to find stable sets for unstable equilibria of their associated systems.  To be more specific, for $\partial \mathcal{S}_\infty$, we first locate all unstable equilibrium points of \eqref{eq:riccatimulti} for fixed $w$. Then for each unstable $\nu$, we solve \eqref{eq:riccatimulti} backward in time to retrieve its stable set. Finally, $\partial \mathcal{S}_\infty$ is obtained by patching all $\partial \mathcal{S}_\infty(w)$ together. The right panel of Figure~\ref{fig:doubleheston} shows $\partial \mathcal{S}_T$ for $T=0.5$ and $1$. It is produced similarly by working with \eqref{eq:quadODE}.
Even though we only show part of $\partial \mathcal{S}_T$ in the right panel, it is understood that the critical exponents $p^*+1$ and $-q^*$ are the intersections of the $w$-axis and $\partial\mathcal{S}_T$.

\begin{figure}\centering
\includegraphics[width=8cm]{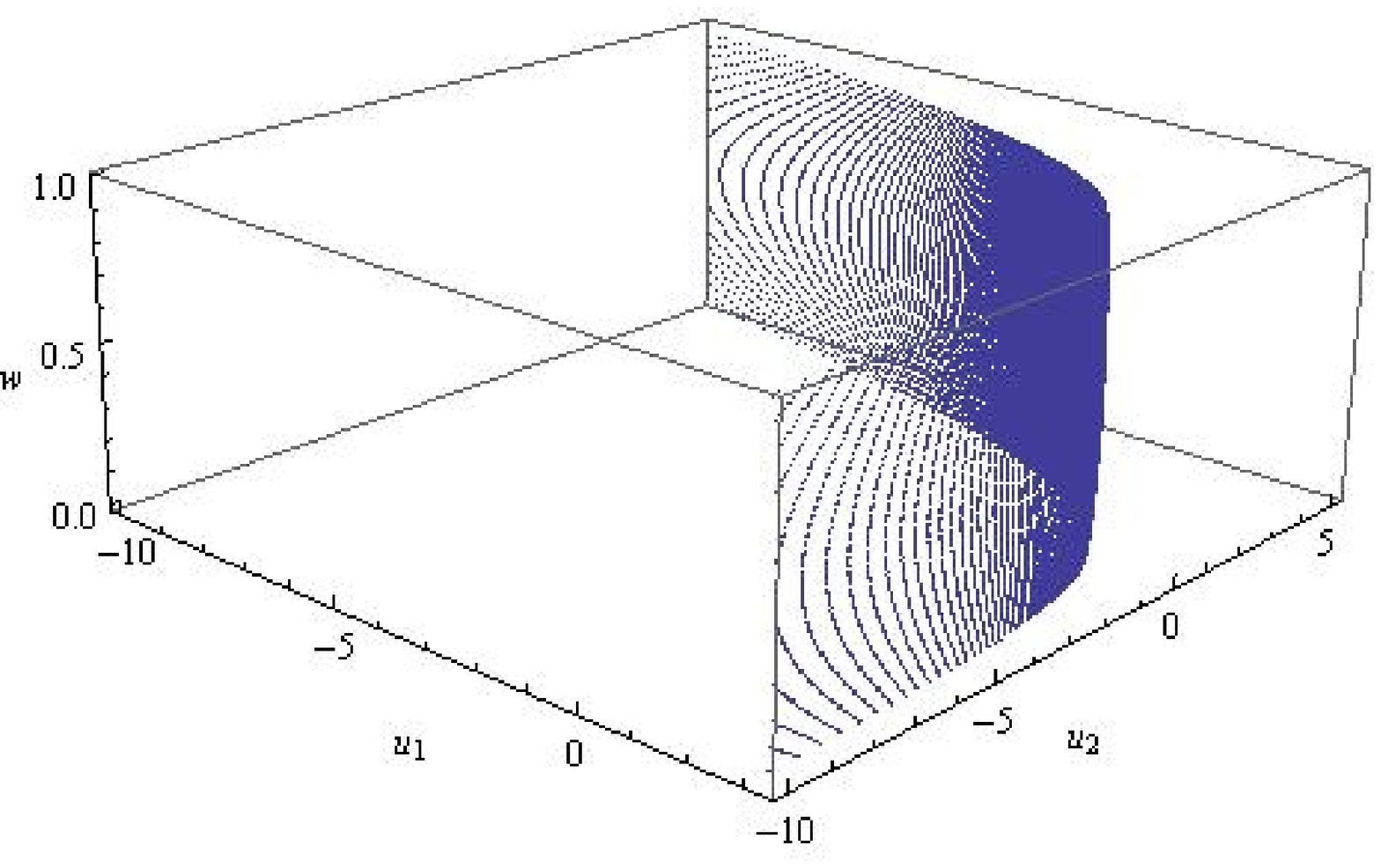}
\includegraphics[width=8cm]{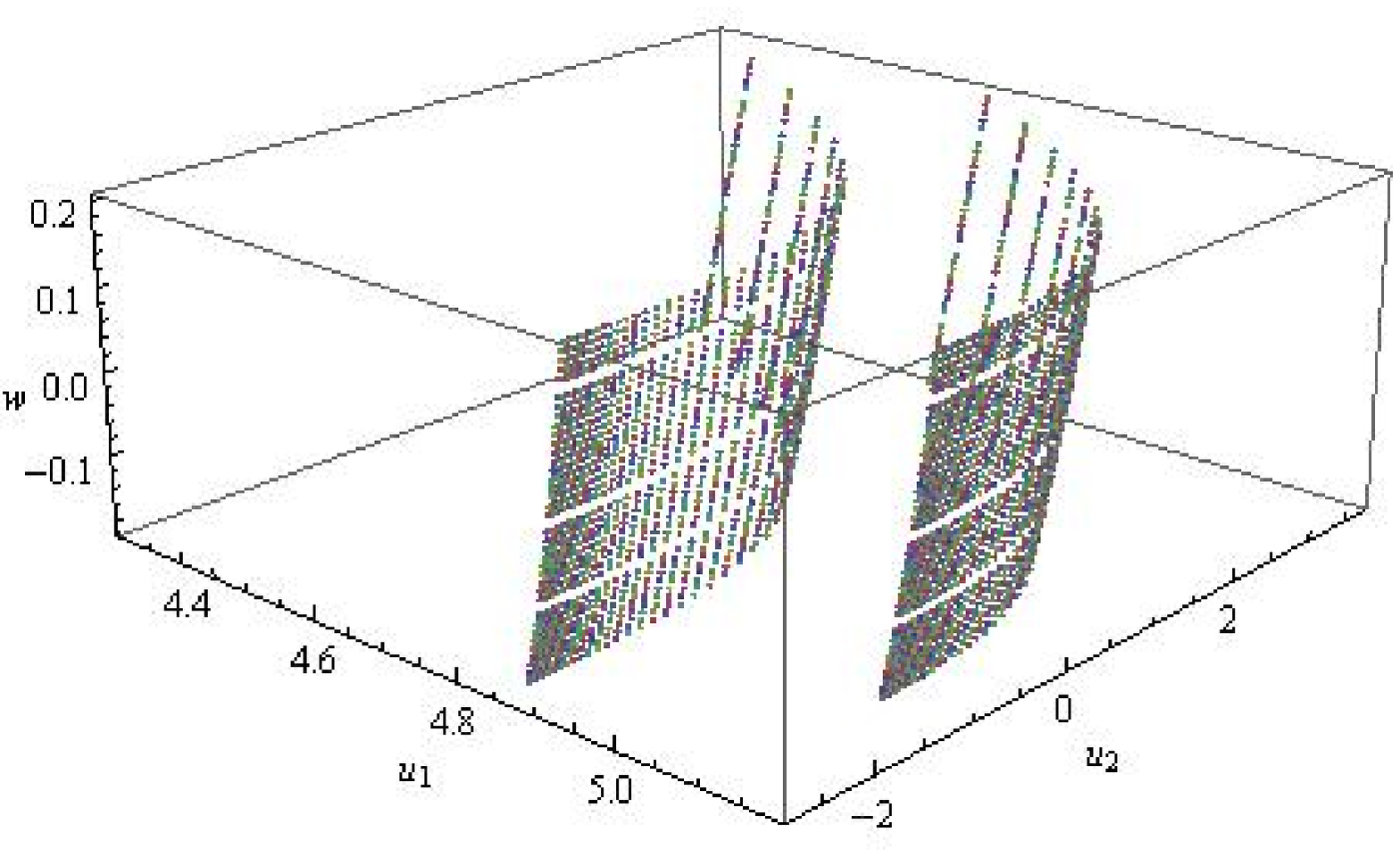}
\caption{Region $\mathcal{S}_{\infty}$ and $\mathcal{S}_T$ for $T=0.5$, $T=1$ with $\kappa_1=2, \kappa_2=1$.}\label{fig:doubleheston}
\end{figure}

To derive the large-time-to-maturity implied volatilities in this model, we need the following parameter restriction:
\[\kappa_1 >\rho.\]
As we have seen in the previous subsection, this restriction ensures that $\theta$ is a stable equilibrium point of \eqref{eq:riccatimulti}. As a result, $[0,1]\subseteq \fD^\circ$. On the other hand, one can check that $\Lambda$ is essentially smooth in $\fD$. Indeed,
\[
 \Lambda'(p) = \kappa_1 \kappa_2 \varphi \times \frac{-\rho + g_1'(p)/\sqrt{\kappa_1^2 - 2g_1(p)}}{\sqrt{\kappa_2^2 - 2g_2(p)-2\kappa_1 \eta_1(p)}},
\]
and both terms under the square roots converge to zero as $w\rightarrow \partial \fD$. Consequently, the G\"{a}rtner-Ellis Theorem applies, and thus $\{(\log S_t - \log S_0)/t\}$ satisfies the Large Deviation Principle under $\prob$ with the rate function (Legendre transform) $\Lambda^*(x) = \sup_{p\in\fD}\{xp - \Lambda(p)\}$. It then follows from Proposition 4.1.3 in \cite{Jacquier} that the large-time-to-maturity implied volatility for the European option with maturity $T$ and strike price $K(T) = S_0 \exp(xT)$ is
\begin{equation}\label{eq: sigma_inf}
 \sigma^2(x, \infty):= \lim_{T\rightarrow \infty} \sigma^2(x, T) = 2 \pare{2 \Lambda^*(x) -x + 2\pare{\indic_{\{x\in (x^*, \tilde{x}^*)\}}-\indic_{\{x\notin (x^*, \tilde{x}^*)\}}}\sqrt{\Lambda^*(x)^2- x\Lambda^*(x)}},
\end{equation}
where $x^*:= \Lambda'(0)= -\varphi/2$ and $\tilde{x}^*:= \Lambda'(1)=0.5 \kappa_1 \varphi/(\kappa_1 -\rho)$. In particular, the large-time-to-maturity implied volatility for at-the-money European option is
\[
 \sigma^2(0, \infty) = 8 \Lambda^*(0) = -8 \Lambda(p_0), \quad  p_0 = \frac{1-2\rho \kappa_1 + |\rho|\sqrt{1+4\kappa_1^2 - 4\rho \kappa_1}}{2(1-\rho^2)}.
\]
Here, $p_0$ is chosen so that $\Lambda'(p_0)=0$. On the other hand, we can also obtain the leading order expansion of $\sigma(x,\infty)$ when $x$ is close to $0$. This leading order expansion provides us information on the implied volatility asymptotics for fixed strikes. This is because one can choose $x= T^{-1}\log (K/S_0)$ for the fixed strike $K$. When $T$ is large, $x$ is close to zero. To obtain this expansion, one first observes that
$\Lambda^*(x) = \Lambda^*(0) + x (\Lambda^*)'(0) + o(x) = \Lambda^*(0) + p_0 x + o(x)$ where the second identity follows from $(\Lambda^*)'(0)= (\Lambda')^{-1}(0)=p_0$. Plugging the previous expansion into \eqref{eq: sigma_inf}, we get
\[
 \sigma^2(x,\infty) = 8 \Lambda^*(0) + 2(2p_0-1) \sqrt{x} + o(\sqrt{x}).
\]

\subsection{Cascading affine diffusions}
To take the multi-frequency aspect of interest rates into account, \cite{Wu10} consider a model in which the interest rate has the dynamics that depends on several latent variables with high to low frequencies. The authors also suggest a multi-frequency stochastic volatility model for equity option pricing. In this subsection, we consider a specific form of such cascading volatility models which are also affine diffusions.

In this model, $Y^\D$ has no restriction other than the admissibility constraints on its parameters, while $Y^\V$ follows
\begin{equation*}\label{eq:cascading}
\begin{split}
& dY^i_t = \kappa \delta^{i-1}\left( Y^{i+1}_t - Y^i_t\right) dt + \sigma \sqrt{Y^i_t}dW^i_t, \quad i=1, \ldots, m-1\\
& dY^m_t = \kappa\delta^{m-1}\left( \varphi - Y^m_t\right) dt + \sigma \sqrt{Y^m_t}dW^m_t
\end{split}
\end{equation*}
with $0 <\delta < 1$ and positive constants $\kappa$, $\sigma$, and $\varphi$. This model proposes that the volatility process $Y^1$ depends on many latent variables that have slower mean reversion speeds. Then, the process $X = \Lambda Y$ with
$
\Lambda = \left( \begin{array}{cc}
                \sigma^{-2} I_m & 0 \\
                0 & I_n \end{array}\right)
$
makes $X$
a canonical version of
$Y$. The associated Riccati system \eqref{eq:riccati} reads
\[
\dot y_i = \frac{1}{2} y_i^2  + \kappa\left(\delta^{i-2}y_{i-1}\indic_{i>1} - \delta^{i-1}y_i\right) + \frac{1}{2}{z}^\top \pi_i z + \left(A^\C z\right)_i, \quad i=1,\ldots, m,
\]
and $\dot z = A^\D z$ with the initial condition $(y(0), z(0)) = u \in \mathbb{R}^m \times \operatorname{Ker} A^\D$.

Let us consider a simple case where $A^\D$ is invertible. Then, we easily see that
$\operatorname{Ker} A^\D = \{0_n\}$ with $n$-dimensional zero vector $0_n$ and that the equilibrium points are the origin, i.e., $\eta(0_n) = 0\in \Real^{m}$, and
$\nu = (0, \ldots, 0, 2\kappa\delta^{m-1})$. Moreover, the Jacobian matrices of equilibrium points are given by
\[
\left\{ \begin{array}{l}
    J(0)_{i, i} = - \delta^{i-1}\\
    J(0)_{i+1, i} = \delta^{i-1}
    \end{array}\right.
, \quad
\left\{ \begin{array}{l}
    J(\eta)_{i, i} = - \delta^{i-1}\indic_{i < m} + \delta^{m-1}\indic_{i=m} \\
    J(\eta)_{i+1, i} = \delta^{i-1}
    \end{array}\right..
\]
Therefore, $\nu$ is a hyperbolic equilibrium point and its stable submanifold $W^s_\nu$ has dimension $m-1$. Theorem~\ref{thm: S-inf}, then, implies that $\mathcal{S}_\infty^\circ = \mathcal{S}(0_n)\times\{0_n\}$ and
$\partial \mathcal{S}_\infty = W^s_\nu(0_n) \times \{0_n\}$. As a consequence, we obtain from Corollary~\ref{cor: exp moment X} that
$$
\lim_{T \rightarrow \infty}\frac{1}{T}\log \mathbb{E}\left[\exp\left(u^\top X_T\right)\right]
= 2\varphi\left(\frac{\kappa \delta^{m-1}}{\sigma}\right)^2
$$
for $u \in \partial \mathcal{S}_\infty$.
As for the blow-up region, we have
$\partial \mathcal{S}_T = \left(\bigcup_{\nu \neq 0} W^s_\nu(0, T)\right) \times \{0\}$ where $\nu$ is any $m$-dimensional vector whose entries are either 0 or $2/T$.

Lastly, let us consider an equity model based on cascading affine diffusions. We just add one last process as follows:
$$
dY^{m+1}_t = -\frac{1}{2}Y^1_t dt + \sqrt{Y^1_t}dW^{m+1}_t
$$
with a standard Brownian motion independent of $(W^1, \ldots, W^m)$, and set $S_t = \exp(Y^{m+1}_t)$ so that $\mathbb{E}S_t^p = \mathbb{E}\exp(p\theta\cdot X_t)$ where $\theta = (0, \ldots, 0, 1)$. This can be understood as a general version of the Heston model but with no correlation between Brownian parts for the sake of simplicity of exposition. The associated Riccati system can be calculated accordingly.
From Proposition~\ref{thm: long term}, we have $\Lambda(p):=\lim_T T^{-1}\log\mathbb{E}S_T^p = \kappa\varphi\delta^{m-1}\eta_m(p)/\sigma^2$ if $p\theta \in \mathcal{S}_\infty^\circ$ and $\nu_m$ if $p\theta\in \partial\mathcal{S}_\infty$ where $\nu$ is some unstable equilibrium point. The stable equilibrium point of the above quadratic system can be found iteratively as follows:
\begin{eqnarray*}
  \eta_1(p) &=& \kappa - \sqrt{\kappa^2 - \sigma^2 p(p-1)}, \\
  \eta_i(p) &=& \kappa\delta^{i-1} - \sqrt{\kappa^2\delta^{2(i-1)} - 2\kappa\delta^{i-2}\eta_{i-1}(p)}, \quad i=2, \ldots,m
\end{eqnarray*}
where $p$ belongs to some interval, say $[a, b]$, so that all the square root terms are well-defined. Such conditions read $p \in [p_-,p_+]$ with $p_\pm = \left(\sigma \pm \sqrt{\sigma^2 + 4\kappa^2}\right)/(2\sigma)$, and $\eta_{i-1}(p) \leq \kappa\delta^i/2$ for $i=2,\ldots, m$. It is useful to check that this interval gets strictly smaller as $i$ increases, which we leave it as a simple exercise. Moreover, we get
$$
\Lambda'(p) = \frac{\kappa\varphi\delta^{m-1}}{\sigma^2}\times \frac{\kappa\delta^{m-2}\eta_{m-1}'(p)}{\sqrt{\kappa^2\delta^{2(m-1)}-2\kappa\delta^{m-2}\eta_{m-1}(p)}}.
$$
At the boundary points of $[a, b]$, the square root term converges to zero. Hence, $|\Lambda'(p)|\rightarrow \infty$ and thus $\Lambda(p)$ is essentially smooth. By following the same arguments as in the previous subsection, we can obtain the implied volatility asymptotic formula at large-time-to-maturities.

\section{Analysis of the long-term behavior}\label{sec:moment}
The long-term distributional properties of affine processes are determined by the long-term behaviors of solutions for the associated Riccati system. Therefore, we shall first focus on equilibrium analysis of \eqref{eq:riccati} and prove Lemma~\ref{lem: eq point summary} in Section~\ref{subsec: stable eq} and then characterize the long-term behavior of its solutions in Section~\ref{subsec: stable reg}. Finally, Theorem \ref{thm: S-inf} and Corollary \ref{cor: exp moment X} are proven at the end of this section.

\subsection{Stable equilibrium points}\label{subsec: stable eq}
Let us start with some definitions inspired by \cite{Keller-Ressel}. Define the following two sets:
\[
\fE := \set{u \in \Real^d \such f_i(u)\leq 0, \forall\, 1\leq i \leq m}\quad \text{ and } \quad \fE^\circ := \set{u\in \Real^d \such f_i(u)< 0, \forall\, 1\leq i \leq m}.
\]
They are sets of points on which all components of $f$ are simultaneously (strictly) negative.
It follows from the continuity of $f$ that $\fE$ ($\fE^\circ$) is closed (open), respectively. Moreover, they are convex thanks to the convexity of $f$. It is also clear that $\fE$ is nonempty, since $0\in \fE$. Given $w \in \operatorname{Ker} A^{\D}$, we define sections of $\fE$ and $\fE^\circ$ as
$\fE(w) := \set{v\in \Real^m \such (v, w) \in \fE}$ and $\fE^\circ(w) := \set{v\in \Real^m \such (v, w) \in \fE^\circ}$.
To identify all stable equilibrium points for \eqref{eq:riccati-V}, we define
\begin{equation}\label{def: D}
  \fD :=\set{w \in \operatorname{Ker} A^{\D} \such \fE(w) \neq \emptyset} \quad \text{ and } \quad \fD^\circ := \set{w \in \operatorname{Ker} A^{\D} \such \fE^\circ(w) \neq \emptyset}.
\end{equation}
It will be shown in Lemma \ref{lem: dense in fD} blow that $\fD^\circ$ is indeed the interior of $\fD$. The first result below identifies the candidate stable equilibrium point for \eqref{eq:riccati-V}.

\begin{lem}\label{lem: eq point}
 Given $w \in \fD$, $\eta(w)$, defined in \eqref{eq: def eta}, is inside $\fE$. Moreover, $f(\eta(w), w) = 0$ and $\eta(w) \leq v$ for any $v \in \fE(w)$.
\end{lem}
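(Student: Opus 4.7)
The plan is to prove the two claims sequentially: first verify by a direct substitution that $f(\eta(w),w)=0$ for any $w\in\fD$ at which $\eta(w)$ is real-valued, then establish the minimality statement $\eta(w)\leq v$ for all $v\in\fE(w)$ by induction on the index $i$, using along the way that the induction automatically proves the realness of $\eta_i(w)$.

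For the identity $f(\eta(w),w)=0$, recall that
\[ f_i(\eta(w),w) = \tfrac{1}{2}\eta_i(w)^2 \indic_{i\in\I} + A_{ii}\eta_i(w) + \sum_{k=1}^{i-1} A_{ik}\eta_k(w) + g_i(w). \]
For $i\in\I$, writing $c_i := \sum_{k=1}^{i-1} A_{ik}\eta_k(w) + g_i(w)$, the definition of $\eta_i(w)$ as the smaller root of $\tfrac{1}{2}x^2 + A_{ii}x + c_i = 0$ (observe that $-A_{ii}\pm\sqrt{A_{ii}^2-2c_i}$ are the two roots, and the $-$ choice gives $\eta_i(w)$) immediately gives $f_i(\eta(w),w)=0$. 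For $i\notin\I$, the formula $\eta_i(w) = -A_{ii}^{-1}c_i$ is precisely the root of $A_{ii}x+c_i=0$, so again $f_i(\eta(w),w)=0$. Both computations only require that the square root in the definition is real, which is the point handled in the induction below.

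For the minimality and realness, fix $v\in\fE(w)$, which exists because $w\in\fD$. I will show by induction on $i\in\{1,\dots,m\}$ that $\eta_i(w)$ is well-defined (the discriminant is nonnegative when $i\in\I$) and $\eta_i(w)\leq v_i$. The key structural ingredients are that $A^{\V}$ is lower triangular (so $f_i$ depends only on $y_1,\dots,y_i$), that its diagonal entries $A_{ii}$ are strictly negative, and that its strict lower-triangular entries $A_{ik}$, $k<i$, are nonnegative (this is Assumption \ref{ass: B^v upper-tri}, transposed). For the base step $i=1$, if $1\in\I$ then $f_1(v,w)\leq 0$ means the quadratic $\tfrac{1}{2}x^2+A_{11}x+g_1(w)$ takes a nonpositive value at $x=v_1$, so its discriminant $A_{11}^2-2g_1(w)$ is nonnegative (hence $\eta_1(w)$ is real), and $v_1$ lies between its two roots; in particular $v_1\geq\eta_1(w)$, the smaller root. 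If $1\notin\I$, then $A_{11}v_1+g_1(w)\leq 0$ combined with $A_{11}<0$ yields $v_1\geq -A_{11}^{-1}g_1(w)=\eta_1(w)$.

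For the inductive step, assume $\eta_k(w)\leq v_k$ for $k=1,\dots,i-1$. Since $A_{ik}\geq 0$ for $k<i$,
\[ \sum_{k=1}^{i-1} A_{ik}\eta_k(w) \;\leq\; \sum_{k=1}^{i-1} A_{ik} v_k. \]
Combining this with $f_i(v,w)\leq 0$ gives $\tfrac{1}{2}v_i^2\indic_{i\in\I} + A_{ii}v_i + c_i\leq 0$, where $c_i$ is as above. In the case $i\in\I$, this implies $A_{ii}^2-2c_i\geq 0$, so $\eta_i(w)$ is real; moreover $v_i$ sits between the two roots $-A_{ii}\pm\sqrt{A_{ii}^2-2c_i}$, and in particular $v_i\geq\eta_i(w)$. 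In the case $i\notin\I$, $A_{ii}v_i+c_i\leq 0$ and $A_{ii}<0$ give $v_i\geq -A_{ii}^{-1}c_i=\eta_i(w)$. This completes the induction, yielding both $\eta(w)\in\fE(w)\subset\fE$ (since $f(\eta(w),w)=0$) and the componentwise minimality. The only mildly delicate step is keeping the signs straight in the quadratic argument and noting that the realness of $\eta_i(w)$ is not assumed a priori but follows inductively from the existence of some $v\in\fE(w)$.
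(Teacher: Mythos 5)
Your proof is correct and follows essentially the same approach as the paper's: induction on the index $i$, exploiting the lower-triangular structure of $A^{\V}$, the strictly negative diagonals, and the nonnegative off-diagonals to compare $\eta_i(w)$ with $v_i$ in both the linear ($i\notin\I$) and quadratic ($i\in\I$) cases. The only cosmetic difference is that you separate the verification $f(\eta(w),w)=0$ into a preliminary direct computation (conditional on realness of the square root, which the induction then supplies), whereas the paper folds that observation into the construction; this does not change the substance.
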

\begin{proof}
 We utilize the lower triangular shape of $A^{\V}$ and prove the statement by induction on $i$. This type of argument will be used repeatedly in our analysis.

 For $i=1$, if $i\notin \I$, $f_1$ is linear with slope $A_{11}<0$, then $\eta_1(w)$ is chosen as the solution to $f_1(\cdot, w)=0$. Clearly $\eta_1(w)\leq v_1$ for any $v\in \fE(w)$. If $i \in \I$, the quadratic equation $f_1(\cdot, w)=0$ has solution(s) because the graph of $f_1(\cdot, w)$ has a nonempty intersection with $\mathbb{R}\times \mathbb{R}_-$. Then $\eta_1(w)$ is chosen as the smaller of the two solutions (possibly the same) to the previous quadratic equation. It is also clear that $\eta_1(w)\leq v_1$ for any $v\in \fE(w)$.

 Suppose now that the statement holds for $k=1, \cdots, i-1$. If $i\notin \I$, it then follows from $A_{ii}<0$ and $A_{ik}\geq 0$ for $k\neq i$ that, for any $v\in \fE(w)$,
 $
 \eta_i(w) = -A_{ii}^{-1}\left( \sum_{k=1}^{i-1}A_{ik}\eta_k(w) + g_i(w)\right) \leq -A_{ii}^{-1}\left( \sum_{k=1}^{i-1}A_{ik}v_k + g_i(w)\right) \leq v_i,
 $
 where the second inequality holds since $f_i(v,w)\leq 0$. Now if $i\in \I$, notice that
 $
 v_i^2/2 + A_{ii}v_i + \sum_{k=1}^{i-1}A_{ik}\eta_k(w) + g_i(w) \leq v_i^2/2 + A_{ii}v_i + \sum_{k=1}^{i-1}A_{ik}v_k + g_i(w)\leq 0
 $, for any $v\in \fE(w)$. Then $\eta_i(w)$, which is the smaller root of the quadratic function in $v_i$ on the left side of above inequalities, must be less or equal to $v_i$. Hence the induction step is proved. Finally, $\eta(w)\in \fE(w)$ is clear from the construction of $\eta(w)$.
\end{proof}

Note that the Jacobian of $f$ at $\eta(w)$ is $A^{\V}+ diag_{\I} (\eta(w))$. A straightforward induction on the index $i$ shows that $w\in \fD^\circ$ if only only if $A^{\V} + diag_{\I}(\eta(w))$ has strictly negative diagonals. In order to show that $\eta(w)$ is indeed the unique asymptotically stable equilibrium point, let us present some topological properties of $\fD$ and $\fD^\circ$ in the next two lemmas.
\begin{lem}\label{lem: top prop}
The set $\fD$ ($\fD^\circ$) is a nonempty closed (open) convex subset of $\operatorname{Ker} A^{\D}$. Also, $\eta$ is a convex function on $\fD$ and $\eta\in C(\fD) \cap C^2(\fD^\circ)$.
\end{lem}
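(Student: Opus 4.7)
\emph{Convexity, openness, nonemptiness.} First I would observe that each $f_i(v,w)$ is jointly convex in $(v,w)$: the term $\tfrac12 v_i^2\indic_{i\in\I}$ is convex in $v$, $\sum_{k=1}^i A_{ik}v_k$ is affine, and $g_i(w)=\tfrac12 w^\top \pi_i w+(A^\C w)_i$ is convex in $w$ because $\pi_i\succeq 0$. Hence $\fE=\bigcap_i\{f_i\leq 0\}$ is closed convex and $\fE^\circ=\bigcap_i\{f_i<0\}$ is open convex. Projection along the $v$-coordinate is linear and open, so its images $\fD$ and $\fD^\circ$ in $\operatorname{Ker} A^\D$ are convex, with $\fD^\circ$ open. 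For nonemptiness I would take $w=0$ and set $v=\epsilon u$, where $u>0$ is chosen so that $A^\V u<0$ (such a $u$ exists because $-A^\V$ is a nonsingular M-matrix); then $f_i(\epsilon u,0)=\tfrac12\epsilon^2 u_i^2\indic_{i\in\I}+\epsilon(A^\V u)_i<0$ for every $i$ once $\epsilon$ is small enough, so $0\in\fD^\circ$.

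\emph{Continuity and convexity of $\eta$, closedness of $\fD$.} Using the lower-triangular shape of $A^\V$, formula \eqref{eq: def eta} reads as a recursion that builds $\eta_i(w)$ from $\eta_1(w),\ldots,\eta_{i-1}(w)$ and $g_i(w)$, so I would prove $\eta\in C(\fD)$ and componentwise convexity of $\eta$ simultaneously by induction on $i$. If $i\notin\I$, $\eta_i$ is a nonnegative linear combination (note $-A_{ii}^{-1}>0$ and $A_{ik}\geq 0$ for $k\neq i$) of the continuous convex functions $\eta_1,\ldots,\eta_{i-1},g_i$, so it inherits both properties. If $i\in\I$, write $\eta_i(w)=-A_{ii}-\sqrt{h_i(w)}$ with
\[ h_i(w):=A_{ii}^2-2\Bigl(\sum_{k<i}A_{ik}\eta_k(w)+g_i(w)\Bigr); \]
the induction hypothesis makes $h_i$ concave, and Lemma~\ref{lem: eq point} forces $h_i\geq 0$ on $\fD$, so $\sqrt{h_i}$ is concave and continuous, hence $\eta_i$ is convex and continuous. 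Closedness of $\fD$ drops out as a by-product: for $w_n\in\fD$ with $w_n\to w^*$, continuity of the already-constructed $\eta_k$ and of $g_i$ keeps every square-root argument nonnegative in the limit, so $\eta(w^*)$ is well defined, $f(\eta(w^*),w^*)=0$, and $w^*\in\fD$.

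\emph{$C^2$ regularity.} As noted in the excerpt just after Lemma~\ref{lem: eq point}, $w\in\fD^\circ$ is equivalent to the diagonal of $A^\V+diag_\I(\eta(w))$ being strictly negative, which in turn is equivalent to $h_i(w)>0$ for every $i\in\I$. So each square root in \eqref{eq: def eta} has a strictly positive argument on $\fD^\circ$ and is smooth there; combined with the induction hypothesis $\eta_k\in C^2(\fD^\circ)$ for $k<i$, this yields $\eta_i\in C^2(\fD^\circ)$.

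The only delicate step is the convexity induction for $i\in\I$, which requires chaining three observations: nonnegativity of the coefficients $A_{ik}$ for $k\neq i$ makes $\sum_{k<i}A_{ik}\eta_k+g_i$ convex, hence $h_i$ concave; and $-\sqrt{\cdot}$ applied to a nonnegative concave function is convex because $\sqrt{\cdot}$ is concave and nondecreasing. Everything else is bookkeeping enabled by the triangular shape of $A^\V$.
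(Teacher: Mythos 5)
Your proof is correct, and in two places it takes a genuinely cleaner route than the paper. For convexity and openness of $\fD$ and $\fD^\circ$, you observe that $\fE$ is closed convex and $\fE^\circ$ is open convex (joint convexity of each $f_i$) and then invoke the fact that the linear projection of $\Real^m\times\operatorname{Ker} A^\D$ onto $\operatorname{Ker} A^\D$ is convexity-preserving and open. The paper instead proves these two statements directly: for $\fD$ it applies convexity of $f$ to the pair $(\eta(w),w),(\eta(\tilde w),\tilde w)$ to show $\fE(\hat w)\neq\emptyset$, and for $\fD^\circ$ it uses convexity of $\eta$ together with the equivalence $w\in\fD^\circ\Leftrightarrow A^\V+diag_\I(\eta(w))$ has strictly negative diagonals. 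Your version is more abstract and shorter; the paper's version has the advantage of not needing the (true but worth stating) fact that a surjective linear map is open, and it re-uses the diagonal-of-the-Jacobian criterion that is needed elsewhere anyway. For nonemptiness you exhibit an explicit point $\epsilon u\in\fE^\circ(0)$ by choosing $u>0$ with $A^\V u<0$ (valid by the M-matrix characterization), whereas the paper shows $\eta(0)=0$ from $v\geq-\tfrac12(A^\V)^{-1}v^{(2)}_\I\geq 0$ for $v\in\fE(0)$ and then checks the Jacobian at $0$ is $A^\V$; both are short and use the same M-matrix machinery in different guises. The remaining parts of your argument — the induction establishing componentwise convexity and continuity of $\eta$ via concavity of the discriminant $h_i$, closedness of $\fD$ as a by-product, and $C^2$ regularity on $\fD^\circ$ from strict positivity of $h_i$ — coincide with the paper's, which is similarly terse at the point where you extend $\eta_k$ continuously to the limit $w^*$ before knowing $w^*\in\fD$.
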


\begin{proof}
 Let us first show $0\in \fD^\circ$. Indeed, for any $v\in \fE(0)$, $f(v,0) = \frac12 v^{(2)}_{\I} + A^{\V} v \leq 0$, and thus $v \geq - (1/2)(A^{\V})^{-1} v^{(2)}_{\I} \geq 0$, where the last two inequalities follow from the nonsingular M-matrix property of $-A^{\V}$ (see Definition~\ref{def: M-matrix}). Hence, Lemma~\ref{lem: eq point} implies that $\eta(0)$ must be $0$. Notice that the Jacobian of $f(\cdot, 0)$ at $0$ is $A^{\V}$, which has strictly negative diagonals. Therefore, $0 \in \fD^\circ \subset \fD$.

 For the convexity of $\fD$, it suffices to show that $\hat w:= \lambda w+ (1-\lambda) \tilde{w} \in \fD$ for any $w,\tilde{w}\in \fD$ and $\lambda\in [0,1]$. Actually, since $f(\eta(w), w) = f(\eta(\tilde{w}), \tilde{w})=0$, it then follows from the convexity of $f$ that $f(\lambda \eta(w)+(1-\lambda)\eta(\tilde{w}), \hat w)\leq 0$. This implies $\fE(\hat w) \neq \emptyset$, hence $\hat w \in \fD$. Therefore we get $\eta(\hat w)$ from Lemma~\ref{lem: eq point} and $\eta(\hat w) \leq \lambda \eta(w)+(1-\lambda)\eta(\tilde{w})$ as $\eta(\hat w)$ is the componentwise minimum of $\fE(\hat w)$.

 The convexity of $\fD^\circ$ is confirmed once we show $\hat{w}\in \fD^\circ$ for any $w$, $\tilde{w}\in \fD^\circ$.
 When $w$ and $\tilde{w}$ are in $\fD^\circ$, both $A^\V+ diag_\I(\eta(w))$ and $A^\V + diag_\I(\eta(\tilde{w}))$ have strictly negative diagonals. We deduce from the convexity of $\eta$ in $\fD$ that $A^\V + diag_\I(\eta(\hat w)) \leq \lambda \left(A^\V + diag_\I(\eta(w)\right) + (1-\lambda)\left(A^\V + diag_\I(\eta(\tilde w))\right)$. Strictly negative diagonal entries of the matrices on the right hand side imply that those of $A^\V+ diag_\I(\eta(\hat w))$ are also negative, which in turn implies $\hat w \in \fD^\circ$.

 That $\eta$ is continuous is immediate from its construction in Lemma~\ref{lem: eq point}. Note that the square root term in $\eta(w)$ is nonzero for any $w\in \fD^\circ$, hence $\eta\in C^2(\fD^\circ)$ follows. The continuity of $\eta$ implies that $\fD$ is closed. To see this, let us take a sequence $\{w_n\} \subset \fD$ such that $\lim_n w_n = w$. Apparently, $w \in \operatorname{Ker} A^\D$. Since $\eta$ is continuous,
 $\lim_{n\rightarrow\infty} \eta(w_n) = \eta(w)$ which is defined in \eqref{eq: def eta},
 resulting in $\fE(w) \neq \emptyset$ and thus $w \in \fD$.
 The openness of $\fD^\circ$ is obvious from the continuity of $f$.
\end{proof}

\begin{lem}\label{lem: dense in fD}
 If $\fE(w) \neq \emptyset$, then $\fE^\circ(\lambda w) \neq \emptyset$ for any $\lambda\in [0,1)$. Therefore, $\fD^\circ$ is the interior of $\fD$.
\end{lem}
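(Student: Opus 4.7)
The key observation is that each $f_i(y,z) = \tfrac12 y_i^2\indic_{i\in\I} + \sum_{k\leq i}A_{ik}y_k + g_i(z)$ is jointly convex in $(y,z)$: the $y$-part is affine plus a non-negative one-variable square, while $g_i(z) = \tfrac12 z^\top \pi_i z + (A^\C z)_i$ is convex in $z$ because $\pi_i\succeq 0$. The whole statement will then reduce to combining this convexity with a \emph{strictly feasible base point} $v^\ast\in\fE^\circ(0)$.

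To produce such a $v^\ast$, I invoke the M-matrix property of $-A^\V$ (Assumption~\ref{ass: B^v upper-tri}) to obtain $\xi\in\Real^m$ with all entries strictly positive and $A^\V\xi<0$ componentwise. For $\delta>0$ small,
\[
f_i(\delta\xi,0) \;=\; \delta(A^\V\xi)_i + \tfrac12\delta^2\xi_i^2\indic_{i\in\I} \;<\; 0 \qquad \text{for every }i,
\]
so $v^\ast := \delta\xi\in\fE^\circ(0)$. Next, given $v\in\fE(w)$ and $\lambda\in[0,1)$, I set $v' := \lambda v + (1-\lambda)v^\ast$ and apply joint convexity of $f_i$:
\[
f_i(v',\lambda w) \;\leq\; \lambda f_i(v,w) + (1-\lambda)f_i(v^\ast,0) \;<\; 0,
\]
the strict inequality coming from $\lambda f_i(v,w)\leq 0$ together with $(1-\lambda)f_i(v^\ast,0)<0$. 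Hence $v'\in\fE^\circ(\lambda w)$, giving the first assertion.

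For the ``therefore'' clause, Lemma~\ref{lem: top prop} already provides openness of $\fD^\circ$ in $\operatorname{Ker}A^\D$ and $\fD^\circ\subset\fD$, so $\fD^\circ\subset\operatorname{int}\fD$. For the reverse inclusion, if $w\in\operatorname{int}\fD$ is non-zero then, since $\fD$ is convex and contains $0$, I can extend the segment $[0,w]$ slightly past $w$ inside $\fD$ to produce $\mu>1$ with $\mu w\in\fD$; writing $w=\lambda(\mu w)$ with $\lambda=1/\mu\in[0,1)$ and invoking the first assertion yields $\fE^\circ(w)\neq\emptyset$, hence $w\in\fD^\circ$. The case $w=0$ is recovered by taking $v=0\in\fE(0)$ and $\lambda=0$ in the first assertion. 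The only delicate moment in the whole argument is the manufacture of the strictly feasible $v^\ast$, which is handled cleanly by the M-matrix property already exploited elsewhere in the paper; I do not foresee any other obstacle.
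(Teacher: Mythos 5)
Your proof is correct but takes a genuinely different route from the paper's. The paper argues by induction on $i$: it examines the discriminant $\Delta_i(\lambda)$ of the quadratic $f_i(\cdot,\lambda w)$, uses the convexity of $\eta$ (from Lemma~\ref{lem: top prop}) to show $\Delta_i$ is concave in $\lambda$, and concludes $\Delta_i(\lambda)>0$ for $\lambda\in[0,1)$ from $\Delta_i(0)=A_{ii}^2>0$ and $\Delta_i(1)\geq 0$. You instead exploit the joint convexity of each $f_i$ in $(y,z)$ together with a Slater-type strictly feasible base point $v^\ast\in\fE^\circ(0)$, so that taking a convex combination with the strict interior of $\fE$ immediately lands in the strict interior. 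This replaces the componentwise induction with a single convexity inequality and is more conceptual; it also makes transparent what the lemma really rests on, namely joint convexity of $f$ plus strict feasibility at $w=0$. Two small remarks for completeness. The M-matrix property you invoke---existence of $\xi>0$ with $-A^\V\xi>0$ componentwise---is a standard equivalent characterization of nonsingular M-matrices (semipositivity); in the present lower-triangular case one can take $\xi=(-A^\V)^{-1}e$ with $e$ the all-ones vector, whose strict positivity follows from $(-A^\V)^{-1}$ being lower triangular with strictly positive diagonal and nonnegative entries. Your construction of $v^\ast$ is essentially the same local computation the paper performs inside the proof of Lemma~\ref{lem: top prop} to verify $0\in\fD^\circ$, so no ingredients outside the paper's toolbox are required; and the ``therefore'' step of your argument matches the paper's essentially verbatim.
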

\begin{proof}

  We begin with setting $\fE_i(w):= \set{v\in \Real^m \such f_i(v, w) \leq 0}$ and define $\fE_i^\circ$ similarly. Certainly, $\fE(w) = \bigcap_{i=1}^m \fE_i(w)$ (similarly for  $\fE_i^\circ$).

 Let us fix $w \in \fD$. The proof is by an induction on $i$. When $i=1$, if $1\notin \I$, $f_1(\cdot,w)$ is a linear function. Since $A_{11}<0$, it is clear that $\fE_1^\circ(\lambda w) \neq \emptyset$ for any $\lambda \in [0,1)$. If $1\in \I$, $f_1(\cdot, w)$ is quadratic, we denote the determinant of $f_1(\cdot, \lambda w)$ as
 $
  \Delta_1(\lambda) = -\lambda^2 w\top \pi_1 w -2\lambda (A^\C w)_1 + A_{11}^2,
 $
 which is either a linear or quadratic function of $\lambda$. It follows from $\fE_1(w) \neq \emptyset$ that $\Delta_1(1)\geq 0$. On the other hand, observe that $\Delta_1(0) = A_{11}^2>0$ and $w^\top \pi_1 w\geq 0$ ($\pi_1$ is semi-positive definite), we then obtain $\Delta_1(\lambda) >0$ for any $\lambda\in [0,1)$. This implies that $\fE_1^\circ(\lambda w)\neq \emptyset$ for any $\lambda\in [0,1)$.

 Next, assuming that $\bigcap_{k=1}^{i-1}\fE_k^\circ(\lambda w) \neq \emptyset$ for any $\lambda\in [0,1)$, we want to show that $\bigcap_{k=1}^i\fE_k^\circ(\lambda w) \neq \emptyset$ for any $\lambda \in [0,1)$. For any $k=1, \cdots, i-1$ and $\lambda\in [0,1)$, $\eta_k(\lambda w)$ is well defined by the induction assumption and Lemma~\ref{lem: eq point}.  Consider $v=(\eta_1(\lambda w), \cdots, \eta_{i-1}(\lambda w), v_i, \cdots, v_m)$. Then,
 \[
  f_i(v, \lambda w) = \frac12 v_i^2 \, \indic_{i\in \I} + A_{ii} v_i + \sum_{k=1}^{i-1} A_{ik} \,\eta_k(\lambda w) + \frac12 \lambda^2 w^\top \pi_i w + \lambda (A^\C w)_i
 \]
 is either a linear or quadratic function in $v_i$. As seen in the $i=1$ case, the linear case is easy to handle. Hence we consider the quadratic case only. In this case, the determinant of $f(\cdot, \lambda w)$ is given by
 \[
  \Delta_i(\lambda) = -\lambda^2 w^\top \pi_i w - 2\lambda (A^\C w)_i - 2\sum_{k=1}^{i-1} A_{ik} \eta_k(\lambda w) + A^2_{ii}.
 \]
 Since $A_{ik} \geq 0$, $\pi_i$ is semi-positive definite, and $\eta$ is convex (see Lemma~\ref{lem: top prop}), $\Delta_i(\lambda)$ is concave in $\lambda$. At $\lambda = 1$, we know that $\eta(w)$ exists and thus $f_i(v, w) = 0$ has a solution. Therefore, $\Delta_i(1) \geq 0$. This observation together with the concavity of $\Delta_i(\cdot)$ and $\Delta_i(0) = A_{ii}^2 > 0$ shows that $\Delta_i(\lambda)>0$ for any $\lambda\in [0,1)$. Hence, $\bigcap_{k=1}^i\fE_k^\circ(\lambda w)\neq \emptyset$, closing the induction.

 For the second statement, notice that for any $w$ in the interior of $\fD$, there exists $\tau>1$ such that $\tau w\in \fD$. Then, the first statement implies $\lambda\tau w \in \fD^\circ$ for any $\lambda \in [0, 1)$. Hence, with $\lambda = 1/\tau$, we see $w \in \fD^\circ$. Together with the openness of $\fD^\circ$, we conclude that $\fD^\circ$ is the interior of $\fD$.
\end{proof}

Since $A^{\V}$ is assumed to be lower triangular, the Jacobian $Df(\nu)$ at a hyperbolic equilibrium point $\nu$ has nonzero real eigenvalues. Moreover, it is well known that $\nu$ is unstable if $Df(\nu)$ has a positive eigenvalue, and $\nu$ is asymptotically stable if all eigenvalues of $Df(\nu)$ are negative. Consequently,
a hyperbolic point $\nu$
is (asymptotically) stable if and only if every eigenvalue of $Df(\nu)$ is negative. Now we are ready to prove Lemma \ref{lem: eq point summary}.

\begin{proof}[Proof of Lemma \ref{lem: eq point summary}]
 For $w\in \fD$, all equilibrium points are constructed by solving $f(v, w) = 0$ sequentially from index $i = 1$. Clearly, there are at most $2^{|\I|}$ equilibrium points for \eqref{eq:riccati-V}.

 When $w \in \fD^\circ$, we already observed that $A^\V + diag_\I(\eta(w))$ has strictly negative diagonals. Thus, $\eta(w)$ is hyperbolic and asymptotically stable. For the uniqueness, we simply note that, for another equilibrium point $v$, if we let $i$ be the first index such that $v_i > \eta_i(w)$, then by construction, it must be that $i \in \I$ and $v_i = -A_{ii} + \sqrt{ A_{ii}^2 - 2\left( \sum_{k=1}^{i-1}A_{ik}\eta_k(w) + g_i(w)\right)}$. Since the $i$-th diagonal entry of $Df(v)$ is $A_{ii} + v_i > 0$, hence $v$ is unstable which contradicts with the choice of $v$.

 There exists at least one hyperbolic unstable equilibrium point. Take $v$ such that its first $l-1$ components are equal to those of $\eta(w)$ but $v_l = -A_{ll} + \sqrt{ A_{ll}^2 - 2\left( \sum_{k=1}^{l-1}A_{lk}\eta_k(w) + g_l(w)\right)}$ where $l = \max\set{i \such i \in \I}$.

 For $w\in \operatorname{Ker} A^\D \cap \fD^c$, if there exists an equilibrium point $v$ for \eqref{eq:riccati-V}, then $(v, w)$  would be an equilibrium point for \eqref{eq:riccati}. Therefore, $v\in \fE(w)$ which contradicts to $w \in \fD^c$.
\end{proof}

\begin{exa} For an intuitive understanding of the proof above, let us take a look at the following two dimensional system:
\[
    \dot y = f(y) = \frac{1}{2}\left( \begin{array}{c}
                            y_1^2 \\
                            y_2^2 \end{array} \right) + \left( \begin{array}{cc}
                                            -a & 0 \\
                                            1/2 & -1 \end{array} \right)y, \quad a \geq 0.
\]
In this example, $\I = \set{1, 2}$ and $m = 2$. We set $g(\cdot) \equiv 0$ for the simplicity of illustration. Therefore all sets blow are independent of $w$. Figure~\ref{fig:2dex} shows the graphs of points that satisfy each of $f_i(y)=0$ for three different values of $a$. The intersections of solid and dash lines are equilibrium points.  The filled regions without boundaries correspond to their $\fE^\circ(w)$ (for any $w$). We note that the number of equilibrium points changes as $a$ varies. It is easy to check that the origin is hyperbolic whose Jacobian has negative eigenvalues as long as $a > 0$, or equivalently $\fE^\circ(w) \neq \emptyset$. At $a = 0$, $\fE^\circ(w)$ becomes empty (also $\fD^\circ = \emptyset$) and $(0,0)$ is not hyperbolic.

Observe that the point $(1,1)$ in the second panel is the only non-hyperbolic equilibrium point among all equilibria in three figures (and actually for all $a > 0$). Therefore, we can imagine that this kind of non-hyperbolic equilibrium case ``rarely'' happens. Indeed, the hyperbolicity of equilibria is a generic property of dynamical systems (see, e.g., Section IV of \cite{Chiang88} and references therein).

\begin{figure}\centering
\includegraphics[width=5.3cm]{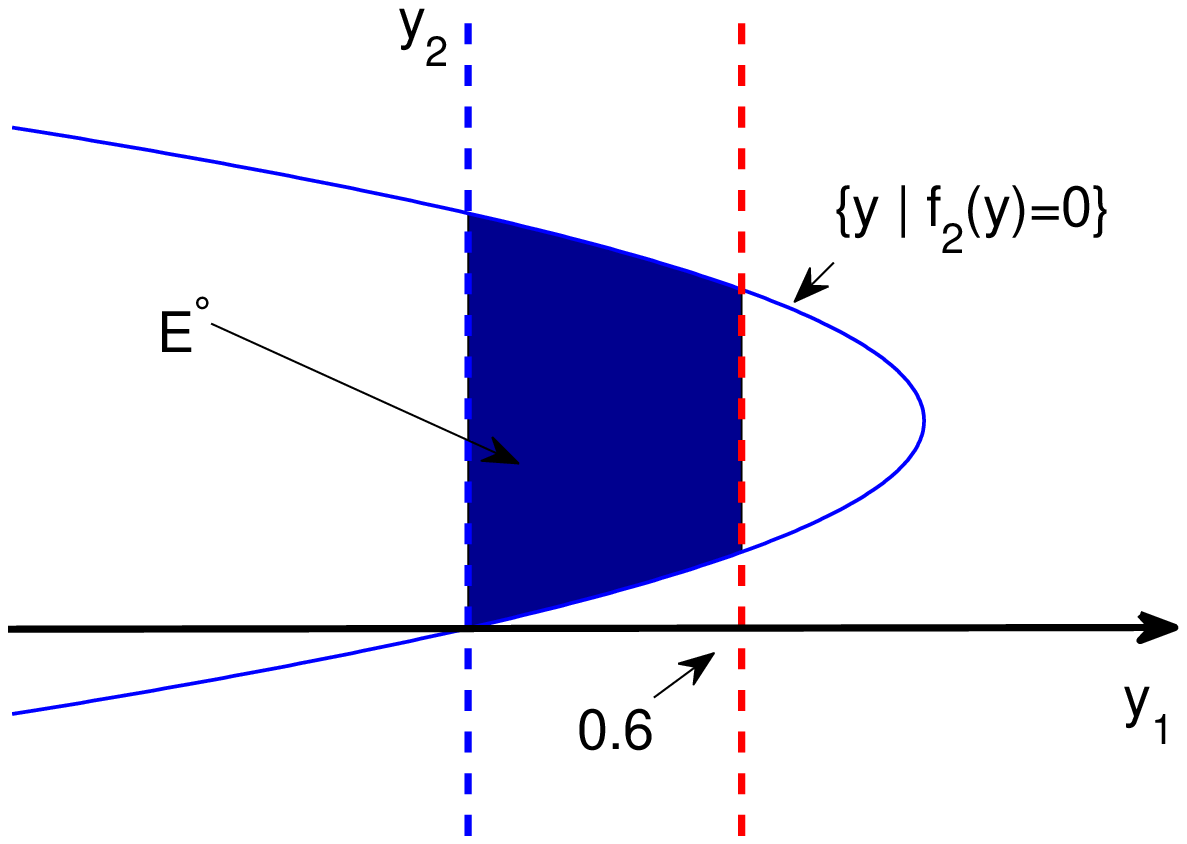}
\includegraphics[width=5.3cm]{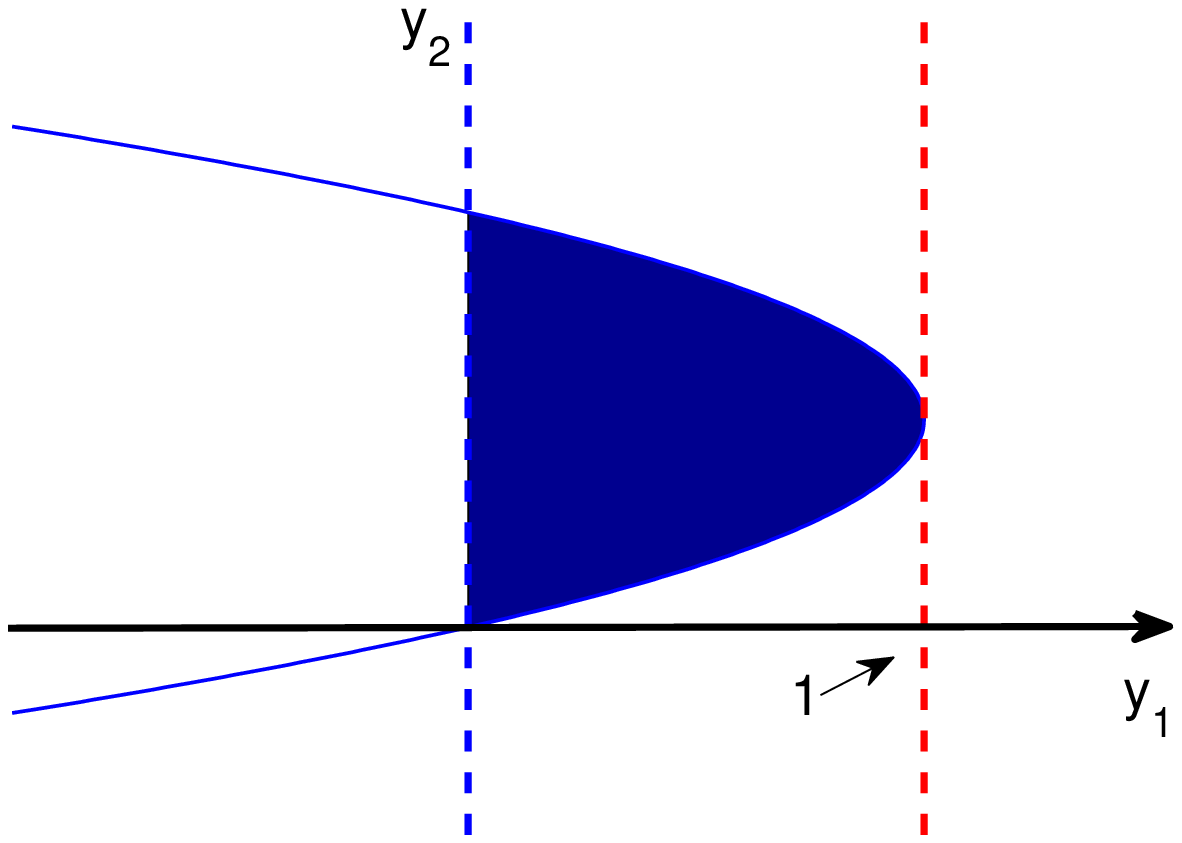}
\includegraphics[width=5.3cm]{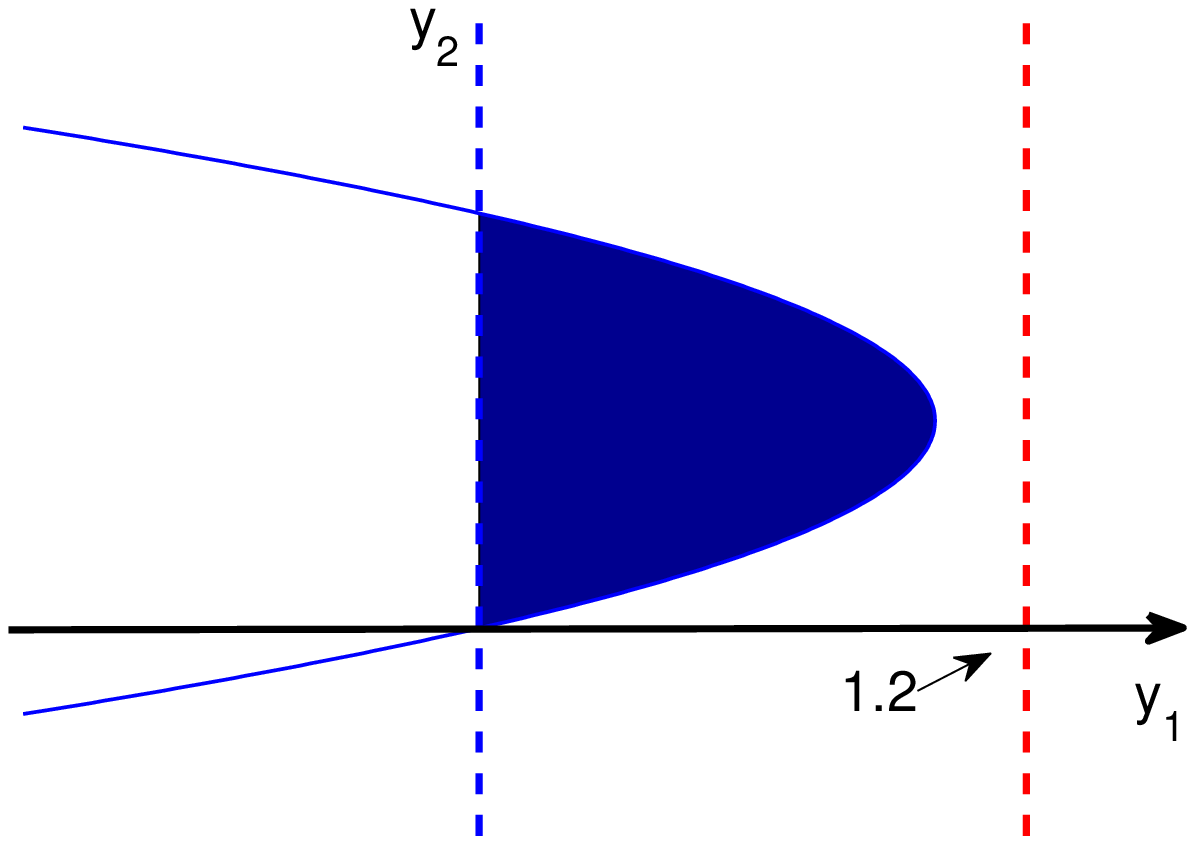}
\caption{Region $\fE^\circ(w)$ and equilibrium points for $a = 0.3, 0.5, 0.6$.}\label{fig:2dex}
\end{figure}
\qed
\end{exa}

\subsection{Stable regions}\label{subsec: stable reg}
After identifying the unique stable equilibrium point for \eqref{eq:riccati-V}, we will study its associated stable region defined in Definition~\ref{def: stable set}. When $w\in \fD^\circ$, the stable region $\mathcal{S}(w)$ is an open neighborhood of $\eta(w)$ diffeomorphic to $\mathbb{R}^m$ (see \cite{Chiang88}).
The next lemma gives a useful property on solution trajectories of \eqref{eq:riccati-V}.

\begin{lem}\label{lem:longtermproperty} If the trajectory of \eqref{eq:riccati-V} is bounded, then it converges to an equilibrium point. If the trajectory is unbounded, then it blows up in finite time.
\end{lem}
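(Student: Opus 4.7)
The plan is to exploit the triangular structure of \eqref{eq:riccati-V}: because $A^{\V}$ is lower triangular, the equation for $y_i$ involves only $y_1,\dots,y_i$, so the system splits into a cascade of scalar ODEs. For fixed $w \in \operatorname{Ker} A^{\D}$, I would rewrite the $i$-th equation as $\dot y_i = \phi_i(y_i;\, c_i(t))$, where $\phi_i(y;c) := \tfrac{1}{2} y^2 \indic_{i\in\I} + A_{ii} y + c$ and $c_i(t) := \sum_{k<i} A_{ik} y_k(t) + g_i(w)$ is determined by the previously handled components. The argument then proceeds by induction on $i$.

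In the base case $i=1$ the equation is autonomous. If $1\notin\I$ it is linear with negative slope $A_{11}$, so the solution exists globally and converges to the unique zero of $\phi_1$. If $1\in\I$ the right-hand side is a convex quadratic; a bounded trajectory converges to a zero of $\phi_1$ by elementary phase-line analysis, while an unbounded trajectory must blow up in finite time, since once $y_1$ exceeds a threshold depending on $|A_{11}|$ and $|g_1(w)|$ one has $\dot y_1 \geq \tfrac{1}{4} y_1^2$, and comparison with $\dot z = z^2/4$ produces blow-up. A drift to $-\infty$ is excluded because $\phi_1>0$ for $y_1$ sufficiently negative.

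For the inductive step, suppose the dichotomy holds for indices smaller than $i$. If some earlier component blows up at a finite time then the whole trajectory does, and there is nothing left to prove. Otherwise $y_1,\dots,y_{i-1}$ are defined on $[0,\infty)$ and converge to limits $\eta_1^*,\dots,\eta_{i-1}^*$, so $c_i(t) \to c_i^* := \sum_{k<i} A_{ik}\eta_k^* + g_i(w)$ and the scalar equation for $y_i$ becomes \emph{asymptotically autonomous} with limit $\dot y_i = \phi_i(y_i;\,c_i^*)$. When $i\notin\I$ the equation is linear with negative slope and bounded forcing, so $y_i$ is automatically bounded and converges to $-c_i^*/A_{ii}$. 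When $i\in\I$ the same quadratic blow-up estimate as above applies once $y_i$ exceeds a threshold adjusted for $\sup_t |c_i(t)|$, so any unbounded $y_i$ blows up in finite time.

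The main obstacle is to show that a bounded $y_i$ actually converges, since in the asymptotically autonomous setting one must rule out persistent oscillation. I would argue as follows: the limit field $\phi_i(\cdot;\,c_i^*)$ has at most two zeros, and on the complement of any neighborhood of that zero set the quantity $|\phi_i(y_i;\,c_i(t))|$ stays bounded away from $0$ for $t$ large, pinning down the sign of $\dot y_i$ there. A direct phase-line case analysis around each zero then traps $y_i$ into shrinking windows around a single equilibrium; alternatively, one may invoke the classical theorem of Markus that the $\omega$-limit set of a bounded solution of an asymptotically autonomous ODE is a nonempty, compact, connected, chain-recurrent invariant subset of the limit flow, which in this one-dimensional setting collapses to one of the (finitely many) equilibria. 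Once every component converges, continuity of $f$ forces the limit to satisfy $f(\eta^*,w)=0$, and the induction closes.
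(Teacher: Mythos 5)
Your proposal is correct and follows essentially the same route as the paper: exploit the lower-triangular (cascade) structure, induct on the index $i$, handle $i\notin\I$ by the linear-ODE-with-convergent-forcing observation, handle bounded $y_i$ with $i\in\I$ by a scalar convergence argument for $\dot x = \tfrac12 x^2 + h(t)$ with $h(t)\to h(\infty)$, and handle the unbounded case by comparison with $\dot z = \tfrac12 z^2 - C$ once $y_i$ exceeds a threshold. The one place you diverge is the key convergence step for the bounded quadratic scalar case: the paper proves this directly as a self-contained auxiliary lemma (Lemma~\ref{lem: bounded traj}) via careful upper/lower comparison solutions, whereas you propose either the same kind of phase-line trapping or, alternatively, to invoke Markus' theorem on asymptotically autonomous ODEs; the latter is a legitimate and somewhat more ready-made tool (the $\omega$-limit set of a bounded one-dimensional asymptotically autonomous solution is a connected chain-recurrent invariant set for the limit flow, hence a single equilibrium here), at the cost of importing an external theorem the paper avoids by proving the elementary estimate directly.
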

\begin{proof}  We use an induction starting from the index $i = 1$. But we present the induction step only, as $i=1$ case is a specification of the
general argument. Suppose that the trajectory of a solution $y(t)$ is bounded in $\mathbb{R}^m$ and that $y_k(t)$ for $k = 1, \ldots, i-1$ converges to the first $i-1$ components of some equilibrium point of \eqref{eq:riccati-V}. If $i \notin \I$, then $y_i(t)$ satisfies
$y_i(t) = e^{A_{ii}t}\left( y_i(0) + \int_0^t e^{-A_{ii}s}h(s)ds\right)$ where
$
h(s) = \sum_{k=1}^{i-1}A_{ik}y_k(s) + g_i(w)
$.
Since $A_{ii} < 0$ and $h(s)$ has a limit, say $h(\infty)$, it is easy to check that $\lim_t y_i(t) = -h(\infty)/A_{ii}$.

If $i \in \I$, then we have a quadratic ODE $\dot x(t) = x(t)^2/2 + \tilde{h}(t)$ where $x(t) = y_i(t) + A_{ii}$ and $\tilde{h}(t) = \sum_{k=1}^{i-1}A_{ik}y_k(t) + g_i(w) - A_{ii}^2/2$. The induction assumption says that $\tilde{h}(\infty)$ exists. Since the trajectory of $x(t)$ is bounded by assumption, Lemma~\ref{lem: bounded traj} implies that $x(\infty) := \lim_t x(t) \in \mathbb{R}^m$ and $x(\infty)^2/2 + \tilde{h}(\infty) = 0$. Then the limit of $y_i(t)$, together with $(y_1(t), \ldots, y_{i-1}(t))$, converges to the first $i$ components of some equilibrium point for \eqref{eq:riccati-V}.

Let us move onto the second statement. Assume that $i$ is the first index such that the trajectory of $y_i(t)$ is unbounded. It is necessarily that $i \in \I$. Let $x$ be defined as above: $\dot x = x^2/2 + \tilde{h}(t)$. By assumption, $\tilde{h}(t)$ is bounded (actually, converges to $\tilde{h}(\infty)$ as we have seen in the last paragraph) and $x(t)$ has unbounded trajectory. Therefore, we can find some time $t_0$ and nonnegative constant $C$ such that $x(t_0) > \sqrt{2C}$ and $\tilde{h}(t) \geq -C$. Let us consider a function $\tilde{x}(t)$ which solves $\dot {\tilde{x}} = \tilde{x}^2/2 - C$ for $t\geq t_0$ with $\tilde{x}(t_0) = x(t_0)$. Then, by the comparison theorem for scalar ODEs, we have $x(t) \geq \tilde{x}(t)$ for all $t \geq t_0$. However, $\tilde{x}(t_0)$ is greater than $\sqrt{2C}$ which is the unstable equilibrium point of the system for $\tilde{x}(t)$. Elementary computations show that $\tilde{x}(t)$ blows up in finite time, hence so does $x(t)$.
\end{proof}

\begin{rem}\label{rem: bdd below} Following similar arguments as in the proof of Lemma~\ref{lem:longtermproperty}, we can show that $y(t)$ is bounded below for any initial conditions.
\end{rem}

\begin{cor}\label{cor:outsideD} Let $w \in \operatorname{Ker} A^\D \cap \fD^c$. Then, any solution $y(t)$ to \eqref{eq:riccati-V} blows up in finite time.
\end{cor}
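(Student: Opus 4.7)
The proof is essentially a two-line combination of results already proved in the paper. The plan is to argue by contradiction using the dichotomy established in Lemma \ref{lem:longtermproperty}: any solution trajectory either stays bounded and converges to an equilibrium, or else becomes unbounded and blows up in finite time.

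First I would invoke Lemma \ref{lem: eq point summary}, whose last conclusion states that for $w \in \operatorname{Ker} A^\D \cap \fD^c$ the system \eqref{eq:riccati-V} admits no equilibrium point at all. Then I would apply the first half of Lemma \ref{lem:longtermproperty}: if the trajectory $y(t)$ were bounded on $[0,\infty)$, it would have to converge to an equilibrium point of \eqref{eq:riccati-V}. Since no such equilibrium exists, the trajectory cannot be bounded. Finally I would apply the second half of Lemma \ref{lem:longtermproperty} to conclude that an unbounded trajectory must blow up in finite time, yielding $T^*(v,w) < \infty$ for every initial condition $v \in \Real^m$.

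There is no real obstacle here; the only thing to double-check is that Lemma \ref{lem:longtermproperty} was stated for the fixed parameter $w$ (which it is, since \eqref{eq:riccati-V} freezes $z(t) \equiv w$), and that the conclusion of Lemma \ref{lem: eq point summary} on the absence of equilibria genuinely rules out \emph{all} equilibria, not just the distinguished point $\eta(w)$. Both are immediate from the statements as written.
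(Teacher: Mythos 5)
Your proof is correct and follows the same route as the paper's: combine the last assertion of Lemma~\ref{lem: eq point summary} (no equilibria for $w\in\operatorname{Ker}A^\D\cap\fD^c$) with the bounded-implies-converges / unbounded-implies-blows-up dichotomy of Lemma~\ref{lem:longtermproperty}. The only cosmetic difference is that the paper opens by assuming no blow-up and deriving a contradiction, while you rule out boundedness first; the ingredients and the logic are identical.
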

\begin{proof} If not, then the trajectory is bounded by the second statement of Lemma~\ref{lem:longtermproperty}. Then the first statement of the previous lemma implies that it should converge to an equilibrium point, say $v$. This contradicts with the choice of $w$.
\end{proof}

We now take on the task of characterizing $\mathcal{S}_{\infty}(w)$ for $w\in \fD$. Lemma~\ref{lem:longtermproperty} implies that
$
\mathcal{S}_\infty(w) = \bigcup_{\nu} W^s_\nu(w),
$
where $\nu$ ranges over all equilibria of \eqref{eq:riccati-V}. Dealing with $w \in \fD^\circ$ and $w \in \partial\fD$ cases separately, we  extract more information as presented in Lemmas~\ref{lem: S-inf int} and \ref{lem: S-inf bou}.

\begin{lem}\label{lem: S-inf int}
 Let $w\in \fD^\circ$. Then, $\mathcal{S}_\infty(w) = \overline{\mathcal{S}(w)}$ and $\partial \mathcal{S}_\infty(w) = \partial \mathcal{S}(w) = \bigcup_{\nu \neq \eta(w)}W^s_\nu(w)$ where $\nu$ is chosen from all equilibria of \eqref{eq:riccati-V}.
\end{lem}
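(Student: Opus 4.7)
The plan is to first establish $\mathcal{S}_\infty(w)=\overline{\mathcal{S}(w)}$, and then deduce the boundary identity from it.

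The easy inclusion $\overline{\mathcal{S}(w)}\subseteq\mathcal{S}_\infty(w)$ follows from two facts recalled earlier: since $\eta(w)$ is asymptotically stable and hyperbolic by Lemma~\ref{lem: eq point summary}, the basin $\mathcal{S}(w)$ is open, while $\mathcal{S}_\infty(w)$ is closed as noted just before Theorem~\ref{thm:St-bdry}. Lemma~\ref{lem:longtermproperty} further gives that every $v\in\mathcal{S}_\infty(w)$ produces a bounded trajectory converging to some equilibrium, yielding the disjoint decomposition $\mathcal{S}_\infty(w)=\bigsqcup_\nu W^s_\nu(w)$ over all equilibria of \eqref{eq:riccati-V}.

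The heart of the proof is a perturbation argument showing that for every unstable equilibrium $\nu\neq\eta(w)$ and every $v\in W^s_\nu(w)$, any neighborhood of $v$ meets both $\mathcal{S}(w)$ and the blow-up region $\mathcal{S}_\infty(w)^c$. Let $i_0$ be the smallest index at which $\nu$ and $\eta(w)$ differ. The construction in Lemma~\ref{lem: eq point} forces $i_0\in\I$ and $\nu_{i_0}=-A_{i_0i_0}+\sqrt{A_{i_0i_0}^2-2(\sum_{k<i_0}A_{i_0k}\eta_k(w)+g_{i_0}(w))}$, so $A_{i_0i_0}+\nu_{i_0}>0$. By the triangular structure of Assumption~\ref{ass: B^v upper-tri}, perturbing $v$ in direction $e_{i_0}$ leaves the coordinates $y_k$ with $k<i_0$ unchanged, so they still converge to $\eta_k(w)$. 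The scalar equation for $y_{i_0}$ then has a time-dependent source converging to $\sum_{k<i_0}A_{i_0k}\eta_k(w)+g_{i_0}(w)$, and its limiting autonomous 1D equation has $\eta_{i_0}(w)<\nu_{i_0}$ as stable and unstable fixed points respectively. Combining strict scalar comparison with continuous dependence on initial data, for small $\epsilon>0$ the trajectory from $v-\epsilon e_{i_0}$ has $y_{i_0}(t)$ strictly below $y_{i_0}(t;v)$ and eventually below $\nu_{i_0}$, hence converges to $\eta_{i_0}(w)$; symmetrically the trajectory from $v+\epsilon e_{i_0}$ eventually exceeds $\nu_{i_0}$ and blows up. An induction on $i>i_0$, exploiting the triangular autonomy and, if needed, further small perturbations in higher indices to land inside the basin of $\eta_i(w)$, upgrades these statements to convergence of the full trajectory to $\eta(w)$, respectively to blow-up of the full system.

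Given this perturbation property the conclusions are immediate. The inclusion $W^s_\nu(w)\subseteq\partial\mathcal{S}(w)$ for $\nu\neq\eta(w)$, combined with the disjoint decomposition $\mathcal{S}_\infty(w)=\mathcal{S}(w)\sqcup\bigsqcup_{\nu\neq\eta(w)}W^s_\nu(w)$, yields $\mathcal{S}_\infty(w)\subseteq\overline{\mathcal{S}(w)}$ and hence the first equality. The same argument shows no $v\in W^s_\nu(w)$ with $\nu\neq\eta(w)$ is interior to $\mathcal{S}_\infty(w)$, so $\mathcal{S}_\infty(w)^\circ=\mathcal{S}(w)$ and therefore $\partial\mathcal{S}_\infty(w)=\mathcal{S}_\infty(w)\setminus\mathcal{S}(w)=\partial\mathcal{S}(w)=\bigcup_{\nu\neq\eta(w)}W^s_\nu(w)$. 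The main obstacle is the perturbation analysis above: propagating the correct-basin property from $y_{i_0}$ to all higher indices requires a careful inductive use of the triangular autonomy and a quantitative exploitation of $A_{i_0i_0}+\nu_{i_0}>0$ to ensure the $i_0$-th coordinate is pushed strictly into the stable basin rather than merely shifted along the local geometry near $\nu$.
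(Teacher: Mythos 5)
Your argument is essentially correct but follows a genuinely different route from the paper. The paper translates the equilibrium to the origin (setting $r = y - \eta(w)$, $\hat A = A^\V + \mathrm{diag}_\I(\eta(w))$ with $-\hat A$ a nonsingular M-matrix), and then, for $r(0)$ outside $\overline{\mathcal{S}'}$ converging to $\nu \neq 0$, scales down to $\delta r(0)\in \partial\mathcal{S}'$, uses the quasi-monotone comparison (Lemma~\ref{lem: comparison}) to get $\tilde\nu \leq \delta\nu$ for the limit of the scaled trajectory, and iterates the M-matrix inequality to obtain $\tilde\nu \leq \delta^{2k}\nu \to 0$, a contradiction. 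This handles all coordinates simultaneously and avoids any local linearization. Your proof instead works coordinate by coordinate, exploiting the triangular autonomy to locate the first differing index $i_0$ and arguing via perturbation in $e_{i_0}$; it gives a more concrete picture of why the instability at $\nu$ forces $v$ onto the boundary. The structure of your proof is sound, and the disjoint decomposition $\mathcal{S}_\infty(w) = \bigsqcup_\nu W^s_\nu(w)$ from Lemma~\ref{lem:longtermproperty} plus the perturbation property indeed delivers both equalities cleanly. However, the key step is under-justified in two places that you yourself flag. First, ``strictly below $y_{i_0}(t;v)$ and eventually below $\nu_{i_0}$, hence converges'' does not follow from scalar comparison alone: since $y_{i_0}(t;v)\to\nu_{i_0}$ from above or below, the perturbed trajectory could a priori also approach $\nu_{i_0}$. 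You must invoke the positive linearized growth rate $A_{i_0i_0}+\nu_{i_0}>0$: two distinct solutions of the same non-autonomous scalar ODE cannot both converge to $\nu_{i_0}$ because their difference eventually grows exponentially, yet is bounded — a contradiction. This rules out $\nu_{i_0}$, and Lemma~\ref{lem: bounded traj} then leaves only $\eta_{i_0}(w)$ (for the downward perturbation) or blow-up (for the upward one). Second, the induction over $i>i_0$ needs more care: after the $i_0$-perturbation the limiting source for $y_j$ decreases (since $\eta_k(w)\leq\nu_k$, $A_{jk}\geq 0$), so the comparison gives $\limsup y_j \leq \nu_j \leq U_j(w)$; when this inequality is strict, convergence to $\eta_j(w)$ is automatic, and the equality case $\nu_j = U_j(w)$ (which occurs exactly when the relevant $A_{jk}$ vanish so the sources coincide) requires a further small perturbation in $e_j$ and the same linearized instability argument. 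These are genuine gaps, but filling them is routine and does not change the architecture — the approach is valid and offers a more local, constructive alternative to the paper's M-matrix scaling argument.
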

\begin{proof} Part of the proof employs arguments in the proof of Theorem 4.1 of \cite{Kim}, but we include it here for completeness.
Instead of \eqref{eq:riccati-V}, it is more convenient to work with a slightly modified system on $r(t):= y(t)-\eta(w)$:
 \begin{equation}\label{eq: ODE w}
 \dot r = \frac12 r_\I^{(2)} + \hat A r, \quad \hat A := A^\V + diag_\I(\eta(w)).
 \end{equation}
 Since $w \in \fD^\circ$, $\hat A$ has strictly negative diagonals. Additionally, $A^\V$ is lower triangular with nonnegative off-diagonal entries, and therefore $-\hat A$ is a nonsingular M-matrix.
 It follows from Lemma \ref{lem: eq point summary} that $0$ is the unique stable equilibrium point of \eqref{eq: ODE w} and its associated stable region $\mathcal{S}'$ is given by
 \[
  \mathcal{S}' := \set{r(0) \such \lim_{t\to \infty} r(t) = 0 \text{ where } r \text{ solves } \eqref{eq: ODE w}} = \mathcal{S}(w) - \eta(w).
 \]
 Since $\mathcal{S}'$ is open, it contains a small neighborhood of $0$.
 The set $\mathcal{S}'_{\infty}$ is defined analogously. Then it suffices to prove the corresponding statements for $\mathcal{S}'$ and $\mathcal{S}'_{\infty}$.

 To show the first statement, we begin with the observation that $\mathcal{S}' \subset \mathcal{S}_\infty'$. Hence $\overline{\mathcal{S}'} \subseteq \mathcal{S}'_{\infty}$, since $\mathcal{S}_\infty' = \mathcal{S}_\infty(w) - \eta(w)$ is closed (see discussions after Lemma~\ref{lem:cont blowup time}). In what follows, we will prove $\mathcal{S}'_\infty \setminus \overline{\mathcal{S}'} =\emptyset$ by contradiction. Assume otherwise and pick $r(0) \in \mathcal{S}'_\infty \setminus \overline{\mathcal{S}'}$. It then follows from Lemma~\ref{lem:longtermproperty} that $\nu := \lim_{t\rightarrow \infty} r(t)$ is an equilibrium point. Since $\mathcal{S}'$ contains a neighborhood of $0$, we can find $\delta\in (0,1)$ such that $\delta r(0) \in \partial \mathcal{S}'$. Consider $z(t)$ which solves \eqref{eq: ODE w} with $z(0)= \delta r(0)$.
 Recall that $-\hat A$ is a nonsingular M-matrix, and then we get that $z(t)\leq \delta r(t)$ for any $t\geq 0$ from Lemma~\ref{lem: comparison}. On the other hand, since $z(0)\in \partial \mathcal{S}' \subset \mathcal{S}'_{\infty}$, Lemma~\ref{lem:longtermproperty} implies that $z(t)$ converges to an equilibrium, say $\tilde{\nu}$, of \eqref{eq: ODE w}. Hence, $\tilde\nu \leq \delta \nu$. Moreover, these two equilibrium points $\nu$ and $\tilde\nu$ are nonzero; otherwise, $r(t)$ or $z(t)$ should have started from $\mathcal{S}'$, which contradicts with their starting values outside $\mathcal{S}'$.

 Note that $\nu$ (also $\tilde \nu$) satisfies $\nu = -(1/2){\hat A}^{-1} \nu_\I^{(2)} \geq 0$ where the non-negativity follows from the property of a non-singular M-matrix $-\hat A$. Moreover, $0\leq\tilde\nu\leq \delta \nu$ implies that
 \[
  - \hat A \tilde\nu = \frac{1}{2} \,\tilde\nu_\I^{(2)} \leq \frac{\delta^2}{2} \, {\nu}^{(2)}_\I = -\delta^2\hat A\nu,
 \]
 from which we get $-\hat A (\tilde{\nu} - \delta^2 \nu) \leq 0$. By multiplying both sides by $-{\hat A}^{-1} \geq 0$, we obtain $\tilde\nu \leq \delta^2 \nu$.  Repeating the same argument, we arrive at $ \tilde \nu \leq \delta^{2k} \nu$ for any integer $k\geq 1$. However $\delta\in (0,1)$, hence $\tilde{\nu}=0$. This contradicts with $\tilde{\nu} \neq 0$ from the last paragraph. Therefore it is necessarily that $\mathcal{S}'_\infty \setminus \overline{\mathcal{S}'}$ is empty.

 Since $\partial \mathcal{S}'_\infty \cap \mathcal{S}' = \partial \mathcal{S}' \cap \mathcal{S}' = \emptyset$, it is immediately seen that $\partial \mathcal{S}' \subseteq \bigcup_{\nu \neq 0}{W^s_\nu}'$ where the ${W^s_\nu}'$ are the stable sets of each nonzero equilibrium point of \eqref{eq: ODE w}. To show the opposite inclusion, let us take an equilibrium point $\nu \neq 0$. Then, we claim that ${W^s_\nu}' \subseteq \partial \mathcal{S}'$. We prove this by contradiction. Let us assume that $r(0) \in {W^s_\nu}' \setminus \partial\mathcal{S}'$ and $\lim_{t\to \infty} r(t)= \nu$. Then it is necessarily that $r(0)\notin \mathcal{S}'$ because, otherwise, $\nu = \lim_t r(t)=0$. Hence, $r(0) \notin \overline{\mathcal{S}'}$ and thus we can find $\delta \in (0, 1)$ such that $\delta r(0) \in \partial \mathcal{S}'$. By using the same argument as that in the last paragraph, we arrive at a contradiction. Hence $\partial \mathcal{S}' = \bigcup_{\nu \neq 0}{W^s_\nu}'$.
 \end{proof}

When $w \in \partial \fD$, except in the case $m=1$, only partial or local description of $\mathcal{S}_\infty(w)$ is available as shown in the next result and arguments that follow. However, as a simple corollary to Theorem \ref{thm: S-inf}, $\mathcal{S}_\infty(w)$ for $w\in \partial \fD$ can be approximated by a limit of $\mathcal{S}_\infty(w_n)$ with $\set{w_n} \subset \fD^\circ$. To present the next result, we denote $(v_{i_1}, \ldots, v_{i_k})$ by $v_{\mathcal{M}}$, where
$\mathcal{M} = \set{i_1, \ldots, i_k} \subset \set{1, \ldots, m}$ and $v \in \mathbb{R}^m$.

\begin{lem}\label{lem: S-inf bou}
 For $w\in \partial\fD$, there exists a nonempty index set $\mathcal{M} \subset \set{1, \cdots, m}$ such that the set
 $\set{ {y(0)}_{\mathcal{M}} \such y(0) \in \mathcal{S}_\infty(w)}$ is equal to the stable set of a system
 \[
 \dot y_i = \frac12 y_i^2\indic_{i \in \I} + \sum_{k \in \mathcal{M}}A_{ik} y_k + g_i(w), \quad \text{ for } i \in \mathcal{M},
 \]
 which admits a unique equilibrium point.
\end{lem}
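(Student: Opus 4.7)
The plan is to pin down $\mathcal{M}$ from the degeneracy pattern of $\eta(w)$ at the boundary $\partial \fD$, then establish both inclusions for the projected stable sets.

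First I would examine the sequential construction of $\eta(w)$ from Lemma \ref{lem: eq point} to identify the set of critical indices
\[
\mathcal{C}(w) := \set{i \in \I \such A_{ii}^2 = 2\pare{\sum_{k<i} A_{ik}\eta_k(w) + g_i(w)}}.
\]
For $w \in \partial \fD$ this set is nonempty: if every discriminant were strictly positive, a continuity/concavity argument like the one driving the induction in the proof of Lemma \ref{lem: dense in fD} would place $w$ inside $\fD^\circ$. At each $i \in \mathcal{C}(w)$, the $i$-th diagonal entry $A_{ii} + \eta_i(w)$ of $Df(\eta(w))$ vanishes, so $\eta(w)$ loses hyperbolicity in those directions. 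I would build $\mathcal{M}$ inductively as a subset of $\set{1,\dots,m}$ that is downward closed under the nonzero couplings of $A^\V$ and is chosen so that, at every $i \in \mathcal{M} \cap \I$, the reduced-system discriminant $A_{ii}^2 - 2\pare{\sum_{k<i,\,k\in\mathcal{M}} A_{ik}\eta_k^{\mathrm{red}}(w) + g_i(w)}$ vanishes, forcing uniqueness of the reduced equilibrium.

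For the inclusion $\set{y(0)_\mathcal{M} \such y(0) \in \mathcal{S}_\infty(w)} \subseteq$ (stable set of the reduced system): lower triangularity of $A^\V$ together with downward closedness of $\mathcal{M}$ make $y_\mathcal{M}(t)$ an autonomous subsystem that coincides with the reduced system on $\mathcal{M}$. Boundedness of $y(t)$ forces boundedness of $y_\mathcal{M}(t)$, and applying Lemma \ref{lem:longtermproperty} to this subsystem yields convergence to its unique equilibrium, hence $y(0)_\mathcal{M}$ lies in the reduced stable set. For the reverse inclusion, given $v_\mathcal{M}$ in the reduced stable set, I would set $y_i(0) = \eta_i(w)$ for $i \notin \mathcal{M}$ and argue that the resulting trajectory stays bounded. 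This relies on an M-matrix comparison argument restricted to non-critical rows of $-(A^\V + diag_\I(\eta(w)))$, following the structure of the proof of Lemma \ref{lem: S-inf int}: on those rows the matrix has strictly negative diagonals and nonnegative off-diagonals, so the same geometric-contraction argument carries through.

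The main obstacle is twofold. First, making the inductive construction of $\mathcal{M}$ precise so that the reduced discriminants all vanish simultaneously requires a careful interplay between the lower triangular structure and the cascading vanishing of $\Delta_i$'s; in general $\mathcal{M}$ need not be of the form $\set{1,\dots,l}$. Second, the reverse inclusion is genuinely delicate because at $i \in \mathcal{C}(w)$ the local dynamics are of saddle-node type, $\dot y_i \approx (y_i - \eta_i(w))^2/2$, which is hypersensitive to perturbations from the non-equilibrium $y_k(t)$ during the transient. One has to combine a quantitative relaxation rate for the stable components with a Peano-type comparison at the critical indices in order to rule out finite-time blow-up along the extended trajectory.
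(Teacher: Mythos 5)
Your construction of $\mathcal{M}$ is close in spirit to the paper's, and in fact your ``downward closure of $\mathcal{C}(w)$'' turns out to describe the same index set that the paper defines intrinsically as $\mathcal{M}=\set{i \such \eta_i(w)=v_i\text{ for all }v\in\fE(w)}$. But there is a genuine error in the justification you give: it is \emph{not} true that the reduced discriminant vanishes at every $i\in\mathcal{M}\cap\I$. Take $m=2$, $\I=\set{1,2}$, $A_{21}>0$, with parameters arranged so that $A_{11}^2-2g_1(w)>0$ and $A_{22}^2-2A_{21}\eta_1(w)-2g_2(w)=0$. Then $\mathcal{C}(w)=\set{2}$, its downward closure is $\set{1,2}$, yet the discriminant at index $1$ is strictly positive. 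Your two requirements on $\mathcal{M}$ --- downward closure and vanishing discriminants throughout $\mathcal{M}\cap\I$ --- are therefore incompatible, so the inductive construction you propose has no output here, and the uniqueness argument you attach to it (``each quadratic has a double root'') has no footing.

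The actual mechanism, which the paper's proof exploits, is different. The index $2$ is in $\mathcal{M}$ because its discriminant vanishes; the index $1$ enters $\mathcal{M}$ not because its discriminant vanishes but because feasibility of the $f_2\leq 0$ constraint together with $A_{21}>0$ \emph{pins} $v_1$ to $\eta_1(w)$ for every $v\in\fE(w)$. Uniqueness of the reduced equilibrium then follows because $\fE_\mathcal{M}(w)$ is a singleton: the cascade of constraints emanating from the $\mathcal{C}$-indices forces every coordinate of $\mathcal{M}$ to equal $\eta_i(w)$, and any equilibrium of the reduced system lies in $\fE_\mathcal{M}(w)$. The paper implements this by defining $\mathcal{M}$ directly as the set of pinned coordinates, proving as a lemma that $A_{ij}=0$ whenever $i\in\mathcal{M}$, $j\in\mathcal{M}^c$, $j<i$ (via the explicit perturbation $\tilde v$ one builds from a witness $v$ with $v_j>\eta_j(w)$), and then deducing autonomy and uniqueness. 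You instead posit autonomy by fiat (downward closure) and anchor uniqueness on a discriminant condition that fails. For the reverse inclusion, your idea of extending by $y_j(0)=\eta_j(w)$ for $j\notin\mathcal{M}$ and controlling the off-$\mathcal{M}$ block with an M-matrix comparison is reasonable and roughly matches the level of detail the paper gives (``the reverse inclusion is clear''), so the substantive gap is the uniqueness justification, not that step.
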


\begin{proof}
 When $w\in \partial\fD$, there exists some coordinate $i$ such that $\eta_i(w) = v_i$ for any $v\in \fE(w)$. If not, then for each $i = 1, \ldots, m$ there exists $v\in \fE(w)$ with $v_i> \eta_i(w)$. A simple induction argument shows that $A^\V + diag_\I(\eta(w))$ has strictly negative diagonals in this case. This contradicts with the choice of $w$. The set $\mathcal{M} = \set{i\in \set{1, \cdots, m} \such \eta_i(w) = v_i \text{ for any } v\in \fE(w)}$ is, then, nonempty.

 We claim that $A_{ij}=0$ for any $i>j$ such that $i\in \mathcal{M}$ and $j\in \mathcal{M}^c$. Since $A_{ij}$ is non-negative, it is enough to show that $A_{ij}$ cannot be strictly positive. Suppose $A_{ij}>0$ for some $i>j$ with $i\in \mathcal{M}$ and $j\in \mathcal{M}^c$. Since $j\in \mathcal{M}^c$, there exists some $v\in \fE(w)$ with $v_k \geq \eta_k(w)$ for $k\neq j$ and $v_j > \eta_j(w)$. Define
 \[
  \tilde{v}_i:= \left\{\begin{array}{ll}-A_{ii} - \sqrt{A_{ii}^2 - 2\pare{\sum_{k=1}^{i-1} A_{ik} v_k + g_i(w)}}, & i\in \I \\ - A_{ii}^{-1}\pare{\sum_{k=1}^{i-1} A_{ik} v_k + g_i(w)}, & i\notin \I\end{array}\right..
 \]
 In either case, it is easy to see $\tilde{v}_i > \eta_i(w)$ for $i>j$ since $v_j > \eta_j(w)$. Also, the construction of $\tilde{v}_i$ gives $\tilde v_i \leq v_i$. This yields that the vector $\tilde{v} := (v_1, \ldots, v_{i-1}, \tilde v_i, v_{i+1}, \ldots, v_m)$ lies in $\fE(w)$ because $f(\tilde v, w) \leq f(v, w) \leq 0$. This is a contradiction to the assumption $i \in \mathcal{M}$.

By construction, $\eta(w)_{\mathcal{M}}$ is the unique equilibrium point of the subsystem $\dot y_i = (1/2) y_i^2\indic_{i \in \I} + \sum_{k \in \mathcal{M}}A_{ik} y_k + g_i(w)$, $i\in \mathcal{M}$, of \eqref{eq:riccati-V}. For any $y(0) \in \mathcal{S}_\infty(w)$, $\nu := \lim_t y(t)$ is an equilibrium point by Lemma~\ref{lem:longtermproperty}, and it satisfies $\nu_{\mathcal{M}} = \eta(w)_{\mathcal{M}}$. Therefore, $y(0)_{\mathcal{M}}$ belongs to the stable set of the aforementioned subsystem and the reverse inclusion is clear.
\end{proof}

We supplement Lemma \ref{lem: S-inf bou} with an additional comment. When $w \in \partial \fD$, every equilibrium point of \eqref{eq:riccati-V} is non-hyperbolic. In such cases, at least locally, the behavior of a solution trajectory is described by {\it stable}, {\it unstable manifolds}, and additionally, a \emph{center manifold}, which is determined by a certain partial differential equation. We refer the reader to \cite{Perko} for more details about this topic. For more details on theoretical and numerical studies of stable manifolds, see \cite{ChengM} or \cite{Osinga}. Next, we prove the decomposition of $\mathcal{S}_\infty$.

\begin{proof}[Proof of Theorem~\ref{thm: S-inf}]
 If $u = (v, w) \in \mathcal{S}_\infty$, then $w \in \fD$; otherwise, Corollay~\ref{cor:outsideD} implies that $y(t)$ with $y(0) = v$ blows up in finite time. Moreover, if $u \in \mathcal{S}_\infty^\circ$, then we claim that $v \in \mathcal{S}(w)$ with $w \in \fD^\circ$. To see this, first note that $ru \in \mathcal{S}_\infty$ for some $r> 1$ sufficiently close to $1$. Hence, $r w \in \fD$ which is equivalent to $\fE(rw) \neq \emptyset$. It then follows from Lemma \ref{lem: dense in fD} that $\fE^\circ(w) \neq \emptyset$. Therefore, $w \in \fD^\circ$ and $v \in \mathcal{S}_\infty^\circ(w) = \mathcal{S}(w)$. We then have $\mathcal{S}^\circ_\infty \subseteq \bigcup_{w\in \fD^\circ} \mathcal{S}(w) \times \set{w}$.

 Conversely, pick any $v \in \mathcal{S}(w)$ with $w \in \fD^\circ$. We want to show that, for any $u'=(v', w')$ sufficiently close to $u = (v,w)$, we have $u' \in \mathcal{S}_\infty$. Hence $\bigcup_{w\in \fD^\circ} \mathcal{S}(w) \times \{w\} \subseteq \mathcal{S}_\infty^\circ$. To this end, we consider, as in the proof of Lemma~\ref{lem: S-inf int}, a modified system $\dot r = (1/2)r_\I^{(2)} + \hat A r + g(w') - g(w)$ where $r(t) = y(t) - \eta(w)$, $y(0) = v'$, and $\hat A = A^\V + diag_{\mathcal{I}}(\eta(w))$. Recall that $\hat A$ has negative diagonals. This fact with straightforward calculations would imply that $r(t)$ stays bounded above as long as $|w'-w|$ is small and $r(0)$ is near the origin. Actually, it would be sufficient if we can find some $t_0$ such that $r(t_0)$ enters this neighborhood of the origin. However, $\lim_t r(t) = 0$ when $v' = v$ and the system for $r(t)$ is smooth. Hence, $r(t)$ continuously depends on its initial condition and thus such $t_0$ can be found.

 Finally, noticing that
 $
 \partial \mathcal{S}_\infty = \bigcup_{w\in\fD}\mathcal{S}_\infty(w) \times \set{w} \setminus \mathcal{S}_\infty^\circ,
 $
 we obtain the second statement from the first statement as well as from Lemmas~\ref{lem: S-inf int} and \ref{lem: S-inf bou}.
 \end{proof}

Now, we conclude this section with the proof of the characterization on exponential moments for affine processes.
\begin{proof}[Proof of Corollary~\ref{cor: exp moment X}]
 It follows from Theorem~\ref{thm:FM} that \eqref{eq: tran formula} is valid for any $X_0$  as long as $y(t)$ does not blow up by $T$. Then, the first statement is obtained. When $\expec\left[\exp\left(u^\top X_T\right)\right]<\infty$ for all $T\geq 0$,
 \[
 \begin{split}
  \frac1T \log \expec\bra{\exp(u^\top X_T)}
  = \frac12 (u^\D)^\top \pi_0 u^\D + \hat{b}^\D \cdot u^\D +\frac1T \int_0^T \hat{b}^\V \cdot y(s)\, ds + \frac1T y(T) \cdot X_0^\V + \frac1T  u^\D \cdot X_0^\D.
 \end{split}
 \]
 Lemma \ref{lem:longtermproperty} implies that $y(T)$ converges to an equilibrium point. Then \eqref{eq: exp moment growth rate} follows from sending $T\rightarrow \infty$ in the above identity.
\end{proof}

\section{Analysis of blow-up behavior}\label{sec:impvol}
In this section, we study solutions to \eqref{eq:riccati-V} which blow up in finite time. We first introduce a change of variables in Section \ref{subsec: blow up time}. Then we study the stability property of the system \eqref{eq:quadODE} and prove Lemma \ref{lem:cont blowup time} in Section \ref{subsec: blow up region}. Theorem \ref{thm:St-bdry} and Corollary \ref{cor2: exp moment X} are proved at last.

\subsection{Blow-up times}\label{subsec: blow up time}
To study the blow-up time $T^*(u)$, we employ a change of variables investigated by \cite{EliasG}:
\begin{equation}\label{eq:transf}
x(t) := \frac{2 y(t)}{1 + \sqrt{1 + 4|y(t)|^2}},
\end{equation}
where $y(t)$ is a solution to (\ref{eq:riccati-V}). This transform, which is equivalent to $y = x/(1 - |x|^2)$ with $|x| < 1$, maps $\mathbb{R}^m$ onto the open unit ball in $\mathbb{R}^m$. Moreover, $|x(t)|$ goes to the unit sphere whenever $|y(t)|$ goes to infinity. Therefore, this transform compactifies $\Real^m$. Using this transform, we have the following representation of the blow-up time.

\begin{prop} \label{prop:blowuptime}
For each fixed $u \in \mathbb{R}^m\times \operatorname{Ker} A^\D$, the blow-up time $T^*(u)$ is given by
\begin{equation}\label{eq: explosion time}
T^*(u) = \int_0^\infty \left( 1 - R(s)^4\right) ds,
\end{equation}
where $R(s)^2 := \sum_{i=1}^m x_i(s)^2$ and $x(s)$ solves
\begin{equation}\label{eq:tildef}
\frac{dx}{ds} = (1+R^2)\tilde f(x, w) - 2(x\cdot \tilde f(x, w))x,
\end{equation}
with $x(0) = 2v/(1 + \sqrt{1 + 4|v|^2})$ and $\tilde f(x, w) := (1-R^2)^2 f\left( x/(1-R^2), w\right)$.
\end{prop}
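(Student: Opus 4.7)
The plan is to verify that \eqref{eq:transf} compactifies the state space, derive the ODE satisfied by $x$ under a suitable time reparametrization, and then express $T^*(u)$ as the integral of the time-change Jacobian. First I would record that \eqref{eq:transf} is a diffeomorphism from $\Real^m$ onto the open unit ball, with inverse $y = x/(1-|x|^2)$, and that $|y(t)|\to\infty$ if and only if $R(t)\to 1$. Differentiating the relation $(1-R^2)y = x$ in $t$ and substituting $\dot y = f(y,w)$ yields
\[
(1-R^2)^2 f(y,w) \,=\, (1-R^2)\,\dot x \,+\, 2(x\cdot \dot x)\,x.
\]
Taking the inner product of both sides with $x$ gives $x\cdot\dot x = (x\cdot \tilde f)/(1+R^2)$, and substituting this back produces
\[
\dot x \,=\, \frac{(1+R^2)\,\tilde f(x,w) \,-\, 2\bigl(x\cdot \tilde f(x,w)\bigr)\,x}{1 - R^4}.
\]

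Next I would introduce the time change $s(t) := \int_0^t d\tau/(1-R(\tau)^4)$, which is strictly increasing on $[0, T^*(u))$ since $R<1$ there. By the chain rule, $x$ expressed as a function of $s$ solves exactly \eqref{eq:tildef}, and the inverse substitution $dt = (1-R^4)\,ds$ gives
\[
T^*(u) \,=\, \int_0^{T^*(u)} dt \,=\, \int_0^{s^*} (1-R(s)^4)\,ds, \qquad s^* \,:=\, \lim_{t\uparrow T^*(u)} s(t).
\]
The formula \eqref{eq: explosion time} therefore reduces to the single claim $s^* = \infty$.

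The main obstacle is this last claim. When $T^*(u) = \infty$, Lemma~\ref{lem:longtermproperty} forces $y(t)$ to be bounded, so $R(t)\leq \rho_0 <1$ uniformly, $ds/dt \geq 1$, and hence $s(t)\geq t \to \infty$. The nontrivial case is $T^*(u) < \infty$, for which one must show $\int_0^{T^*(u)} d\tau/(1-R(\tau)^4) = +\infty$. Let $l$ be the smallest index whose component $y_l$ blows up; the hierarchy in \eqref{eq:riccati-V} keeps $y_1,\dots,y_{l-1}$ bounded and forces $l\in\I$ (a linear scalar ODE with bounded forcing and negative drift $A_{ll}<0$ cannot blow up), reducing the equation for $y_l$ near $T^*(u)$ to $\dot y_l = \tfrac{1}{2}y_l^2 + A_{ll}y_l + h_l(t)$ with $h_l$ bounded on $[0,T^*(u))$. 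Setting $\xi(t):=1/y_l(t)$ gives $\dot\xi = -\tfrac{1}{2} + O(\xi)$ as $y_l\to\infty$, so $\dot \xi \geq -1$ near blow-up, and integrating from $t$ to $T^*(u)$ (where $\xi$ vanishes) yields $\xi(t)\leq T^*(u)-t$, i.e., $y_l(t)\geq 1/(T^*(u)-t)$. Combined with the elementary asymptotic $1-R(t) \sim 1/(2|y(t)|)$ as $|y(t)|\to\infty$, this forces $1-R(t)^4 \leq C(T^*(u)-t)$ near $T^*(u)$ for some constant $C$, and hence
\[
\int_0^{T^*(u)} \frac{d\tau}{1-R(\tau)^4} \,\geq\, \int^{T^*(u)} \frac{d\tau}{C\bigl(T^*(u)-\tau\bigr)} \,=\, +\infty,
\]
which completes the proof.
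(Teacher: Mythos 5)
Your proof is correct, and the first half (deriving \eqref{eq:tildef} via the time change $s(t) = \int_0^t d\tau/(1-R(\tau)^4)$ and reducing \eqref{eq: explosion time} to the claim $s^*:=\lim_{t\uparrow T^*(u)} s(t) = \infty$) matches the paper exactly. Where you genuinely diverge is in proving $s^*=\infty$. The paper argues on the $s$-side: from $x\cdot\frac{dx}{ds} = (1-R^2)(x\cdot\tilde f)$ one gets the closed-form
\[
1-R(s)^2 \;=\; \bigl(1-R(0)^2\bigr)\exp\Bigl(-2\int_0^s x(r)\cdot\tilde f(x(r),w)\,dr\Bigr),
\]
and since $x\cdot\tilde f$ is a bounded function of $x$ on the closed unit ball, the exponential is strictly positive for every finite $s$; thus $|x(s)|<1$ for all $s$, so the $s$-time never exhausts before the trajectory reaches the boundary, which forces $s^*=\infty$ regardless of whether $T^*(u)$ is finite. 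You argue on the $t$-side instead: you establish the one-sided rate $y_l(t)\geq 1/(T^*(u)-t)$ near blow-up through the scalar inversion $\xi=1/y_l$, combine it with $1-R^4 \sim 2/|y|$, and conclude that $\int_0^{T^*(u)} d\tau/(1-R^4)$ diverges. Both routes are sound. The paper's is shorter and self-contained, requiring no asymptotics of $y$ near explosion; yours effectively re-derives a weak form of the blow-up rate that the paper proves later in Lemma~\ref{lem:firstblowup} (there via the very $s$-parametrization and a linearization at the boundary equilibrium, which would be circular here, but your direct scalar estimate avoids that circularity). One small redundancy: in the $T^*(u)=\infty$ case you invoke boundedness of $y$ to get $R\leq\rho_0<1$, but in fact $ds/dt=1/(1-R^4)\geq 1$ always, so $s(t)\geq t$ already gives $s^*=\infty$ without any boundedness input.
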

\begin{proof} The proof of \eqref{eq: explosion time} is essentially given in Section 3 of \cite{EliasG}. We present their argument here for the reader's convenience. First, it is straightforward to check that (\ref{eq:transf}) yields
\begin{equation}\label{eq:temptime}
\frac{dx}{dt} = \frac{1}{1-R(t)^2}\left( \tilde f(x, w) - \frac{2x\cdot \tilde f(x, w)}{1+R(t)^2}x\right), \quad
    x(0) = \frac{2v}{1 + \sqrt{1+4|v|^2}}.
\end{equation}
Second, define $s(t)$ via $ds/dt = (1-R(t)^4)^{-1}$ for $0\leq t< T^*(u)$ with $R(t)= \sqrt{\sum_{i=1}^m x_i(t)^2}$. Since $R(t)^2 < 1$ for $t<T^*(u)$, $s(t)$ is a strictly increasing function with the unique inverse $t(s)$. Let us write $x(t(s))$ and  $R(t(s))$ as $x(s)$ and $R(s)$, respectively. Then, (\ref{eq:tildef}) follows from changing the variable $t$ in (\ref{eq:temptime}) to $s$.

On the other hand, since
$$
\frac{d}{ds}(1 - R(s)^2) = - 2 x\cdot \frac{dx}{ds} = -2 (x\cdot \tilde f(x, w))(1 - R(s)^2),
$$
it follows that $1-R(s)^2 = (1-R(0)^2)\exp\left( - 2\int_0^s x(r)\cdot \tilde f(x(r), w)dr\right)$. Note that the integrand in this identity is uniformly bounded due to $|x(r)|\leq 1$, then $1-R(s)^2 > 0$ for any finite $s$. Thus, $s$ maps $[0, T^*(u))$ to $[0, \infty)$, hence \eqref{eq: explosion time} follows.
\end{proof}

The explosion time has the following property.
\begin{lem}\label{l:derivT}
For each fixed $u \in \mathbb{R}^m \times \operatorname{Ker} A^\mathcal{D}$, define $\mathcal{P}(u) = \set{p\in \Real_+ \such T^*(p u)<\infty}$. Then $T^*(\cdot u)$ is a nonincreasing and differentiable function on $\mathcal{P}(u)$.
\end{lem}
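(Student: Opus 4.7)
My plan is to split the claim into (a) monotonicity, handled by the convexity of $\mathcal{S}_T$ that was recalled earlier in the paper, and (b) differentiability, obtained by differentiating the integral representation of $T^*$ from Proposition~\ref{prop:blowuptime} under the integral sign.

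For monotonicity, I would fix $p_1<p_2$ in $\mathcal{P}(u)$ and set $T:=T^*(p_2u)<\infty$, so that $p_2u\in\mathcal{S}_T$. Since $0\in\mathcal{S}_T$ trivially and $\mathcal{S}_T$ is a convex neighborhood of the origin (recalled from \cite{Filipovic-Mayerhofer}), the convex combination $p_1u=(p_1/p_2)\,p_2u+(1-p_1/p_2)\cdot 0$ also lies in $\mathcal{S}_T$, whence $T^*(p_1u)\geq T=T^*(p_2u)$. The same argument applied to $\mathcal{S}_\infty$ shows that $\mathcal{P}(u)$ is itself an interval, which will be needed when discussing one-sided derivatives at its endpoint.

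For differentiability, my starting point is the identity
\[
T^*(pu)=\int_0^\infty \bigl(1-R(s;pu)^4\bigr)\,ds,\qquad R(s;pu)=|x(s;pu)|,
\]
where $x(\cdot;pu)$ solves the compactified ODE \eqref{eq:tildef} on the closed unit ball with initial data and parameter depending polynomially on $p$. Standard ODE regularity makes $x(s;pu)$ jointly smooth in $(s,p)$, so the integrand is $C^\infty$ in $p$, and the problem reduces to justifying $\tfrac{d}{dp}T^*(pu)=-\int_0^\infty \partial_p R(s;pu)^4\,ds$ by producing an integrable dominating bound on $|\partial_p R^4|$ uniform in a relative neighborhood of a given $p_0\in\mathcal{P}(u)$. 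The key input is the asymptotics of $x(\cdot;p_0u)$: finiteness of $T^*(p_0u)$ forces $R(s;p_0u)\to 1$, so its $\omega$-limit set lies in the set of equilibria of \eqref{eq:tildef} on the unit sphere, which (using that $\tilde f_i(x,w)=\tfrac12 x_i^2\indic_{i\in\I}$ when $|x|=1$) can be listed explicitly, and each relevant one satisfies $x\cdot\tilde f(x^*,w)>0$. I would then linearize \eqref{eq:tildef} at the relevant $x^*$, check that the spectrum avoids the imaginary axis, conclude $1-R(s;p_0u)^2=O(e^{-\lambda s})$, and transfer this rate to nearby $p$ via smoothness of the stable manifold. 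Because the variational equation for $\partial_p x$ is linear with an exponentially decaying source, $\partial_p x(s;pu)$ remains bounded in $s$ and $|\partial_p R^4|$ decays exponentially, which supplies the dominating bound and lets dominated convergence finish the argument.

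The delicate step will be the hyperbolicity analysis of the equilibria $x^*$ on the invariant sphere $\{|x|=1\}$ together with the uniformity of the exponential estimates in $p$: points arbitrarily close to the stable manifold of $x^*$ can fail to lie in $\mathcal{P}(u)$, so the neighborhood of $p_0$ on which the exponential bound is uniform must be taken inside the relative topology of $\mathcal{P}(u)$, and one has to separate directions tangent to the sphere from transverse ones when reading off stability from the linearization. Here the interval structure of $\mathcal{P}(u)$ obtained in the monotonicity step is precisely what lets one parametrize admissible perturbations of $p_0$ by a single one-sided variable, making the uniform bound tractable.
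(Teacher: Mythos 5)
Your monotonicity argument is correct and is a legitimately different (and arguably cleaner) route than the paper's. The paper instead proves $y(t;p)/p$ is nondecreasing in $p$ via the comparison theorem for quasi-monotone systems (Lemma~\ref{lem: comparison}), from which $T^*(pu)\ge T^*(u)$ follows. What this buys that your convexity-of-$\mathcal{S}_T$ argument does not is the \emph{quantitative} consequence $\tfrac{d}{dp}\bigl(y(t;p)/p\bigr)\ge 0$, i.e.\ $\tfrac{dy}{dp}\ge y/p$. This is the load-bearing fact for the differentiability half of the paper's proof: it gives $y_i\,\tfrac{d}{dp}y_i\ge y_i^2/p$ for the exploding component $i$, hence (after showing this term dominates) $\tfrac{d}{dp}R(s;p)^2>0$ for all $s\ge s_0$. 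Once the integrand in $\tfrac{d}{dp}\int(1-R^4)\,ds$ has a definite sign on $[s_0,\infty)$, the interchange of derivative and integral is handled by Fubini on $[0,s_0]$ and Tonelli on $[s_0,\infty)$ — no dominating function, no decay rate, no linearization needed. By replacing the comparison argument with a convexity argument, you have dropped precisely the ingredient that makes the paper's differentiability step economical.

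Your proposed route to differentiability — dominated convergence backed by exponential decay of $1-R^2$ uniformly in $p$, obtained from hyperbolicity of the equilibrium of~\eqref{eq:tildef} on the unit sphere — has genuine gaps. First, the paper's Lemma~\ref{lem:firstblowup} computes the Jacobian (equal to $-I_l$) only for the \emph{truncated} compactification of $(y_1,\dots,y_l)$; you are working with the full $m$-dimensional compactified system, whose limit equilibrium and Jacobian you would have to analyze separately, and the limit equilibrium can have several nonzero coordinates in $\I$ if several components blow up simultaneously. (The paper's Lemma~\ref{lem:cont blowup time} deliberately avoids this by using the closed-form $1-R(s)^2=(1-R(0)^2)\exp(-2\int_0^s x\cdot\tilde f)$ together with a lower bound on $x\cdot\tilde f$, rather than a spectral argument.) Second, ``the variational equation is linear with an exponentially decaying source, so $\partial_p x$ stays bounded'' is not automatic: you additionally need the time-dependent linear part $Dh(x(s;pu))$ to be asymptotically stable, uniformly over the relevant $p$'s, which again requires the spectral analysis you have only sketched. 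Third, for $p$ near $p_0$ inside $\mathcal{P}(u)$ the limit equilibrium (i.e.\ which coordinates blow up and at what relative rates) can change, so ``transfer this rate to nearby $p$ via smoothness of the stable manifold'' needs to account for the trajectory possibly approaching a \emph{different} sphere equilibrium — you flag this as the delicate step but do not resolve it. The paper's sign/Tonelli argument sidesteps all of these issues, and I would recommend either adopting it outright or at least reinstating the comparison-theorem route to monotonicity so that the derivative inequality $\tfrac{dy}{dp}\ge y/p$ is available.
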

\begin{proof}
Note that $0\notin \mathcal{P}$ because $T^*(0) = \infty$. Let us now prove the nonincreasing property of $T^*(\cdot u)$. Without loss of generality, we assume $1\in \mathcal{P}(u)$. Then it suffices to prove $T^*(pu) \geq T^*(u)$ for any $p<1$.

Consider a solution $(y(t; p),z(t; p))$ to \eqref{eq:riccati} with initial condition $pu$ for $p \leq 1$. Then, $y(t; p)$  satisfies
$
\dot y_i = (1/2) y_i^2 \indic_{i \in \mathcal{I}} + (A^\V y)_i + (p^2/2)w^\top \pi_i w + p (A^\mathcal{C}w)_i
$
with $y(0; p) = p v$.  Now, define $\tilde y(t; p) = y(t; p)/p$ to obtain
$$
\dot{\tilde y} = \frac{p}{2}\tilde y_\mathcal{I}^{(2)} + A^\V \tilde y + \frac{p}{2}\left( \begin{array}{c}
            w^\top \pi_1 w \\ \vdots \\ w^\top \pi_m w \end{array}\right) + A^\mathcal{C}w \leq
            \frac{1}{2}\tilde y_\mathcal{I}^{(2)} + A^\V \tilde y + \left( \begin{array}{c}
            w^\top \pi_1 w \\ \vdots \\ w^\top \pi_m w \end{array}\right) + A^\mathcal{C}w, \quad \tilde y(0; p) = v.
$$
This implies that $\dot{\tilde y}(t; p) - f(\tilde y(t; p), w) \leq 0 = \dot y(t; 1) - f(y(t; 1), w)$. It then follows from the comparison theorem before Lemma~\ref{lem: comparison} that $y(t; p) = p \tilde{y} (t; p) \leq p y(t; 1)$ for all $0\leq t<T^*(u)$. This implies that $T^*(p u) \geq T^*(u)$.

For the differentiability of $T^*(\cdot u)$, first observe
\[
 \frac{d}{dp} R(s)^2 = \frac{4}{\sqrt{1+ 4|y(t(s); p)|^2} \left(1+ \sqrt{1+4|y(t(s); p)|^2}\right)^2} \times \left( \sum_{i=1}^m y_i(t(s);p)\frac{d}{dp}y_i(t(s);p)\right),
\]
where $t(s)$ is the function defined in Proposition~\ref{prop:blowuptime}. The analysis in the previous paragraph implies that $(d/dp)(y(t;p)/p) \geq 0$, which in turn gives $dy/dp \geq y/p$. On the other hand, Remark~\ref{rem: bdd below} implies that $\lim_{t\to T^*(pu)} y_i(t; p) = \infty$. Here $i$ is one component such that $y_i(t; p)$ explodes at $T^*(p u)$. Combining the previous observations, we obtain that
\[
 y_i(t(s); p) \frac{d}{dp} y_i(t(s); p) \geq \frac{1}{p} y_i^2(t(s); p),
\]
which is unbounded from above. Moreover, thanks to Lemma~\ref{lem:longtermproperty}, $y_i (d/dp)\, y_i$ dominates other non-explosive components. Therefore, there exists some sufficient large $s_0$ such that $(d/dp) R(s)^2>0$ for all $s\geq s_0$. Hence it follows from \eqref{eq: explosion time} that
\[
 \frac{d}{dp} T^*(p u) 
 = \frac{d}{dp} \left( \int_0^{s_0} \left(1 - R(s)^4\right)ds + \int_{s_0}^\infty \left(1 - R(s)^4\right) ds\right)
 = -\int_0^{\infty} 2 R(s)^2 \frac{d}{dp} R(s)^2 ds,
\]
where Fubini's and Tonelli's theorems are applied to integrals on $[0, s_0]$ and $[s_0, \infty)$, respectively.
\end{proof}

Not only does the change of variable \eqref{eq:transf} provide an expression for blow-up times, but also it helps to study the blow-up rate of solutions.

\begin{lem}\label{lem:firstblowup}
Suppose that $T^*(u) < \infty$ for some $u \in \mathbb{R}^m \times \operatorname{Ker} A^\mathcal{D}$, and that $l \leq m$ is the first component of $y(t)$ that blows up at $T^*(u)$. Then, $\lim_{t\uparrow T^*(u)} (T^*(u) - t) \,y_{l}(t) =c$ for some positive constant $c$.
\end{lem}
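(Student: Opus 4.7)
The plan is to reduce the question to a scalar quadratic ODE for $y_l$ with a bounded lower-order perturbation, and then extract the blow-up rate by direct integration.

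First I would show that, on $[0, T^*(u))$, the term $H(t) := \sum_{k=1}^{l-1} A_{lk} y_k(t) + g_l(w)$ is bounded. Assumption~\ref{ass: B^v upper-tri} makes $A^{\V}$ lower triangular, and $z(t) \equiv w$ is constant on $\operatorname{Ker} A^{\D}$, so the equations for $(y_1, \ldots, y_{l-1})$ form an autonomous subsystem. Since $l$ is, by hypothesis, the first coordinate of $y$ to blow up, this subsystem does not explode on $[0, T^*(u)]$. By Lemma~\ref{lem:longtermproperty} applied to the subsystem, it is therefore either globally bounded or has an explosion time strictly greater than $T^*(u)$; either way $(y_1, \ldots, y_{l-1})$ is continuous on the compact interval $[0, T^*(u)]$, and so $H$ is bounded.

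Next, I would rule out $l \notin \I$. If $l \notin \I$, then $\dot y_l = A_{ll} y_l + H(t)$ with $A_{ll} < 0$ and $H$ bounded; the solution stays bounded, contradicting the explosion of $y_l$. Hence $l \in \I$ and $\dot y_l = \tfrac12 y_l^2 + A_{ll} y_l + H(t)$. Remark~\ref{rem: bdd below} shows $y_l$ is bounded below, so the blow-up forces $y_l(t) \to +\infty$ as $t \uparrow T^*(u)$. The key step is then to rewrite the ODE as $\dot y_l = \tfrac12 y_l^2 \bigl(1 + \delta(t)\bigr)$ with $\delta(t) := 2A_{ll}/y_l(t) + 2H(t)/y_l(t)^2$, so that $\delta(t) \to 0$ as $t \uparrow T^*(u)$. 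Dividing by $y_l^2$, integrating $-\tfrac{d}{dt}(1/y_l)$ from $t$ to $T^*(u)$, and using $1/y_l(s) \to 0$, I obtain
\[
\frac{1}{y_l(t)} = \frac{T^*(u) - t}{2} + \frac{1}{2}\int_t^{T^*(u)} \delta(s)\, ds.
\]
For each $\epsilon > 0$ there is $t_0 < T^*(u)$ such that $|\delta(s)| \leq \epsilon$ on $[t_0, T^*(u))$, so the integral is $o(T^*(u) - t)$. Rearranging gives $(T^*(u)-t)\, y_l(t) \to 2$, proving the claim with $c = 2$.

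The main obstacle is not the scalar Riccati asymptotics, which are classical, but the preliminary boundedness of $H$ up to and including the blow-up instant; this is exactly where the hierarchical Assumption~\ref{ass: B^v upper-tri} (making the lower block autonomous) and the dichotomy of Lemma~\ref{lem:longtermproperty} do the real work. Once that is settled, the vanishing of $\delta$ and the direct integration identity above cleanly deliver the limit.
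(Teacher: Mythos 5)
Your proof is correct, and it takes a genuinely different route from the paper's. The paper proves this lemma by applying the compactification transform \eqref{eq:transf} to the truncated vector $(y_1,\ldots,y_l)$, showing that the rescaled trajectory converges to the point $e_l$, verifying that $e_l$ is a hyperbolic equilibrium of \eqref{eq:tildef} whose Jacobian is $-I_l$, checking a non-resonance condition on the eigenvalues, and then invoking Theorem~4.1 of \cite{EliasG} to obtain $\sqrt{\sum_{k\leq l}y_k(t)^2}\sim c(T^*-t)^{-1}$. You instead isolate the scalar Riccati ODE for $y_l$, use the hierarchical structure of Assumption~\ref{ass: B^v upper-tri} to bound the lower-order term $H(t)$ on the closed interval $[0,T^*(u)]$, rewrite $\dot y_l=\tfrac12 y_l^2(1+\delta(t))$ with $\delta\to 0$, and integrate $-\tfrac{d}{dt}(1/y_l)$ directly. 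Your argument is more elementary and self-contained (it avoids the compactification machinery and the external non-resonance theorem), and it delivers the sharper conclusion $c=2$ on the spot, whereas the paper obtains the value $c=2$ only later, in the proof of Lemma~\ref{lem:cont blowup time}, by matching against the equilibria of \eqref{eq:quadODE}. The paper's approach, on the other hand, is consistent with the compactification framework that it reuses throughout Section~\ref{sec:impvol}. Two minor comments: the appeal to Lemma~\ref{lem:longtermproperty} is not really needed to bound $H$ --- it suffices that $(y_1,\ldots,y_{l-1})$ is continuous on the compact interval $[0,T^*(u)]$ because its own blow-up time exceeds $T^*(u)$ by the minimality of $l$; and the invocation of Remark~\ref{rem: bdd below} to force $y_l\to+\infty$ is sound, though one could equally note that once $y_l$ exceeds a fixed threshold the quadratic term makes $\dot y_l>0$, so $y_l$ is eventually monotone increasing.
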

\begin{proof} For notational convenience, we write $T^*$ for $T^*(u)$.
First, we observe that $l$ must be in $\mathcal{I}$. Otherwise,
$
\dot y_l = A_{ll}y_l + \sum_{k=1}^{l-1}A_{lk}y_k + g_l(w)
$
where $y_1(t), \ldots, y_{l-1}(t)$ are finite on $[0, T^*]$, from which we infer that $y_l(t)$ must be finite in $[0, T^*]$. This contradicts with the choice of $l$.

Now, we apply the compactification \eqref{eq:transf} to $(y_1, \ldots, y_l)$. The resulting function $x(s)$, which is a vector valued function of length $l$, satisfies (\ref{eq:tildef}).
It then follows from the choice of $l$ and \eqref{eq:transf} that
$$
\lim_{s \rightarrow \infty} \frac{x(s)}{\sqrt{\sum_{i=1}^l x_i(s)^2}} = \lim_{t \uparrow T^*} \frac{y(t)}{\sqrt{\sum_{i=1}^l y_i(t)^2}} = e_l := \left\{ \begin{array}{l}
                                                                            0 \quad i < l\\
                                                                            1 \quad i = l
                                                                            \end{array} \right..
$$
The vector $e_l$ can be easily verified to be an equilibrium point of (\ref{eq:tildef}). If we denote the right hand side of (\ref{eq:tildef}) by $h(x) = (h_1, \ldots, h_l)$, then
$$
\frac{\partial h_i}{\partial x_j} = 2x_j\tilde f_i + (1 + R^2) \frac{\partial \tilde f_i}{\partial x_j} - 2\left(\sum_{k=1}^l x_k \tilde f_k \right)\delta_{ij} - 2x_i\left(\tilde f_j + \sum_{k=1}^l x_k \frac{\partial \tilde f_k}{\partial x_j}\right),
$$
where $\delta_{ij}$ is the Kronecker delta and
$$
\frac{\partial \tilde f_i}{\partial x_j} = x_i \indic_{i \in \mathcal{I}}\delta_{ij} - 2x_j \sum_{k=1}^i A_{ik}x_k + (1 - R^2)A_{ij} - 4(1 - R^2)x_j g(w).
$$
From above calculations, we obtain that the Jacobian matrix of $h$ at $e_l$ is $-I_{l}$. It is clear that the eigenvalues $\lambda_1, \ldots, \lambda_l$ of this Jacobian matrix are non-resonant, i.e., there is no $(m_1, \ldots, m_l)$ such that $m_k \in \{0\} \cup \mathbb{N}$, $\sum_{k=1}^l m_k \geq 2$, and $\lambda_k = \sum_{i=1}^l m_i \lambda_i$ for $k \in \{1, \ldots, l\}$. Theorem 4.1 in \cite{EliasG} now implies that
$
\sqrt{\sum_{k=1}^l y_k(t)^2} \sim c(T^* - t)^{-1}
$
as $t \uparrow T^*$ for some positive constant $c$. Since $y_l(t)$ is the first component in $y(t)$ that explodes at $T^*$, we consequently have
$
y_l(t) \sim c(T^* - t)^{-1}
$, as $t\uparrow T^*$.
\end{proof}

\subsection{Blow-up regions}\label{subsec: blow up region}
Before proving Theorem~\ref{thm:St-bdry} and Corollary~\ref{cor2: exp moment X} in this subsection, let us present a stability property of the system \eqref{eq:quadODE} and prove Lemma~\ref{lem:cont blowup time}.

\begin{lem}\label{lem:quadODE-blowup} If the trajectory of (\ref{eq:quadODE}) is bounded, then it converges to an equilibrium point. If the trajectory is unbounded, then it blows up in finite time.
\end{lem}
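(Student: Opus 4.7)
The plan is to mirror the induction used for Lemma~\ref{lem:longtermproperty}, exploiting the lower triangular shape of $A^{\V}$ together with the fact that $x_{m+1}(s)=e^{-s}$ decays exponentially. This decay makes the system \eqref{eq:quadODE} asymptotically autonomous, with decoupled limit dynamics $\dot\xi_i = (T/2)\xi_i^2\,\indic_{i\in\I} - \xi_i$, whose equilibria $0$ and (when $i\in\I$) $2/T$ account precisely for the first $m$ components of every equilibrium of \eqref{eq:quadODE}. By induction on $i$ I would establish the following combined statement: for each $i\in\set{1,\ldots,m}$, the vector $(x_1(s),\ldots,x_i(s))$ either stays bounded on $[0,\infty)$ and converges to the first $i$ coordinates of some equilibrium of \eqref{eq:quadODE}, or blows up in finite time. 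Taking $i=m$ then yields both statements of the lemma.

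For the induction step, assume $x_1,\ldots,x_{i-1}$ stay bounded and converge (otherwise the trajectory has already exploded at a finite time and the unbounded case is settled). The equation for $x_i$ reads
$$
\dot x_i = \tfrac{T}{2}\,x_i^2\,\indic_{i\in\I} + \bigl(-1 + T A_{ii} e^{-s}\bigr)\, x_i + H_i(s),
$$
where $H_i(s) := T\sum_{k=1}^{i-1} A_{ik}\, x_k(s)\, e^{-s} + T g_i(w) e^{-2s} \to 0$ by the induction hypothesis. If $i\notin\I$, this is a scalar linear ODE with an exponentially stable homogeneous part and vanishing forcing, so variation of constants gives $x_i(s)\to 0$, matching the $i$-th coordinate of every equilibrium. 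If $i\in\I$, the equation is scalar quadratic and asymptotically autonomous with limit $\dot\xi = (T/2)\xi^2 - \xi$. When $x_i$ is bounded, the scalar monotonicity argument already invoked in the proof of Lemma~\ref{lem:longtermproperty} (via Lemma~\ref{lem: bounded traj}) forces $x_i(s)$ to approach a root of $(T/2)\xi^2 - \xi = 0$, so $x_i(s)\to 0$ or $x_i(s)\to 2/T$. When $x_i$ is unbounded, one picks $s_0$ so large that $-1 + TA_{ii} e^{-s} \geq -1$, $|H_i(s)| \leq C$ for all $s\geq s_0$, and $x_i(s_0)$ exceeds the positive equilibrium of the majorant scalar equation $\dot z = (T/4)z^2 - z - C$; comparison with the pure quadratic equation $\dot z = (T/4)z^2$ then yields finite-time blow-up, closing the induction.

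The main technical point, exactly as in Lemma~\ref{lem:longtermproperty}, is the quadratic case with time-varying coefficients: one must verify that the $(T/2)x_i^2$ term genuinely outruns the bounded perturbation $(-1+TA_{ii}e^{-s})x_i + H_i(s)$ once $x_i$ is large enough, so that the comparison with the pure scalar quadratic equation survives. The exponential decay of $x_{m+1}(s)=e^{-s}$ together with the induction hypothesis makes the perturbation uniformly bounded and $H_i$ actually vanishing, so this verification is routine and no new ingredient beyond what was used for Lemma~\ref{lem:longtermproperty} is needed.
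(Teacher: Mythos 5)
Your induction scheme is the same one the paper uses: variation of constants for $i\notin\I$, a scalar quadratic argument via Lemma~\ref{lem: bounded traj} for $i\in\I$ in the bounded case, and comparison with an autonomous scalar quadratic for the unbounded case. Two technical points in your write-up do not hold as stated. First, the bound ``$-1+TA_{ii}e^{-s}\geq -1$'' is false for every $s$: under Assumption~\ref{ass: B^v upper-tri}, $A_{ii}<0$, so $TA_{ii}e^{-s}<0$ and the coefficient is always strictly below $-1$. What you actually want (and what the paper uses) is the uniform lower bound $-1+TA_{ii}e^{-s}\geq TA_{ii}-1$ for all $s\geq 0$, which is what lets you choose an $r$ (not a large $s_0$) so that $h(r,s):=\tfrac{T}{2}r^2+(TA_{ii}e^{-s}-1)r-C>0$ for \emph{every} $s\geq 0$.

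Second, to go from ``the trajectory of $x_i$ is unbounded'' to ``$x_i(s_0)$ exceeds a large positive level $r$'' you need to rule out the possibility that the unboundedness is from below. The paper handles this by first comparing $x_i$ with the solution of the linear ODE $\dot{\tilde x}=(TA_{ii}e^{-s}-1)\tilde x+T\sum_k A_{ik}e^{-s}x_k+Te^{-2s}g_i(w)$ and invoking Lemma~\ref{lem:bdd-traj2} to get a lower bound; only then does unboundedness force $\sup_s x_i(s)=\infty$ and the existence of the required $s_0$. Your proof omits this step. Finally, for the bounded quadratic case you rely on Lemma~\ref{lem: bounded traj} ``via the scalar monotonicity argument,'' but that lemma is stated for ODEs of the special form $\dot x = x^2/2 + h(t)$; you need to record the explicit change of variables ($\tilde x = Tx_i + TA_{ii}e^{-s}-1$ in the paper) that brings the $x_i$-equation into that form before Lemma~\ref{lem: bounded traj} applies. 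All three issues are fixable and do not change the strategy, which is the paper's; they do need to be made precise before the proof is complete.
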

\begin{proof}
Since $x_{m+1}$ clearly converges to zero, which is the $(m+1)$-th coordinate of any equilibrium point of \eqref{eq:quadODE}, we only need to prove the first statement for the first $m$ coordinates. To this end, we prove by an induction on $i$. But we present the induction step only. The case $i=1$ is straightforward.

Now suppose that $x_1, \ldots, x_{i-1}$ converge to the first $i-1$ components of some equilibrium point. Recall that $x_i$ satisfies
$
 \dot x_i = \frac{T}{2}x_i^2\indic_{i \in \mathcal{I}} + \pare{T A_{ii} e^{-s} -1}x_i + T\sum_{k=1}^{i-1} A_{ik} e^{-s} x_k + T e^{-2s} g(w).
$
If $i\notin \I$, it then follows from the induction assumption and Lemma~\ref{lem:bdd-traj2} that $\lim_{s \rightarrow \infty} x_i(s) = 0$. If $i\in \I$, define ${\tilde x}(s) := T x_i(s) + T A_{ii} e^{-s} -1$, then $\tilde{x}$ satisfies
\[
 \dot {\tilde x} = \frac{1}{2}{\tilde x}^2 + T^2\sum_{k=1}^{i-1} A_{ik} e^{-s} x_k + T^2 e^{-2s} g(w) - \frac{1}{2}\left( TA_{ii}e^{-s}-1\right)^2 - TA_{ii}e^{-s}.
\]
Lemma \ref{lem: bounded traj} implies that ${\tilde x}(\infty) := \lim_{s\rightarrow \infty} {\tilde x}(s) \in \mathbb{R}$ and $(1/2){\tilde x}(\infty)^2 - 1/2 = 0$. Therefore, ${\tilde x}(\infty)= \pm 1$, which implies $x_i(\infty) = 0$ or $2/T$. Hence, in the above two cases, $x_i(s)$ converges to the $i$-th coordinate of some equilibrium point. This concludes the induction step.

For the second statement, let us assume that $x_i(s)$ is the first component whose trajectory is unbounded. It is necessarily that $i \in \mathcal{I}$; otherwise, we can utilize Lemma \ref{lem:bdd-traj2} to deduce that $x_i(s) \rightarrow 0$. Moreover, the trajectory of $x_i(s)$ is bounded from below. Indeed, using the comparison theorem for scalar ODEs, we can see that $x_i(s) \geq {\tilde x}(s)$ where ${\tilde x}$ solves
$$
\dot {\tilde x} = (T A_{ii}e^{-s} - 1){\tilde x} + T\sum_1^{i-1}A_{ik}e^{-s} x_k(s) + T e^{-2s} g_i(w), \quad {\tilde x}(0) = x_i(0).
$$
Again Lemma \ref{lem:bdd-traj2} implies that ${\tilde x}$ is bounded. Hence, the trajectory of $x_i$ is bounded from below.

Now, since $x_1, \ldots, x_{i-1}$ are bounded by assumption, we can find a positive constant $C$ such that
$$
\dot x_i \geq \frac{T}{2} x_i^2 + (T A_{ii} e^{-s} - 1)x_i -C =: h(x_i, s).
$$
Moreover, there exists a sufficiently large $r$ such that $r>T^{-1} \pare{1+ \sqrt{1+ 2T C}}$ and $h(r, s)>0$ for any $s\geq 0$. On the other hand, the trajectory of $x_i$ is unbounded from above by assumption, but bounded from below. Therefore, we can find some $s_0$ such that $x_i(s_0) =r$, and then $x_i(s)$ is strictly increasing from $s_0$ onward. As a result,
$$
\dot x_i \geq \frac{T}{2}x_i^2 - x_i - C, \quad s \geq s_0.
$$
However, notice that $x_i(s_0) > T^{-1} \pare{1+ \sqrt{1+ 2T C}}$ which is the unstable equilibrium of the previous ODE. Now it is immediate to check that $x_i(s)$ blows up in finite time. The second statement is proved.
\end{proof}

\begin{proof}[Proof of Lemma \ref{lem:cont blowup time}]
To show the continuity of $T^*(\cdot)$ on $\mathcal{P}$, let us denote the solution to \eqref{eq:riccati-V} by $y(t; u)$ where its dependence on $u$ is explicitly indicated. Let $T^*(u) = T$. The associated solution $x(s; u)$ to \eqref{eq:quadODE} does not blow up in finite time; otherwise, $y(t; u)$ explodes before $T$. Lemma~\ref{lem:quadODE-blowup} implies that the trajectory of $x(s; u)$ is bounded, which in turn yields that $\lim_{s \to \infty}x(s;u) = \nu'$ for some equilibrium point $\nu'$ of \eqref{eq:quadODE} and $\nu' = (\nu, 0)$ as set in the subsequent discussion.

However, this point has at least one nonzero coordinate. This is because, for the first index $l$ such that $y_l(t;u)$ blows up at $T$, we have from Lemma~\ref{lem:firstblowup} that $\lim_{t \uparrow T}(T-t)y_l(t;u) = c > 0$, which implies
\begin{equation}\label{eq: conv x_l}
\lim_{s \rightarrow \infty} x_l(s;u) = \lim_{t \uparrow T}x_l\left(-\log\left(1 - \frac{t}{T}\right); u\right) = \lim_{t \uparrow T} \left(1 - \frac{t}{T}\right)y_l(t;u) = \frac{c}{T}.
\end{equation}
Combined with the characterization of equilibria for \eqref{eq:quadODE}, it is necessarily that $c = 2$; see the discussion after \eqref{eq:quadODE}.
Moreover, $\nu_j = 0$ if $j \notin \I$. From this blow-up behavior of $y(t; u)$, it is easily deduced that there exists some $i \in \I$ such that $y_i/|y|$ converges to a positive constant while $\lim_t y_j/|y| = 0$ for all $j \notin \I$.

Now, let us consider $z(s; u)$ a solution to \eqref{eq:tildef} with $z(0; u) = 2v/(1 + \sqrt{1 + 4|v|^2})$. From $z/|z| = y/|y|$ and $\lim_s |z(s; u)| = 1$, $\lim_s z_i(s; u) > 0$ and $\lim_s z_j(s; u) = 0$ for all $j \notin \I$.
Therefore, for some positive $C$, small $\delta$ and sufficiently large $s_0$, $\sum_{k\in \I} z_k(s; u)^2 > C$ and $z_l(s; u) > - \delta$ for all $l=1, \ldots, m$ and $s\geq s_0$. Furthermore, since $z(s; u)$ continuously depends on $u$ (see e.g., \cite{Lefschetz}), there exists a sufficiently small neighborhood $U$ of $u$, such that $\sum_{k\in \I} z_k(s; u')^2 > C$ and $z_l(s; u') >-\delta$ for all $i=1, \ldots, m$, $s\geq s_0$, and all $u'\in U$.

Thanks to the analysis in the last paragraph, we can find a sufficiently small $\epsilon>0$ such that $\sum_{k\in \I} z_k^3(s; u')> \epsilon$ for all $s\geq s_0$ and $u'\in U$. As a consequence, for an even larger $s_0$, we can see that
\begin{eqnarray*}
\lefteqn{z(s; u')\cdot\tilde f(z(s; u'), w)} && \\
&=& \frac{1}{2}\sum_{k \in \I} z_k(s; u')^3 + \left(1 - R(s; u')^2\right) z(s; u')^\top A^\V z(s; u') + \left(1 - R(s;u')^2\right)^2 z(s; u')^\top g(w)
> \frac{\epsilon}{2}
\end{eqnarray*}
for all $s\geq s_0$ and $u'\in U$ because the second and third terms become small as $|z| \leq 1$, and $R^2$ converges to 1 as $s$ increases. Therefore, eventually we get
\[
1 - R(s; u')^2 \leq \left(1 - R(0)^2\right) \exp\left( - 2\int_0^{s_0} z(r; u')\cdot \tilde f(z(r; u'),w)dr  -\epsilon(s - s_0)\right),
\]
using the functional form of $R(s; u)^2$ in the proof of Proposition~\ref{prop:blowuptime}.
This facilitates the application of the dominated convergence theorem to conclude that $\lim_n T^*(u_n) = \int_0^\infty \lim_n \left(1 - R(s; u_n)^4\right) ds$ for a sequence of initial conditions $u_n \in U$ that converges to $u$. The right hand side of the previous identity, then, is equal to $T^*(u)$ due to the continuous dependence of $x(s; u)$ on $u$.
\end{proof}

\begin{proof}[Proof of Theorem~\ref{thm:St-bdry}] Let us consider the case of $\partial\mathcal{S}_T$ first.
The beginning paragraph of the proof of Lemma~\ref{lem:cont blowup time} argues that the solution $y(t)$ of \eqref{eq:riccati-V} that explodes at $T$ is associated with the function $x(s)$, the solution to \eqref{eq:quadODE} with $x(0) = (y(0), 1)$, and the limit $\lim_s x(s)$ is equal to some nonzero equilibrium point $\nu'=(\nu, 0)$ of \eqref{eq:quadODE}. In other words, $y(0) \in W^s_\nu(w, T)$. Hence, $\set{v \such T^*(v, w) = T} \subseteq \bigcup_{\nu \neq 0} W^s_{\nu}(w, T)$.

For the converse, suppose $v \in W^s_\nu (w, T)$ for some nonzero equilibrium point $\nu'=(\nu, 0)$. Recall the discussion following \eqref{eq:quadODE}, and thus $\nu_i$ is either zero or $2/T$. Since $\lim_{s\rightarrow \infty} x(s) = (\nu,0)$ when $x(s)$ is the solution to \eqref{eq:quadODE} with $x(0) = (v, 1)$, the same computation for $x_i(s)$ as in \eqref{eq: conv x_l} yields that $y(t)$ is finite for all $t < T$ and $\lim_{t\uparrow T} (T-t) y_i(t) =2$ for any $i$ such that $\nu_i \neq 0$. Therefore, $v \in \set{ v \such T^*(v,w) = T}$. This completes the proof of the first assertion. The second assertion then clearly follows.

As for $\mathcal{S}_T^\circ$, it is already noted in Section~\ref{subsec:blow-up} that $\lim_{s\uparrow \infty} x(s) = 0$ when $y(\cdot)$ blows up after $T$. Hence, $\mathcal{S}_T^\circ\subseteq \bigcup_{w\in \operatorname{Ker} A^\D} W_0^s(s, T) \times \{w\}$ is clear. The reverse inclusion also easily follows: If $x(s)$ converges to zero, then $T^*(u) \geq T$, but $T^*(u)=T$ cannot happen because, otherwise, the limit of $x(\cdot)$ would be a nonzero equilibrium point as shown above.
\end{proof}

\begin{proof}[Proof of Corollary~\ref{cor2: exp moment X}]
The first statement follows from the similar reasoning as in Corollary~\ref{cor: exp moment X}. Hence, we focus on \eqref{eq: blow-up rate}. From the transform formula \eqref{eq: tran formula}, for $S < T$ where $u \in \partial\mathcal{S}_T$,
$$
\log\mathbb{E}\left[\exp\left(u^\top X_S\right)\right] = \frac{S}{2}(u^\D)^\top \pi_0 u^\D + S\; \hat b^\D \cdot u^\D + \int_0^S \hat b^\V \cdot y(s)ds  + y(S)\cdot X^\V_0 + u^\D \cdot X^\D_0.
$$
Therefore, $\lim_{S\uparrow T} (T-S) \log \mathbb{E}[\exp(u^\top X_S)]= \lim_{S\uparrow T}(T-S)\int_0^S \hat b^\V \cdot y(s)ds
+ \lim_{S \uparrow T} (T-S)y(S)\cdot X^\V_0$. But, we know from Theorem~\ref{thm:St-bdry} and \eqref{eq: conv x_l} that $\lim_{S\uparrow T}(T-S)y(S) = T\nu$ for some $\nu$, the first $m$ components of some nonzero equilibrium point of \eqref{eq:quadODE}. Then, it is a simple exercise to show that $(T-S)\int_0^S \hat b^\V \cdot y(s)ds$ converges to zero. Now, the result is immediate.
\end{proof}

\section{Conclusion}\label{sec:conclusion}

In this paper, we study the long-term and blow-up behaviors of $\expec\left[\exp\left(u^\top X_T\right)\right]$ for multi-dimensional affine diffusion $X$ with some hierarchical structure between components. Analyzing solution behaviors of a multi-dimensional Riccati system, which is associated with a given affine diffusion process via the transform formula, we completely characterize sets of $u$ such that exponential moments are finite for all time or only before a fixed time. These sets are decomposed into the unions of stable sets for equilibrium points of the Riccati system or its transformed version. Then, we compute certain limits of exponential moments which provide detailed descriptions of behaviors of affine diffusions.

When the log-return of discounted stock prices is modeled by a linear transformation of affine diffusion processes, our results identify the long-term and blow-up behaviors of stock prices, especially in the case where the stock price moment is not explicitly known. These results provide a handle to investigate the implied volatility asymptotics for large-time-to-maturity, deep-out-of-money and deep-in-the-money options. We presented several examples to illustrate this point. Theoretically and practically, it still remains an interesting topic to extend the analysis of this paper to affine processes with jumps or affine processes on more general state spaces.

As a final remark, it is well known that in the literature of affine processes bond options and some other fixed income products can also be expressed in semi-closed form using the Fourier inversion formula. As long as the long-term growth rate of the underlying process satisfies the assumptions of the G\"artner-Ellis Theorem, we can calculate the asymptotic behaviors of the price of such a product, which are possibly useful in obtaining quantities that are analogues of the Black-Scholes implied volatility. We leave this as a potential future research topic.

\begin{appendix}

\section{Auxiliary results on ODEs and matrices}

\begin{lem}\label{lem: bounded traj}
 Let us consider a scalar ODE $\dot x(t) =  x(t)^2/2 + h(t)$ with $h(\infty):= \lim_{t\to \infty} h(t) \in \Real$. If the whole trajectory $\set{x(t) : t\geq 0}$ is bounded, then $h(\infty)\leq 0$, $x(\infty):= \lim_{t\to \infty} x(t)\in \Real$, and $x(\infty)^2/2 + h(\infty) =0$.
\end{lem}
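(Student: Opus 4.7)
The plan is to establish the three conclusions in sequence. First, I would rule out $h(\infty) > 0$ by contradiction: in that case, for all sufficiently large $t$ one has $\dot x(t) \geq x(t)^2/2 + h(\infty)/2 \geq h(\infty)/2 > 0$, so $x(t)$ grows at least linearly in $t$, contradicting boundedness of the trajectory. This immediately gives $h(\infty) \leq 0$.

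Second, and this will be the crux of the argument, I would prove that $x(\infty) := \lim_{t\to\infty} x(t)$ exists. Set $a := \liminf_{t\to\infty} x(t)$ and $b := \limsup_{t\to\infty} x(t)$; both are finite by boundedness. Suppose, toward a contradiction, that $a < b$. The set $\{c \in \Real \such c^2/2 + h(\infty) = 0\}$ contains at most two points, so I can choose $c \in (a,b)$ with $c^2/2 + h(\infty) \neq 0$. By continuity of $x$ together with the definitions of $a$ and $b$, the level set $\{t \such x(t) = c\}$ is unbounded. At every sufficiently late crossing time $t_0$ one has $\dot x(t_0) = c^2/2 + h(t_0)$ close to $c^2/2 + h(\infty)$, and thus of a fixed nonzero sign. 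If that sign is positive, I would apply a no-return argument: $x$ strictly increases through $c$ at $t_0$, and any later return to the level $c$ from above would demand $\dot x \leq 0$ at that return time, which the positive sign forbids. Consequently $x(s) > c$ for every $s > t_0$, forcing $\liminf_{t\to\infty} x(t) \geq c > a$, a contradiction. The negative-sign case is symmetric and contradicts $\limsup_{t\to\infty} x(t) = b > c$. Hence $a = b$ and $x(\infty)$ exists in $\Real$.

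Third, with $x(\infty)$ in hand, the equation $\dot x(t) = x(t)^2/2 + h(t)$ yields $\lim_{t\to\infty} \dot x(t) = L := x(\infty)^2/2 + h(\infty)$. If $L \neq 0$ then $x(t) \to \pm\infty$ according to the sign of $L$, contradicting boundedness; therefore $L = 0$, which is the claimed relation and incidentally reconfirms $h(\infty) \leq 0$ since $x(\infty)^2 = -2 h(\infty) \geq 0$.

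The main obstacle is the oscillation analysis in the second step: boundedness alone does not force a limit, and convergence of $h$ does not automatically transfer to $x$. The leverage comes from the quadratic structure of the right-hand side — at a level crossing $\{x = c\}$ the value of $\dot x$ depends only on $c$ and $h(t)$, so the crossings of any level $c$ with $c^2/2 + h(\infty) \neq 0$ are eventually all one-directional. Since only two exceptional levels can fail this strict sign condition while $(a,b)$ is uncountable, such a $c$ is always available, which closes the argument.
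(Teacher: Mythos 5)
Your proof is correct, and it takes a genuinely different route from the paper's. The paper establishes the lemma by sandwiching $x$ between solutions $y_\pm$ of autonomous comparison ODEs $\dot y = y^2/2 + (h(\infty)\pm\delta)$, first showing $\limsup x \leq \sqrt{-2h(\infty)}$ and then handling the $\liminf$ direction with a further case split on $h(\infty)=0$ versus $h(\infty)<0$; the endpoint $x(\infty)=\pm\sqrt{-2h(\infty)}$ then delivers the algebraic relation. You instead use a phase-plane (no-return) argument: since a level $c$ with $c^2/2+h(\infty)\neq 0$ is eventually crossed only in one direction, the $\liminf$ and $\limsup$ cannot be separated, and the relation $x(\infty)^2/2+h(\infty)=0$ then falls out immediately from $\dot x(t)\to L$ forcing $L=0$ by boundedness. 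Your argument is arguably cleaner: it avoids the comparison theorem entirely, handles $h(\infty)=0$ and $h(\infty)<0$ uniformly, and your step~1 ruling out $h(\infty)>0$ is more elementary than the paper's comparison-based version. The one point worth making fully explicit in a polished write-up is the no-return step itself (if $x(t_1)=c$ with $x>c$ on $(t_0,t_1)$ then $\dot x(t_1)\leq 0$), but you have stated the essential content.
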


\begin{proof}
 Let us prove $h(\infty)\leq 0$ first. Otherwise, $h(\infty) > 2\delta$ for some positive constant $\delta$. As a result, there exists some $t_0$ such that $x(t)^2/2 + h(t) > x(t)^2/2 + \delta$, for all $t\geq t_0$. It follows from the comparison theorem for scalar ODEs (see e.g. Chapter II of \cite{Hartman}) that $x(t)\geq y(t)$ for any $t\geq t_0$, where $y(t)$ is a solution to $\dot y= y^2/2 + \delta$ with $y(t_0)= x(t_0)$. However, a simple analysis of the previous ODE yields that $y(t)$ blows up in finite time. This contradicts to the assumption that the trajectory of $x(t)$ is bounded.

 To prove the rest of the statements, we shall first show that $\limsup_{t\to \infty} x(t)\leq \sqrt{-2 h(\infty)}$. If not, then there exists $\delta>0$ such that $\limsup_{t\to \infty} x(t)>\sqrt{-2 h(\infty) + 2\delta}$. Then, we can find $t_0$ such that $x(t_0) > \sqrt{-2 h(\infty) + 2\delta}$ and $h(t)\geq h(\infty) -\delta$ for all $t\geq t_0$. Next consider $y_-(t)$, which satisfies $\dot y_- = y_-^2/2 + h(\infty) -\delta$ with $y_-(t_0)= x(t_0)$. The comparison theorem implies that $x(t)\geq y_-(t)$ for $t\geq t_0$. However, $y_-$ explodes to infinity because $y(t_0) > \sqrt{-2 h(\infty) + 2\delta}$ and because the value on the right hand side is the unstable equilibrium point of the ODE satisfied by $y_-$. This contradicts to the boundedness assumption on $\set{x(t): t\geq 0}$.

 Now, if $\liminf_{t\to \infty} x(t) \geq \sqrt{-2h(\infty)}$, then, combined with the result from the last paragraph, we have $\lim_{t\to \infty} x(t)= \sqrt{-2h(\infty)}$ and we are done. To deal with the other case $\liminf_{t\to \infty} x(t) < \sqrt{-2h(\infty)}$, we separate $h(\infty)=0$ and $h(\infty)<0$ cases.

When $h(\infty)=0$, if $\liminf_{t\to \infty} x(t) < \sqrt{-2h(\infty)}$, then we have $\liminf_{t\to \infty} x(t) < \sqrt{2\delta}$ for any $\delta >0$. Thus, there exists $t_0$ such that $x(t_0)<\sqrt{2\delta}$ and $h(t)>-\delta$ for all $t\geq t_0$. Hence, $\lim_{t\to \infty} y_-(t)= -\sqrt{2\delta}$ because $y_-(t_0)= x(t_0)$ is less than the unstable equilibrium of $y_-$.
Combining this with $x(t)\geq y_-(t)$ for $t\geq t_0$, we obtain $\liminf_{t\to \infty} x(t) \geq \lim_{t\to \infty} y_-(t) = -\sqrt{2\delta}$, which implies $\liminf_{t\to \infty} x(t)\geq 0$ thanks to the arbitrary choice of $\delta$. This is a contradiction.

  When $h(\infty)<0$, if $\liminf_{t\to \infty} x(t) < \sqrt{-2h(\infty)}$, then there exists a sufficiently small positive $\delta$, such that $h(\infty)+\delta<0$ and $\liminf_{t\to \infty} x(t)<\sqrt{-2h(\infty) - 2\delta}$. We can find $t_0$ such that $x(t_0)< \sqrt{-2 h(\infty) - 2\delta}$ and $h(\infty)-\delta \leq h(t)\leq h(\infty)+ \delta$ for all $t\geq t_0$. Consider $y_+$ which satisfies $\dot y_+ = y_+^2/2 + h(\infty) + \delta$ and $y_+(t_0) = x(t_0)$. Note that $\sqrt{-2h(\infty) - 2\delta}$ is the unstable equilibrium of $y_+$, we then have from the comparison theorem that
  $
   -\sqrt{-2h(\infty) + 2\delta} = \lim_{t\to \infty} y_-(t) \leq \liminf_{t\to \infty} x(t) \leq \limsup_{t\to \infty} x(t) \leq \lim_{t\to \infty} y_+(t) = -\sqrt{-2 h(\infty) -2\delta}.
  $
  Since $\delta$ can be made arbitrarily small, we conclude from previous inequalities that $\lim_{t\to \infty} x(t) = -\sqrt{-2h(\infty)}$.
\end{proof}

\begin{lem}\label{lem:bdd-traj2} Let us consider a scalar ODE $\dot x(t) = (ae^{-t}-1)x(t) + g(t)$ with $g \in C^1$ and $g(\infty) := \lim_{t\rightarrow \infty} g(t) \in \mathbb{R}$. Then, $x(\infty) := \lim_{t \rightarrow \infty} x(t) \in \mathbb{R}$ and $x(\infty) = g(\infty)$.
\end{lem}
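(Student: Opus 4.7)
The plan is to solve the linear ODE explicitly by the integrating-factor method and then pass to the limit using L'Hopital's rule in its infinite-denominator form. The ODE rewrites as
\[
\dot x + (1 - ae^{-t})\, x = g(t),
\]
so the natural integrating factor is
\[
\mu(t) := \exp\!\pare{\int_0^t (1-ae^{-s})\, ds} = e^{t}\,\exp\!\pare{a(e^{-t}-1)},
\]
which is strictly positive, smooth, and satisfies $\mu(0) = 1$. Multiplying through and integrating, one obtains the closed-form representation
\[
x(t) \;=\; \frac{1}{\mu(t)}\left[ x(0) + \int_0^t \mu(s)\, g(s)\, ds\right] \;=\; e^{-t}\exp(-a(e^{-t}-1))\left[ x(0) + \int_0^t e^{s}\exp(a(e^{-s}-1))\, g(s)\, ds\right].
\]

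Next I would analyze the large-$t$ behavior term by term. The factor $\exp(-a(e^{-t}-1))$ converges to $e^{a}$, so the contribution from $x(0)$ is $x(0)\, e^{-t}\,\exp(-a(e^{-t}-1)) \to 0$, and it remains to identify
\[
\lim_{t\to\infty} \frac{\int_0^t e^{s}\exp(a(e^{-s}-1))\, g(s)\, ds}{e^{t}}.
\]
The denominator $e^t$ tends to $+\infty$, so I can invoke the infinite-denominator version of L'Hopital's rule (which does not require the numerator to also diverge, only that the derivative ratio admit a limit). Differentiating numerator and denominator yields
\[
\lim_{t\to\infty} \exp(a(e^{-t}-1))\, g(t) \;=\; e^{-a}\, g(\infty).
\]
Combining this with the $e^{a}$ prefactor from the earlier step produces $x(\infty) = e^{a} \cdot e^{-a}\, g(\infty) = g(\infty)$, as desired.

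There is essentially no deep obstacle: the argument is a one-line integrating-factor computation followed by a standard L'Hopital pass. The only minor subtleties are (i) keeping the signs correct inside the iterated exponentials $\exp(\pm a(e^{-t}-1))$, and (ii) noting that one must use the infinite-denominator form of L'Hopital (since when $g(\infty)=0$ the numerator need not tend to infinity; this variant of L'Hopital requires only $e^t\to\infty$). The $C^1$ regularity of $g$ is not actually needed for this argument; local integrability suffices, which is automatic from continuity.
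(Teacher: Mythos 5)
Your proof is correct and starts from the same integrating-factor representation that the paper uses, but it closes the argument differently. The paper evaluates $\lim_{t\to\infty}\exp(-ae^{-t}-t)\int_0^t\exp(ae^{-s}+s)g(s)\,ds$ by a direct $\epsilon$-estimate: split the integral at a large time $T$ where $|g-g(\infty)|\le\epsilon$ and $|1-e^{ae^{-T}}|\le\epsilon$, bound each piece, and let $T\to\infty$. You instead recognize the quotient as a $\ast/\infty$ form and apply the infinite-denominator version of L'Hopital's rule, collapsing the whole estimate into one differentiation. This is cleaner and shorter, and your invocation of L'Hopital is legitimate: the theorem you cite (as in Rudin, Theorem 5.13) requires only that the denominator $e^t\to\infty$, not that the numerator diverge, which is precisely the point you flag for the case $g(\infty)=0$. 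One small imprecision: you say local integrability of $g$ suffices, but to apply L'Hopital in the standard form you actually need $f(t)=\int_0^t e^s\exp(a(e^{-s}-1))g(s)\,ds$ to be differentiable with $f'(t)=e^t\exp(a(e^{-t}-1))g(t)$, which requires $g$ continuous (true here since $g\in C^1$); the paper's $\epsilon$-argument would likewise degrade gracefully to a.e.\ hypotheses, so this is a wash. Net: same explicit solution formula, different limit-passing step, and yours is arguably the more economical of the two.
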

\begin{proof} Consider a new function $y(t) = \exp\left(ae^{-t}+t\right)x(t)$. Then, $y$ satisfies
$
\dot y(t) = \exp\left(a e^{-t} + t\right)g(t)
$
and consequently,
$$
x(t) = \exp\left( a - ae^{-t}-t\right)x(0) + \exp\left(-ae^{-t}-t\right) \int_0^t \exp\left(ae^{-s}+s\right)g(s)ds.
$$

Choose an arbitrary $\epsilon > 0$. Then, we can find a large $T = T(\epsilon) > 0$ such that $|g(t) - g(\infty)| \leq \epsilon$ for all $t \geq T$ and $\left| 1 - e^{ae^{-T}} \right| \leq \epsilon$. Next, we compute for $t > T$
\begin{eqnarray*}
\lefteqn{ e^{-t} \int_0^t e^{ae^{-s}+s}g(s)ds } && \\
&=& e^{-t} \int_0^T e^{ae^{-s}+s}g(s)ds + e^{-t}\int_T^t e^{ae^{-s}+s}g(\infty)ds  + e^{-t} \int_T^t e^{ae^{-s}+s}\left(g(s) - g(\infty)\right)ds.
\end{eqnarray*}
The last term is bounded by
$$
\left| e^{-t} \int_T^t e^{ae^{-s}+s}\left(g(s) - g(\infty)\right)ds \right| \leq \epsilon \max\left(e^{ae^{-T}},1\right)\left(1-e^{T-t}\right) \leq \epsilon(1+\epsilon)\left(1 - e^{T-t}\right),
$$
using $0 \leq e^{-s} \leq e^{-T}$. For the second term, we obtain
$
e^{ae^{-T}}\left(1 - e^{T-t}\right) \leq e^{-t}\int_T^t e^{ae^{-s}+s} ds \leq \left(1 - e^{T-t}\right)
$
if $a \leq 0$ (inequalities are reversed if $a \geq 0$). Therefore, by the assumption on $T$,
$$
\left| g(\infty) e^{-t}\int_T^t e^{ae^{-s}+s} ds - g(\infty)\left(1 - e^{T-t}\right)\right| \leq \epsilon |g(\infty)| \left( 1 - e^{T-t}\right).
$$
These calculations yield
$
\lim_{t \rightarrow \infty} e^{-t} \int_0^t e^{ae^{-s}+s}g(s)ds = g(\infty) + c
$
with $|c| \leq \epsilon\left( |g(\infty)| + 1+\epsilon\right)$.
Since $\epsilon$ is arbitrary, we can conclude that $\lim_{t\rightarrow \infty} x(t) = g(\infty)$.
\end{proof}

The following definition and property can be found in Chapter 6 of \cite{Berman-Plemmons}.
\begin{defn}\label{def: M-matrix}
 A square matrix is called a nonsingular $M$-matrix if it has non-positive off-diagonals and every real eigenvalue is positive.
\end{defn}
If $M$ is a nonsingular M-matrix, then $M^{-1}_{ij}\geq 0$ for all $i,j$.
Before we present the next result, we recall the following extension on the comparison theorem for scalar ODEs. A function $f:\Real^m \rightarrow \Real^m$ is \emph{quasi-monotone increasing}, if $f_k(x)\leq f_k(y)$ for any $x, y$ such that $x_k=y_k$ for some $k$ and $x_j\leq y_j$ for any $j\neq k$. Suppose that $f$ is quasi-monotone increasing and locally Lipschitz, then for any two differentiable functions $x, y: \Real_+ \rightarrow \Real^m$ with $x(a) \leq y(a)$,
\[
 \dot x(t)-f(x(t)) \leq \dot y(t)- f(y(t)), \quad \forall t\geq 0 \quad \implies \quad x(t)\leq y(t), \quad \forall t\geq 0.
\]
A proof can be found in \cite{Volkmann}.

\begin{lem}\label{lem: comparison}
 Let $\delta\in(0,1)$ and $M$ be a nonsingular M-matrix. Consider two solutions $x(t)$ and $y(t)$ of
 $
  \dot x = (1/2) \, x_\I^{(2)} - Mx
 $
 with $x(0)=\delta y(0)$ and $\I \subset \set{1, \ldots, m}$. Then $x(t)\leq \delta y(t)$ whenever they exist.
\end{lem}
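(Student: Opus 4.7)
The plan is to reduce the inequality $x(t) \leq \delta y(t)$ to a direct application of the quasi-monotone comparison theorem for systems of ODEs quoted just above the lemma.

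First I would introduce $z(t) := \delta y(t)$ and compute its dynamics. Using $(\delta y_i)^2 = \delta^2 y_i^2$ we get
\[
\dot z \;=\; \delta\Bigl(\tfrac12 y_\I^{(2)} - My\Bigr) \;=\; \tfrac12 z_\I^{(2)} - Mz \;+\; \tfrac{\delta(1-\delta)}{2}\,y_\I^{(2)}.
\]
Writing $f(\xi) := \tfrac12 \xi_\I^{(2)} - M\xi$ and $h(t) := \tfrac{\delta(1-\delta)}{2}\,y(t)_\I^{(2)}$, this reads $\dot z = f(z) + h(t)$ with $h(t) \geq 0$ since $\delta \in (0,1)$. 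The function $x$ satisfies $\dot x = f(x)$, so trivially
\[
\dot x - f(x) \;=\; 0 \;\leq\; h(t) \;=\; \dot z - f(z),
\]
and the initial values match: $x(0) = \delta y(0) = z(0)$.

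Next I would verify that $f$ is quasi-monotone increasing and locally Lipschitz, so that the comparison theorem stated just before the lemma applies. Locally Lipschitz is clear because $f$ is a smooth polynomial. For quasi-monotonicity, fix $k$ and suppose $\xi_k = \eta_k$ and $\xi_j \leq \eta_j$ for $j\neq k$. Then
\[
f_k(\xi) - f_k(\eta) \;=\; -\sum_{j\neq k} M_{kj}(\xi_j - \eta_j),
\]
since the quadratic term $\tfrac12 \xi_k^2 \indic_{k\in\I}$ and the $M_{kk}\xi_k$ term only depend on the $k$-th coordinate, which coincides for $\xi$ and $\eta$. Because $M$ is a nonsingular M-matrix, $M_{kj} \leq 0$ for $j\neq k$, so $-M_{kj} \geq 0$, and the right-hand side is $\leq 0$. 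Hence $f$ is quasi-monotone increasing.

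Applying the comparison theorem on any interval where both solutions exist yields $x(t) \leq z(t) = \delta y(t)$, which is the claim. There is no real obstacle here; the only subtle point is noticing that scaling by $\delta \in (0,1)$ turns $\delta y$ into a super-solution of the equation for $x$ because the quadratic nonlinearity scales like $\delta^2$ while the linear part scales like $\delta$, producing the non-negative perturbation $h(t)$ that powers the comparison.
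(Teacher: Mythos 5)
Your proof is correct and takes essentially the same route as the paper: both verify quasi-monotonicity of $f$ from $M_{kj}\leq 0$ and then invoke the quoted comparison theorem after a one-parameter rescaling, the only (cosmetic) difference being that the paper compares $x/\delta$ against $y$ while you compare $x$ against $\delta y$ — the dual substitution, exploiting the same fact that the quadratic part scales like $\delta^2$ and the linear part like $\delta$.
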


\begin{proof}
 We define $f$ via $f(x):= (1/2)x_\I^{(2)} - Mx$. It is clearly locally Lipschitz as well as quasi-monotone increasing. Indeed, for any $x\leq y$ with $x_k=y_k$ for some $k$ and $x_j\leq y_j$ for all $j\neq k$, we have $f_k(x) = (1/2) x_k^2 \indic_{k\in \I} - \sum_{k=1}^m M_{kj} x_j \leq (1/2) x_k^2 \indic_{k\in \I} - M_{kk} x_k -\sum_{j\neq k} M_{kj} y_j = f_k(y)$, thanks to $M_{kj}\leq 0$ for $j\neq k$.

 Now consider $z(t) := x(t)/ \delta$ which satisfies
 $
\dot z = (\delta/2) z_\I^{(2)} - Mz
$ with $z(0) = y(0)$.
Then, we have
\[
\dot z(t) - f(z(t)) = \frac{\delta-1}{2} z_\I^{(2)} \leq 0 = \dot y(t) - f(y(t)).
\]
We conclude $z(t) \leq y(t)$ from the above comparison theorem.
\end{proof}

\end{appendix}

\bibliographystyle{plainnat}
\bibliography{biblio}
\end{document}